\documentclass[11pt]{article}
\usepackage{multirow}
\usepackage{amssymb}
\usepackage{amsmath}
\usepackage{multirow}
\usepackage{color}
\usepackage{booktabs}
\usepackage{subcaption}
\usepackage{mathtools}
\usepackage[top=3cm,bottom=3cm,left=3cm,right=3cm]{geometry}
\usepackage[hidelinks]{hyperref}
\usepackage{setspace,bm}
\numberwithin{equation}{section}

\newtheorem{theorem}{Theorem}

\newtheorem{corollary}[theorem]{Corollary}

\newtheorem{lemma}[theorem]{Lemma}

\newtheorem{proposition}[theorem]{Proposition}
\newtheorem{remark}[theorem]{Remark}
\newenvironment{proof}[1][Proof]{\textbf{#1.} }{\ \rule{0.5em}{0.5em}}

\usepackage{graphicx}
\allowdisplaybreaks[3]

\title{A Damage-Driven Model for Duchenne Muscular Dystrophy: Early-Stage Dynamics and Invasion Thresholds}
\author{G. Gambino$^{1}$, F. Gargano$^{2}$, A. Rizzo$^{1}$, V. Sciacca$^{1}$}
\date{}
\begin{document}
	\maketitle
	\let\thefootnote\relax\footnotetext{This is a preprint version of the manuscript}
\begin{center}
	\small
$^{1}$Department of Mathematics, University of Palermo, Via Archirafi 34, Palermo, 90123, Italy\\
$^{2}$Department of Engineering, University of Palermo, Viale delle Scienze, Ed. 8, Palermo, 90128, Italy
\end{center}

\begin{abstract}
We introduce a spatially extended mathematical model for Duchenne muscular dystrophy based on a damage-driven paradigm, in which immune recruitment is triggered by tissue injury. The model is formulated as a reaction--diffusion--chemotaxis system describing the interaction between healthy tissue, damaged fibers, immune cells and inflammatory signals.
We establish the global well-posedness of the system and investigate the early-stage dynamics through linearization around the healthy equilibrium. Our analysis shows that diffusion does not induce Turing instabilities, so that spatial heterogeneity cannot arise from diffusion-driven mechanisms. Instead, disease progression occurs through invasion processes.
We derive explicit conditions for the onset of invasion, interpreted as an effective damage reproduction threshold and characterize the minimal propagation speed of pathological fronts, showing that the dynamics is governed by a pulled-front mechanism. Numerical simulations support the analytical results and confirm the transition between decay and invasion.
These results provide a mathematical framework for early-stage disease progression and indicate that spatial spreading arise from the expansion of localized damage rather than from intrinsic pattern-forming mechanisms.
\end{abstract}
{\bf KEYWORDS}: reaction–diffusion–chemotaxis system; traveling fronts; invasion thresholds; immune–tissue interaction; spatial disease dynamics.

\section{Introduction}

Duchenne muscular dystrophy (DMD) is a severe X-linked genetic disorder characterized by progressive muscle degeneration and loss of functional tissue \cite{Duan2021}. The absence of dystrophin leads to increased membrane fragility, causing recurrent micro-damage of muscle fibers. These lesions trigger inflammatory signaling and immune cell recruitment \cite{Porter03, Porter02, Spencer01}, giving rise to a damage--inflammation feedback loop that plays a central role in disease progression \cite{Dowling23}.

Experimental and clinical evidence indicates that immune activity in DMD is predominantly reactive: inflammatory responses are initiated by tissue damage and tend to subside when damage is resolved. In the pathological context, however, persistent injury prevents full recovery, leading to repeated cycles of degeneration and incomplete repair, with progressive loss of functional tissue \cite{Elasbali24, Zweyer2022}.

Inflammatory signaling can arise even before macroscopic degeneration becomes evident. Elevated levels of chemokines and inflammatory mediators have been detected in apparently healthy muscle regions of DMD patients \cite{Ogu2021} and early expression of chemotactic signals has been reported during the initial stages of muscle injury \cite{has02}. These observations suggest that early damage and immune activation may precede visible tissue degeneration, motivating the study of early-stage dynamical mechanisms.

Mathematical modeling has been used to investigate immune--tissue interactions in muscular dystrophies. More generally, spatially structured models have been increasingly employed to study the interplay between growth, spatial heterogeneity and biological interactions in complex systems (see, e.g. \cite{Bellomo2008, Lorenzi2015}). Early approaches, including predator--prey-type systems, describe immune cells as proliferating through direct interaction with healthy tissue \cite{DellAcqua09}. While such models capture certain qualitative features, they rely on assumptions that are not fully consistent with experimental observations in DMD. In particular, immune recruitment is primarily triggered by tissue damage and the associated release of inflammatory signals, rather than by direct interaction with intact muscle fibers. Moreover, inflammatory activity tends to decline when damage is resolved, indicating that immune dynamics are not self-sustained in the absence of injury. These features are difficult to reconcile with classical predator--prey formulations, where immune populations grow through direct consumption of healthy tissue and may persist even in the absence of tissue injury.

A step toward greater biological realism was provided by Jarrah et al.~\cite{Jarrah14}, who developed a compartmental ODE model of immune--muscle interactions in the mdx mouse, explicitly incorporating necrotic fibers, macrophage subpopulations and regeneration. In this framework, immune recruitment is triggered by tissue damage rather than direct interaction with healthy fibers. Subsequent models, such as the FRiND model \cite{Hous19}, further refined this damage-mediated perspective by including detailed recruitment mechanisms and macrophage plasticity.

Despite these advances, most existing models are formulated as systems of ordinary differential equations and focus primarily on temporal dynamics \cite{Hous18}. As a consequence, they do not address how pathological processes propagate across muscle tissue. Spatial aspects of disease progression, including the emergence and spread of damaged regions, remain largely unexplored from a mathematical perspective, despite imaging evidence revealing heterogeneous patterns of degeneration and progressive invasion of affected regions observed in magnetic resonance studies \cite{Se20,So26}.

Spatial propagation phenomena in reaction--diffusion systems have been extensively studied since the classical works of Fisher and Kolmogorov--Petrovsky--Piskunov \cite{Fisher37,KPP} and further developed in the context of biological invasions and traveling waves \cite{vanSaarloos2003,ShigesadaKawasaki1997}. In this work, we develop a mathematical framework for DMD that combines damage-driven immune dynamics with spatial effects. Inspired by existing ODE-based models, we introduce a reaction system incorporating tissue capacity constraints, damage-induced regeneration and saturating immune activation. These mechanisms reflect fundamental biological features of muscle repair and inflammatory regulation \cite{Sun35, Bing15}.

Most existing mathematical models for DMD focus on temporal dynamics and do not explicitly account for spatial propagation of damage. To address this limitation, we develop a spatially extended damage-driven framework that enables the analytical study of invasion thresholds and propagation speeds in the early stage of disease progression.

Within this spatial setting, diffusion accounts for local spreading processes, while directed immune cell migration can be described through chemotactic mechanisms. Such mechanisms have been widely modeled in the context of cell migration and pattern formation, starting from the classical Keller--Segel framework and its extensions \cite{HillenPainter,Perthame2006} and have been incorporated in related models of immune-mediated diseases \cite{Lombardo2017}. Accordingly, we incorporate a chemotactic term describing directed migration of immune cells toward inflammatory signals, reflecting the well-established role of chemokine-guided recruitment in damaged muscle tissue.

At the same time, we emphasize that the present work focuses on the early stage of disease progression, where damage and inflammatory activity remain small. In this regime, chemotactic effects enter only at higher order and do not play a leading role in the onset of invasion. Early spatial propagation is therefore primarily driven by local damage--inflammation feedback mechanisms rather than by directed immune migration. The inclusion of chemotaxis nevertheless provides a structurally consistent framework that can capture more complex spatial behaviors in regimes with stronger inflammation, which will be investigated in future work.

The resulting model is formulated as a reaction--diffusion--chemotaxis system describing the spatial evolution of healthy tissue, damaged fibers, immune cells and inflammatory signals. We first establish basic analytical properties of the system, including positivity, boundedness, and positive invariance of a biologically admissible domain, ensuring the well-posedness and physical consistency of the model.

After nondimensionalization, we characterize the steady states of the system and identify a healthy equilibrium together with parameter-dependent pathological equilibria, when they exist: a healthy state, corresponding to the absence of damage and inflammation, and a diseased state, corresponding to a pathological regime with persistent damage and inflammation. While the healthy equilibrium admits an explicit expression, the diseased equilibrium is determined implicitly through a nonlinear algebraic condition, reflecting the complexity of the underlying feedback mechanisms. While the healthy equilibrium admits an explicit expression, the diseased equilibrium is determined implicitly through a nonlinear algebraic condition. A partial analytical characterization can be obtained in a parameter regime consistent with the early-stage scenario, while a complete analysis remains challenging due to the nonlinear coupling of the system.

The early-stage dynamics of disease progression are investigated through linearization around the healthy equilibrium. We show that diffusion does not induce pattern-forming instabilities of Turing type, indicating that spatial heterogeneity cannot arise through diffusion-driven mechanisms. Instead, disease progression occurs through invasion processes, in which localized damage spreads across the tissue.

We derive analytical conditions for the onset of invasion and characterize the minimal propagation speed of pathological fronts. These theoretical predictions are further supported by numerical simulations of the full nonlinear system. Our results provide a quantitative description of early disease spreading and clarify the mechanisms governing spatial progression in DMD. In particular, we identify an invasion threshold determining whether localized damage can grow from small perturbations of the healthy state and characterize the minimal propagation speed of the resulting pathological fronts. Although the reaction dynamics may exhibit multiple equilibria in certain parameter regimes, we show that, in a biologically relevant regime consistent with early-stage dynamics, the pathological equilibrium can be characterized more precisely. The onset of spatial spreading considered here is governed by a pulled invasion mechanism, controlled by the linearization around the healthy equilibrium.
\paragraph{Organization of the paper.}
The paper is organized as follows. 
In section~2, we introduce the dimensional and nondimensionalized models and discuss their biological interpretation. 
Section~3 is devoted to the analysis of the spatially homogeneous system, where we establish positivity, invariant regions and characterize the equilibria together with the stability of the healthy state. 
In section~4, we study the well-posedness of the full reaction--diffusion--chemotaxis system, proving global existence, uniqueness and boundedness of solutions within a biologically admissible domain. 
Section~5 focuses on the early-stage spatial dynamics. By linearizing the system around the healthy equilibrium, we show the absence of diffusion-driven instabilities and demonstrate that spatial progression occurs through invasion processes. We derive analytical conditions for the invasion threshold and characterize the minimal propagation speed of pathological fronts. 
In section~6, we present numerical simulations of the full nonlinear system, validating the analytical predictions and illustrating the invasion dynamics in one-dimensional domains. 
Finally, section~7 discusses the biological implications of our findings and outlines possible directions for future research.

\section{The model}

In this section, we present the model and discuss its interpretation. 
We consider four state variables describing the main processes involved in DMD: healthy tissue $H$, damaged tissue $D$, macrophages $M$ and chemokines $C$. 
This minimal set is chosen to capture the damage-driven and inflammation-mediated nature of the disease, focusing on the feedback between tissue injury and immune response while maintaining analytical tractability.

The model adopts a damage-driven perspective, where immune activation is initiated by tissue injury rather than by direct interaction with healthy fibers. 
This approach differs from earlier predator--prey-type models, where immune populations grow through direct interaction with intact tissue \cite{DellAcqua09}, as well as from more detailed models that distinguish multiple immune subpopulations with different functional roles \cite{Hous19, Jarrah14}. In contrast, we adopt an effective description in which regeneration processes and immune activity are represented through aggregated variables, without introducing additional compartments.
Biologically, muscle repair involves several intermediate stages, including satellite cell activation, differentiation and fusion, which typically occur on timescales faster than the macroscopic progression of degeneration considered in this work. 
These processes can therefore be incorporated into an effective regeneration mechanism without introducing additional state variables.

Similarly, the immune response is described through a single macrophage population rather than distinct subtypes. 
While macrophage polarization plays an important biological role, the dominant effect at the tissue scale is the net inflammatory activity associated with recruited immune cells. 
Representing this activity through a single effective population captures the leading-order dynamics, while avoiding an unnecessary increase in dimensionality that would complicate the analytical investigation of the system.

We therefore propose the following system:
\begin{equation}
	\label{eq:model}
	\begin{cases}
		\displaystyle
		\partial_\tau H =
		D_H \Delta H
		+ \left( s + r_H D \right)\left(1 - \frac{H + D}{K}\right)
		- \mu_H H
		- a(C) M H,
		\\[1.1em]
		\displaystyle
		\partial_\tau D =
		D_D \Delta D
		+ a(C) M H
		- \mu_D D
		- r_H D \left(1 - \frac{H + D}{K}\right),
		\\[1.1em]
		\displaystyle
		\partial_\tau M =
		D_M \Delta M
		- \nabla \!\cdot\! \left( \chi(C) M \nabla C \right)
		+ r_M D
		- \mu_M M,
		\\[1.1em]
		\displaystyle
		\partial_\tau C =
		D_C \Delta C
		+ r_C D
		- \mu_C C,
	\end{cases}
\end{equation}
where:
\begin{equation}
	a(C)= \bar{a} \dfrac{C+C_\epsilon}{k_c+C}, \qquad   \chi(C)= \bar{\chi} \dfrac{C}{k_\chi+C}
\end{equation}
and all the parameters are positive. A detailed description of the role of each parameter is given in Table~\ref{table:parameters}.

\begin{table}[t]
	\centering
	\begin{tabular}{@{}lll@{}}
		\toprule
		\textbf{Parameter} & \textbf{Description} & \textbf{Biological Role} \\ \midrule
		$D_H, D_D$ & Tissue diffusion coefficients & Spatial expansion of muscle fibers \\
		$D_M, D_C$ & Motility coefficients & Diffusion of immune and signaling cells \\
		$s$ & Basal growth rate & Intrinsic maintenance of healthy tissue \\
		$r_H$ & Repair rate & Conversion rate of damaged fibers back to healthy ones \\
		$K$ & Carrying capacity & Maximum density of the tissue\\
		$\mu_H, \mu_D$ & Natural turnover rates & Removal rates of healthy and damaged tissue\\ 
		$\bar{a}$ & Maximum damage rate & Peak aggressiveness of macrophages under inflammation \\
		$C_\epsilon$ & Basal activation & Residual damage potential at minimal 
		chemokines levels \\
		$k_c$ & Half-saturation (damage) & 
		Chemokines concentration for half-maximal fiber attack. \\
		$r_M$ & Macrophage recruitment & Rate of immune cell infiltration triggered by damage. \\
		$\mu_M$ & Macrophage decay & Apoptosis or clearance of inflammatory cells. \\ 
		$\bar{\chi}$ & Chemotactic sensitivity & Strength of directed macrophage migration toward $C$. \\
		$k_\chi$ & Chemotactic saturation & Sensitivity threshold for the chemical gradient. \\
		$r_C$ & 
		Chemokines production & Secretion rate of signals by damaged tissue. \\
		$\mu_C$ & 
		Chemokines degradation &Apoptosis of
		chemokines. \\ \bottomrule
	\end{tabular}
	\caption{Description of model parameters and their biological interpretation.}
	\label{table:parameters}
\end{table}
The model incorporates a constraint on the total functional tissue through the factor:
\[
1-\frac{H+D}{K},
\]
which represents competition for limited tissue capacity. 
This does not imply conservation of total tissue mass. In the pathological context of DMD, necrosis, immune-mediated clearance and replacement by adipose or fibrotic tissue lead to an irreversible loss of functional muscle. Accordingly, the carrying capacity parameter represents an effective upper bound on the locally sustainable tissue density rather than a conserved quantity.

The term: 
\[
(s+r_H D)\left(1-\frac{H+D}{K}\right)
\]
describes tissue maintenance and regeneration. 
The constant contribution $s$ represents basal turnover, while the component proportional to $D$ models damage-induced repair activation, reflecting the biological observation that moderate injury stimulates regenerative mechanisms. 
The same coefficient $r_H$ appears in the damaged tissue equation, ensuring mass transfer consistency between compartments at the effective modeling level.

Immune-mediated damage is represented by the interaction term $a(C)MH$, where the function $a(C)$ describes the dependence of tissue destruction on the inflammatory state. 
The saturating structure:
\[
a(C)=\bar a \frac{C+C_\epsilon}{k_c+C}
\]
captures two key biological features: the presence of a residual inflammatory activity even at low chemokine levels ($C_\epsilon$) and the saturation of immune aggressiveness at high signaling concentrations due to physiological limits in cellular activation.

The macrophage equation reflects the damage-driven nature of the immune response. 
Recruitment is proportional to the amount of damaged tissue ($r_M D$), consistent with the release of inflammatory mediators from injured fibers, while linear decay represents clearance and apoptosis. 
The chemokine equation follows a similar structure, with production driven by tissue damage and degradation accounting for molecular turnover. 
Together, these equations encode the self-amplifying inflammatory loop characteristic of the disease, in which damage promotes inflammation and inflammation further increases damage.

Spatial effects are introduced through diffusion and chemotaxis. 
Diffusion terms account for local spreading processes, including migration of immune cells, dispersion of signaling molecules and effective spatial redistribution of tissue components associated with remodeling and expansion. 
The diffusion coefficients for tissue variables are assumed to be relatively small (see Section \ref{parameters} for details), reflecting the limited motility of muscle fibers compared to immune cells and signaling molecules, while maintaining sufficient regularity for mathematical well-posedness of the system.

Directed migration of macrophages is modeled through the chemotactic flux:
\[
-\nabla\cdot(\chi(C)M\nabla C),
\]
where the sensitivity function $\chi(C)$ has a saturating form. 
This structure reflects receptor-mediated sensing mechanisms, with increased responsiveness for weak signals and saturation at high chemokine concentrations due to receptor occupancy limits.

To reduce the number of independent parameters and identify the characteristic scales of the system \eqref{eq:model}, we introduce the following set of dimensionless variables:
\begin{align*}
	t = \mu_H \tau, \quad  x= \sqrt{\dfrac{\mu_H}{D_H}} \xi, \quad   h=\dfrac{H}{K}, \quad d=\dfrac{D}{K}, \quad m=\dfrac{\mu_M M}{r_m K}, \quad c=\dfrac{C}{k_c}.
\end{align*}
The temporal scale is chosen according to the characteristic turnover rate of healthy tissue, while the spatial scale is determined by the corresponding diffusion length. Tissue densities are normalized by the carrying capacity and the immune and signaling variables are scaled with respect to their natural production-decay balances.

The resulting system reads as:
\begin{equation}
	\label{eq:dimensionless}
	\begin{cases}
		\displaystyle
		\partial_t h =
		\Delta h
		+ (\sigma + \rho d)(1 - h - d)
		- h
		- \alpha \, \dfrac{c+c_\epsilon}{1 + c} \, m h,
		\\[1em]
		\displaystyle
		\partial_t d =
		D_d \Delta d
		+ \alpha \, \dfrac{c+c_\epsilon}{1 + c} \, m h
		- \delta d
		- \rho d (1 - h - d),
		\\[1em]
		\displaystyle
		\partial_t m =
		D_m \Delta m
		- \chi_0 \nabla \cdot \!\left( \dfrac{c}{\kappa + c} \, m \nabla c \right)
		+ \nu (d
		-  m),
		\\[1em]
		\displaystyle
		\partial_t c =
		D_c \Delta c
		+ r d
		- \mu c,
	\end{cases}
\end{equation}
where:
\begin{align*}
	& D_d = \frac{D_D}{D_H}, \qquad
	D_m = \frac{D_M}{D_H},  \qquad
	D_c = \frac{D_C}{D_H}, \qquad \delta = \frac{\mu_D}{\mu_H}, 
	\qquad
	\nu = \frac{\mu_M}{\mu_H}, 
	\qquad
	\mu = \frac{\mu_C}{\mu_H},\\
	& \sigma = \frac{s}{K \mu_H}, 
	\qquad
	\rho = \frac{r_H}{\mu_H}, \qquad r = \frac{r_C K}{k_c \mu_H}, \qquad \alpha = \frac{\bar a \, r_M K}{\mu_H \mu_M},
	\qquad
	\chi_0 = \frac{\bar \chi \, k_c}{D_H},
	\qquad
	\kappa = \frac{k_\chi}{k_c}.
\end{align*}
The dimensionless parameters represent ratios between competing biological processes and therefore provide insights into the relative importance of the underlying mechanisms. 
In particular, $\sigma$ describes the strength of basal tissue maintenance relative to the natural turnover rate, while $\rho$ quantifies the efficiency of damage-induced regeneration. 
The parameter $\alpha$ measures the effectiveness of immune-mediated damage compared to the characteristic tissue timescale, whereas $\delta$ represents the clearance rate of damaged tissue relative to the same scale. 
The parameters $\nu$ and $\mu$ describe, respectively, the relaxation rates of macrophages and chemokines. 
Finally, $\chi_0$ characterizes the intensity of chemotactic migration relative to diffusion and $\kappa$ represents the chemokine concentration at which chemotactic sensitivity reaches half of its maximal value.

This nondimensional formulation highlights the key feedback mechanisms driving the system while reducing structural complexity, facilitating both analytical investigation and numerical exploration.

\section{Local dynamics of the reaction system}

In this section, we focus on the reaction terms of \eqref{eq:dimensionless}, namely on the
corresponding spatially homogeneous dynamical system:
\begin{equation}
	\label{eq:ode}
	\begin{cases}
		\displaystyle
		\dot{h} =
		(\sigma + \rho d)(1 - h - d)
		- h
		- \alpha \, \dfrac{c+c_\epsilon}{1 + c} \, m h,
		\\[1em]
		\displaystyle
		\dot{d}=
		\alpha \, \dfrac{c+c_\epsilon}{1 + c} \, m h
		- \delta d
		- \rho d (1 - h - d),
		\\[1em]
		\displaystyle
		\dot{m} =
		\nu (d - m),
		\\[0.5em]
		\displaystyle
		\dot{c} = r d - \mu c.
	\end{cases}
\end{equation}

\subsection{Positive invariance of the physical domain}

We first show that the spatially homogeneous system preserves the biologically admissible state space. In particular, healthy and damaged tissue remain nonnegative and cannot exceed the total available tissue fraction, while macrophages and chemokines remain nonnegative.

\begin{theorem}
	Consider system \eqref{eq:ode}. Then the set:
\begin{equation}\label{inv_set}
	\mathcal{D}=\{(h,d,m,c)\in\mathbb{R}^4:\ h\ge 0,\ d\ge 0,\ m\ge 0,\ c\ge 0,\ h+d\le 1\}
	\end{equation}
	is positively invariant.
    \label{theorem:invariant_region}
\end{theorem}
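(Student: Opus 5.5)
We will use the standard tangency (Nagumo) criterion for closed convex sets. The right-hand side $F$ of \eqref{eq:ode} is locally Lipschitz on a neighbourhood of $\mathcal{D}$: the only non-polynomial factor is $(c+c_\epsilon)/(1+c)$, which is smooth for $c>-1$. Hence solutions exist locally and are unique. The set $\mathcal{D}$ is a closed convex polyhedron, the intersection of the five half-spaces $h\ge0$, $d\ge0$, $m\ge0$, $c\ge0$, $1-h-d\ge0$. By Nagumo's theorem, or equivalently by a quasi-positivity argument, it therefore suffices to check that on each face the component of $F$ along the corresponding inward normal is nonnegative whenever the other constraints hold.

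The face checks are as follows.

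\begin{itemize}
\item On $h=0$ with $d\ge0$, $h+d\le1$, $c\ge0$: $\dot h=(\sigma+\rho d)(1-d)\ge0$, since $\sigma,\rho,d\ge0$ and $d\le1$.
\item On $d=0$: $\dot d=\alpha\frac{c+c_\epsilon}{1+c}mh\ge0$, since $c,m,h\ge0$.
\item On $m=0$: $\dot m=\nu d\ge0$.
\item On $c=0$: $\dot c=rd\ge0$.
\item On the face $h+d=1$ we add the first two equations. The immune-damage terms cancel exactly, and so do the terms $\pm\rho d(1-h-d)$, giving
\[
\frac{d}{dt}(h+d)=(\sigma+\rho d)(1-h-d)-h-\delta d-\rho d(1-h-d)=\sigma(1-h-d)-h-\delta d .
\]
On $h+d=1$ this equals $-h-\delta d\le0$. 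So $1-h-d$ is nondecreasing there.
\end{itemize}

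The main subtlety is that these inequalities are only weak; for example, $\dot h=0$ at the corner $h=0$, $d=1$. A naive ``first exit time'' argument therefore needs care. The plan is to avoid this by writing each constrained quantity in linear form and applying Gronwall's lemma.

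\begin{itemize}
\item For $u=1-h-d$ the identity above gives $\dot u=-\sigma u+h+\delta d$. Equivalently, $\dot u=-(\sigma+1)u+(1-d)+\delta d$, which has a nonnegative forcing term as long as $0\le d\le1$.
\item For $h$: $\dot h=-\bigl(1+\alpha\frac{c+c_\epsilon}{1+c}m\bigr)h+(\sigma+\rho d)u$.
\item For $d$: $\dot d=-(\delta+\rho u)d+\alpha\frac{c+c_\epsilon}{1+c}mh$.
\item For $m$ and $c$: $\dot m=-\nu m+\nu d$ and $\dot c=-\mu c+rd$.
\end{itemize}

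In every case the equation has the form $\dot y=-a(t)y+b(t)$ with $b\ge0$ whenever the other variables lie in $\mathcal{D}$. The variation-of-constants formula then gives $y\ge0$.

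To close the circular dependence rigorously, we will perturb. Replace $F$ by $F+\varepsilon(1,1,1,1)$ in the first four components and subtract $2\varepsilon$ in the $u$-direction. This makes all boundary inequalities strict, so the first-exit-time argument applies directly. We then let $\varepsilon\to0$ using continuous dependence on the vector field, valid because $F$ is locally Lipschitz. Alternatively, we can simply cite Nagumo's theorem for closed sets with a Lipschitz field, which needs only the weak inequalities.

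Finally, $\mathcal{D}$ is compact in the $(h,d)$ variables. Moreover, $m$ and $c$ are bounded on finite intervals by the linear equations with bounded forcing $d\le1$. Hence solutions starting in $\mathcal{D}$ do not blow up, they exist for all $t\ge0$, and they remain in $\mathcal{D}$.
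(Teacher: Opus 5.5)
Your proof is correct and follows the paper's route: the face-by-face check of the vector field on $\mathcal D$ is exactly the paper's argument, and you make it rigorous by invoking Nagumo's tangency theorem for the locally Lipschitz field, which settles the weak-inequality/corner issue the paper leaves implicit. Your identity $\frac{d}{dt}(h+d)=\sigma(1-h-d)-h-\delta d$ is the correct one (the paper's version carries a spurious $\rho d(1-T)$ term that happens to vanish on $T=1$). One small fix to the optional perturbation variant: adding $\varepsilon(1,1,1,1)$ pushes $\frac{d}{dt}(1-h-d)$ outward by $2\varepsilon$, so it works only because $\frac{d}{dt}(1-h-d)=h+\delta d\ge\min(1,\delta)>0$ already holds strictly on the face $h+d=1$ --- state this explicitly, or simply rely on the Nagumo citation.
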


\begin{proof}
	We verify that the vector field is inward pointing or tangent on each boundary component of $\mathcal{D}$.
	
	\emph{(i) Boundary $h=0$.} In this case,
	\[
	\dot h = (\sigma+\rho d)(1-d)\ge 0,
	\]
	since $d\in[0,1]$ on $\mathcal{D}$. Hence the vector field points inward or is tangent
	to the boundary $h=0$.
	
	\emph{(ii) Boundary $d=0$.} In this case,
	\[
	\dot d=\alpha \frac{c+c_\epsilon}{1+c}mh \ge 0,
	\]
	so the vector field points inward or is tangent to the boundary $d=0$.
	
	\emph{(iii) Boundary $m=0$.} In this case,
	\[
	\dot m=\nu d \ge 0,
	\]
	hence trajectories cannot cross the boundary $m=0$ toward negative values.
	
	\emph{(iv) Boundary $c=0$.} In this case,
	\[
	\dot c = rd \ge 0,
	\]
	so trajectories cannot cross the boundary $c=0$ toward negative values.
	
	\emph{(v) Boundary $h+d=1$.} Let $T=h+d$. Summing the first two equations in \eqref{eq:ode} yields:
	\[
	\dot T = (\sigma+\rho d)(1-T)-h-\delta d.
	\]
	Therefore, on the boundary $T=1$,
	\[
	\dot T|_{T=1}=-h-\delta d\le 0.
	\]
	Thus trajectories cannot cross the boundary $h+d=1$ outward.
	
	Since the vector field points inward or is tangent on every component of the boundary of
	$\mathcal{D}$, the set $\mathcal{D}$ is positively invariant.
\end{proof}

\begin{corollary}
	For any solution with initial condition in $\mathcal{D}$, one has:
	\[
	 h(t)\ge 0,\quad d(t)\ge 0,\quad m(t)\ge 0,\quad c(t)\ge 0\quad h(t)+d(t)\le 1,
	\qquad {\rm for\ all\ } t\ge 0.
	\]
\end{corollary}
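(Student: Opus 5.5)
The corollary should follow directly from Theorem~\ref{theorem:invariant_region}, combined with standard existence theory for \eqref{eq:ode}.

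First, the vector field of \eqref{eq:ode} is locally Lipschitz on a neighbourhood of $\mathcal{D}$. The only non-polynomial factor is $(c+c_\epsilon)/(1+c)$, which is smooth for $c>-1$. So for any initial datum in $\mathcal{D}$ there is a unique local solution on a maximal interval $[0,T^*)$. By Theorem~\ref{theorem:invariant_region}, the trajectory stays in $\mathcal{D}$ for all $t\in[0,T^*)$, which gives the five inequalities on that interval.

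It remains to show $T^*=\infty$, i.e.\ that no blow-up occurs. On $\mathcal{D}$ the variables $h$ and $d$ lie in $[0,1]$. The equations for $m$ and $c$ are then linear with bounded forcing:
\[
\dot m=\nu(d-m)\le \nu(1-m), \qquad \dot c=rd-\mu c\le r-\mu c.
\]
A comparison argument gives
\[
m(t)\le\max\{m(0),1\}, \qquad c(t)\le\max\{c(0),r/\mu\}.
\]
Hence the solution remains in a compact subset of $\mathcal{D}$, and the standard continuation theorem gives $T^*=\infty$. The inequalities therefore hold for all $t\ge0$.

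The main subtlety is the step where tangency is allowed on the boundary. The inward-pointing argument with nonstrict inequalities, for instance $\dot m=\nu d=0$ when $m=d=0$, does not by itself prevent escape. This is the content of Theorem~\ref{theorem:invariant_region}, which I will take as given. If I wanted to double-check it, I would use a quasi-positivity argument. Each of $\dot h,\dot d,\dot m,\dot c$ is nonnegative whenever the corresponding variable vanishes and the others are nonnegative; this needs $1-d\ge0$ for the $h$ equation. Likewise $\dot T\le0$ on $T=1$. One can then apply Nagumo's theorem directly, since $\mathcal{D}$ is a closed convex set and the field lies in its tangent cone at every boundary point. Alternatively, one can perturb the field by $+\varepsilon$ in each component (and $-\varepsilon$ in $T$), obtain strict invariance, and let $\varepsilon\to0$ using continuous dependence.
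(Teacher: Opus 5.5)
Your proposal is correct and follows the paper's route: the paper gives no separate proof and simply reads the corollary off Theorem~\ref{theorem:invariant_region}, while your local-Lipschitz/continuation step with the bounds $m(t)\le\max\{m(0),1\}$, $c(t)\le\max\{c(0),r/\mu\}$ supplies the global existence that the paper leaves implicit, and your Nagumo tangent-cone check is the right way to make the paper's ``inward or tangent'' argument rigorous. One small point on your $\varepsilon$-alternative: $\dot T=\dot h+\dot d$ cannot be shifted independently of the $h$ and $d$ components, but it need not be, since on $h+d=1$ one has $\dot T=-h-\delta d\le-\min(1,\delta)<0$, so adding $+\varepsilon$ to every component with $2\varepsilon<\min(1,\delta)$ already makes the field strictly inward.
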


\begin{lemma}[Total tissue density is strictly below saturation at equilibrium]
	Let $(h^*,d^*,m^*,c^*)\in\mathcal D$ be an equilibrium of system \eqref{eq:ode}. Then:
	\[
	h^*+d^*<1.
	\]
\end{lemma}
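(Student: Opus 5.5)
We argue by contradiction, using the equation for the total tissue $T=h+d$ obtained in step (v) of the proof of Theorem~\ref{theorem:invariant_region}. Summing the first two equations of \eqref{eq:ode} gives
\[
\dot T=(\sigma+\rho d)(1-T)-h-\delta d .
\]
At an equilibrium $(h^*,d^*,m^*,c^*)$ we have $\dot h=\dot d=0$, hence $\dot T=0$, that is,
\[
(\sigma+\rho d^*)(1-h^*-d^*)=h^*+\delta d^* .
\]

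Since the equilibrium lies in $\mathcal D$, we know $h^*+d^*\le 1$. Suppose, for contradiction, that $h^*+d^*=1$. Then the left-hand side of the identity vanishes, so $h^*+\delta d^*=0$. Because $h^*,d^*\ge 0$ and $\delta>0$, this forces $h^*=d^*=0$. But then $h^*+d^*=0\neq 1$, which is a contradiction. Therefore $h^*+d^*<1$.

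We can also reach the conclusion directly. The right-hand side $h^*+\delta d^*$ is strictly positive unless $h^*=d^*=0$, and in that degenerate case the equality $h^*+d^*=0<1$ holds trivially. Otherwise the left-hand side is positive, and since $\sigma+\rho d^*>0$ this requires $1-h^*-d^*>0$.

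There is no real obstacle in this argument. The only point needing care is to use the strict positivity of $\delta$ and of $\sigma$ (the latter ensuring $\sigma+\rho d^*>0$), together with the nonnegativity of the components guaranteed by membership in $\mathcal D$. The macrophage and chemokine equations are not needed, since the immune damage terms cancel when the $h$ and $d$ equations are summed.
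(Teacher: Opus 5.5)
Your argument is the paper's argument: sum the $h$- and $d$-equations, evaluate at the equilibrium, and use $h^*,d^*\ge 0$ and $\delta>0$ to see that the right-hand side vanishes only when $h^*=d^*=0$. The substantive problem is the summed identity itself, which you copied from step (v) of the proof of Theorem~\ref{theorem:invariant_region}. The paper's proof of this lemma uses the same formula, so the slip is inherited.

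The repair term appears as $+\rho d(1-h-d)$ in the $h$-equation and as $-\rho d(1-h-d)$ in the $d$-equation. It therefore cancels along with the immune term, and the correct balance is
\[ \dot T=\sigma(1-T)-h-\delta d,\qquad\text{so that}\qquad \sigma(1-h^*-d^*)=h^*+\delta d^* \]
at equilibrium. This is also the version consistent with the paper's formula \eqref{eq:hstar_dstar} for $h^*$. Your identity $(\sigma+\rho d^*)(1-h^*-d^*)=h^*+\delta d^*$ differs from it by $\rho d^*(1-h^*-d^*)$. It is therefore wrong whenever $d^*>0$ and $h^*+d^*<1$, i.e.\ at any pathological equilibrium.

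The error is harmless for the conclusion. Both of your arguments go through verbatim with $\sigma$ in place of $\sigma+\rho d^*$, and the contradiction version does not even need $\sigma>0$. Your closing remark should say that the repair terms cancel as well as the immune ones.

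The only genuine difference from the paper is how you dispose of the degenerate case $h^*=d^*=0$. The paper rules it out as an equilibrium by invoking the $m$- and $c$-equations ($m^*=c^*=0$, hence $\dot h=\sigma\neq 0$). You instead observe that this case either contradicts $h^*+d^*=1$ or satisfies $h^*+d^*=0<1$ trivially, so the macrophage and chemokine equations are never needed. Your route is a little more economical. The paper's additionally shows that the origin is not an equilibrium, but with the corrected identity that is immediate too, since $h^*=d^*=0$ would force $\sigma=0$.
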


\begin{proof}
	Let $T=h+d$. Summing the first two equations of \eqref{eq:ode} gives:
	\[
	\dot T=(\sigma+\rho d)(1-T)-h-\delta d.
	\]
	At equilibrium,
	\[
	0=(\sigma+\rho d^*)(1-T^*)-h^*-\delta d^*,
	\]
	where $T^*=h^*+d^*$. Hence:
	\[
	(\sigma+\rho d^*)(1-T^*)=h^*+\delta d^*.
	\]
	Since $\delta>0$, the identity $h^*+\delta d^*=0$ implies $h^*=d^*=0$. But this cannot occur
	at equilibrium, since in that case one would also have $m^*=c^*=0$ and therefore
	$\dot h=\sigma\neq0$. Since $\sigma+\rho d^*>0$, it follows that
	\[
	1-T^*>0,
	\]
	that is,
	\[
	h^*+d^*<1.
	\]
\end{proof}

\subsection{Classification of equilibria}

We now characterize the equilibria of system \eqref{eq:ode} and discuss their stability properties.

\subsubsection{Healthy equilibrium}

The system always admits the healthy equilibrium:
\begin{equation}\label{DF_eq}
	H^{\star}=(h_0,0,0,0),\qquad \text{where}\quad h_0=\frac{\sigma}{1+\sigma}.
\end{equation}

The Jacobian matrix at $H^\star$ is:
\[
J(H^\star)=
\begin{pmatrix}
	-(1+\sigma) & \dfrac{\rho}{1+\sigma}-\sigma & -\alpha c_\epsilon h_0 & 0\\[0.4em]
	0 & -\delta-\dfrac{\rho}{1+\sigma} & \alpha c_\epsilon h_0 & 0\\[0.4em]
	0 & \nu & -\nu & 0\\
	0 & r & 0 & -\mu
\end{pmatrix}.
\]
The eigenvalues associated with the $h$- and $c$-components are $-(1+\sigma)$ and $-\mu$,
and are therefore strictly negative. The stability of $H^\star$ is thus determined by the
coupled $(d,m)$ subsystem.

The corresponding characteristic polynomial is:
\[
\lambda^2 + a_1 \lambda + a_0,
\]
with:
\[
a_1 = \delta + \nu + \frac{\rho}{1+\sigma} > 0,
\qquad
a_0 = \nu\left(\delta + \frac{\rho}{1+\sigma} - \alpha c_\epsilon \frac{\sigma}{1+\sigma}\right).
\]

Therefore, the sign of $a_0$ determines the stability of $H^\star$. Introducing the quantity:
\begin{equation}\label{Theta}
\Theta := \delta(1+\sigma) + \rho - \alpha c_\epsilon \sigma,
\end{equation}
we obtain:
\[
a_0 = \frac{\nu}{1+\sigma}\,\Theta.
\]

The healthy equilibrium $H^\star$ is thus locally asymptotically stable if $\Theta>0$ and loses stability at $\Theta=0$.

This threshold coincides with the invasion condition derived from the linearized reaction--diffusion system (see Remark~\ref{rem_thresh}).

\subsubsection{Pathological equilibria}

Besides the healthy equilibrium, solutions with $d^*>0$ may arise. At equilibrium, the last two equations of \eqref{eq:ode} yield:
\begin{equation}\label{mc_patho}
	m^*=d^*,\qquad c^*=\frac{r}{\mu}d^*.
\end{equation}
Moreover, imposing $\dot h+\dot d=0$ gives:
\[
(\sigma+\rho d^*)(1-h^*-d^*)-h^*-\delta d^*=0,
\]
from which we obtain:
\begin{equation}\label{eq:hstar_dstar}
	h^*=\frac{\sigma-(\sigma+\delta)d^*}{1+\sigma}.
\end{equation}
Therefore, any pathological equilibrium must be of the form:
\begin{equation}\label{patho}
	D^\star=\left(\frac{\sigma-(\sigma+\delta)d^*}{1+\sigma},\, d^*,\, d^*,\, \frac{r}{\mu}d^*\right),
\end{equation}
where $d^*$ is determined by substituting \eqref{eq:hstar_dstar} into the condition $\dot d=0$. This leads to the quadratic equation:
\begin{equation}
	G(d^*)=0,\qquad\qquad {\rm with}\qquad
	G(d):=g_2 d^2+g_1 d+g_0,
	\label{eqequilibri}
\end{equation}
and
\begin{align}\label{g2}
	g_2&=r\big(\rho(1-\delta)-\alpha(\delta+\sigma)\big),\\\label{g1}
	g_1&=-\delta(r+\mu\rho+r\sigma)+\mu\rho-r(\rho-\alpha\sigma)-\alpha\mu(\delta+\sigma)c_\epsilon,\\\label{g0}
	g_0&=-\mu(\delta+\rho+\delta\sigma)+\alpha\mu c_\epsilon\sigma.
\end{align}

The number of pathological equilibria is determined by the roots of \eqref{eqequilibri} for which the corresponding equilibrium $D^\star$ lies in the invariant set $\mathcal D$ defined in \eqref{inv_set}. Depending on the parameter values, up to two such equilibria may occur. Although a complete analytical classification of \eqref{eqequilibri} is difficult in general, a sharper result can be obtained in the regime $\delta\ge 1$, which is consistent with the early-stage scenario considered in this work.

\begin{proposition}
	Assume $\delta \ge 1$ and $\Theta<0$, with $\Theta$ given in \eqref{Theta}. 
	Then equation \eqref{eqequilibri} admits a unique positive root $d^*$ and this root satisfies:
	\begin{equation}\label{d*_cond}
		0<d^*<\frac{\sigma}{\sigma+\delta}.
	\end{equation}
	In particular, the corresponding equilibrium $D^\star$, as defined in \eqref{patho}, belongs to the invariant set $\mathcal D$ in \eqref{inv_set}.
\end{proposition}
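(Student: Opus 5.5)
Since the roots of \eqref{eqequilibri} are governed entirely by the signs of $G$ at a few points and by the sign of its leading coefficient, the plan is to work with these signs directly. Three quantities are needed.

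\emph{Step 1: sign at the origin.} From \eqref{g0} we have $g_0=-\mu\Theta$, so the assumption $\Theta<0$ gives $G(0)=g_0>0$.

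\emph{Step 2: sign of the leading coefficient.} Since $\delta\ge 1$, we have $\rho(1-\delta)\le 0$. Also $\alpha(\delta+\sigma)>0$. Hence $g_2<0$ by \eqref{g2}, so $G$ is a downward parabola.

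\emph{Step 3: uniqueness of the positive root.} Because $g_2<0$ and $g_0>0$, the product of the roots $g_0/g_2$ is negative. Therefore $G$ has two real roots of opposite sign, and exactly one of them, $d^*$, is positive. This root is characterized by $G>0$ on $[0,d^*)$ and $G<0$ for $d>d^*$.

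\emph{Step 4: the upper bound.} Set $\bar d:=\sigma/(\sigma+\delta)$. By Step 3, it suffices to show $G(\bar d)<0$, which forces $d^*<\bar d$. I expect this to be the main computational obstacle, so rather than expanding $g_2\bar d^2+g_1\bar d+g_0$ blindly, I would go back to how $G$ was constructed. By \eqref{eq:hstar_dstar}, $\bar d$ is exactly the value of $d$ at which $h^*=0$. There the infection term $\alpha\frac{c+c_\epsilon}{1+c}mh$ vanishes, and the condition $\dot d=0$ reduces to
\[
-\delta \bar d-\rho \bar d(1-\bar d).
\]
Since $\bar d\in(0,1)$, this expression is strictly negative.

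It remains to connect this to the sign of $G(\bar d)$. The polynomial $G$ should equal $\dot d$, evaluated along the curve \eqref{mc_patho}--\eqref{eq:hstar_dstar}, times a positive factor. Concretely, I expect
\[
G(d)=(1+\sigma)^{-1}\,\mu(1+c^*)\,\dot d\cdot(\text{const}>0),\qquad 1+c^*=1+rd/\mu,
\]
with $d$ divided out (hence quadratic rather than cubic). I would confirm this by checking the constant term: $g_0=-\mu\Theta$ matches $\mu(1+\sigma)\cdot\frac{1}{d}\dot d\big|_{d\to0}$ up to the factor $(1+\sigma)$. Once the multiplier is confirmed positive for $d>0$, it follows that
\[
\operatorname{sign}G(\bar d)=\operatorname{sign}\bigl(-\delta-\rho(1-\bar d)\bigr)<0.
\]
If the bookkeeping of the factor proves awkward, the fallback is a direct symbolic evaluation of $G(\bar d)(\sigma+\delta)^2$, which should simplify to
\[
-\mu(\sigma+\delta)\bigl(\delta(\sigma+\delta)+\rho\delta\bigr)(1+\sigma)\cdot(\ldots),
\]
a manifestly negative expression.

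\emph{Step 5: membership in $\mathcal D$.} Given $0<d^*<\bar d$, formula \eqref{eq:hstar_dstar} gives $h^*>0$. We also have $m^*=d^*>0$ and $c^*=rd^*/\mu>0$. Finally, $h^*+d^*<1$ holds either by the earlier Lemma or directly, since
\[
h^*+d^*=\frac{\sigma+(1-\delta)d^*}{1+\sigma}\le\frac{\sigma}{1+\sigma}<1
\]
when $\delta\ge1$. Hence $D^\star\in\mathcal D$.
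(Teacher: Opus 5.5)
Your proposal is correct and follows the paper's proof step for step: the sign of $g_0=-\mu\Theta$, the sign of $g_2$, roots of opposite sign, $G(\sigma/(\sigma+\delta))<0$, and then membership in $\mathcal D$. Your Step~4 also gives a clean reason for what the paper only calls a ``direct computation'': one has exactly $G(d)=(1+\sigma)(\mu+rd)\,\dot d/d$ along the curve $h=h^*(d)$, $m=d$, $c=rd/\mu$ (so the factor is $(1+\sigma)$, not $(1+\sigma)^{-1}$, and the constant-term check matches with no extra factor), whence $G(\bar d)=-(1+\sigma)\,\delta\bigl(1+\frac{\rho}{\sigma+\delta}\bigr)\bigl(\mu+\frac{r\sigma}{\sigma+\delta}\bigr)<0$ for all positive parameters. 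In Step~5 keep your direct bound on $h^*+d^*$ rather than citing the earlier Lemma, since that Lemma already assumes the equilibrium lies in $\mathcal D$.
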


\begin{proof}
	From the definition of $\Theta$ in \eqref{Theta}, the condition $\Theta<0$ implies $g_0>0$, where $g_0$ is given in \eqref{g0}. 
	
	Moreover, since $\delta \ge 1$, we have $g_2<0$, see \eqref{g2}. Hence $g_2g_0<0$ and therefore the quadratic polynomial $G(d)$ admits two distinct real roots of opposite sign. In particular, there exists a unique positive root $d^*$.
	
	Let:
	\[
	\tilde{d}:=\frac{\sigma}{\sigma+\delta}.
	\]
	A direct computation shows that $G(\tilde{d})<0$, while $G(0)=g_0>0$. By continuity, the positive root $d^*$ lies in $(0,\tilde{d})$ and hence satisfies \eqref{d*_cond}.
	
	Using \eqref{eq:hstar_dstar}, this implies $h^*>0$. Moreover, since $\delta\ge 1$ and $d^*>0$, we have:
	\[
	h^*+d^*=\frac{\sigma+(1-\delta)d^*}{1+\sigma}<1.
	\]
	Finally, from \eqref{mc_patho} we obtain $m^*>0$ and $c^*>0$. Hence $D^\star\in\mathcal D$.
\end{proof}

The overall structure of pathological equilibria in the parameter plane $(\delta,\alpha)$ is illustrated in Fig.~\ref{fig:existenceDstar}. While multiple equilibria may arise in the general case, the previous result shows that, for $\delta\ge 1$, a unique biologically admissible equilibrium exists above the invasion threshold.

\begin{figure}[h!]
\centering
\includegraphics[width=0.65\linewidth]{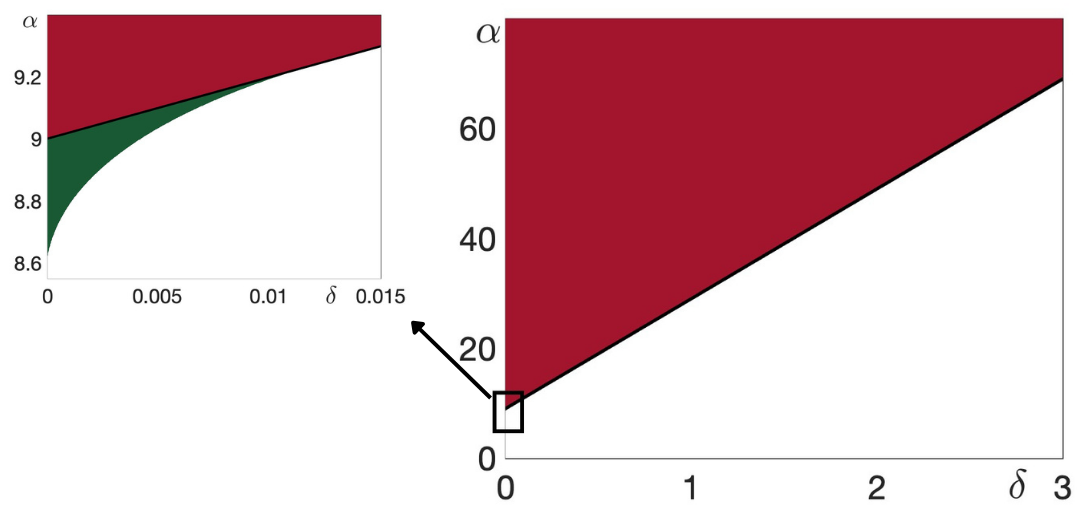}
\caption{Numerically computed bifurcation diagram in the $(\delta, \alpha)$ parameter plane. The solid black line corresponds to the threshold condition $\Theta = 0$, with $\Theta$ given in \eqref{Theta}, separating regions with different stability properties of the healthy equilibrium.
In the white region, the healthy equilibrium is stable and no biologically admissible pathological equilibrium exists. 
In the red region, the healthy equilibrium is unstable and a stable pathological equilibrium exists.
The green region corresponds to a bistable regime, where the healthy equilibrium coexists with two pathological equilibria, one stable and one unstable. 
The parameters are chosen as $\sigma=1$, $\rho=0.9$, $c_\epsilon=0.1$, $\nu=7$, $k=r=1$ and $\mu=168.$ }
	\label{fig:existenceDstar}
\end{figure}

More precisely, the black line corresponds to the threshold condition $\Theta = 0$, with $\Theta$ given in \eqref{Theta}, which determines the loss of stability of the healthy equilibrium.

In the red region, the healthy equilibrium is stable and no biologically admissible pathological equilibrium exists. 
In the blue region, the healthy equilibrium is unstable and a stable pathological equilibrium emerges, corresponding to a regime of persistent damage.

For sufficiently small values of $\delta$, a bistable regime appears (green region), in which the healthy equilibrium coexists with two pathological equilibria, one stable and one unstable. This indicates the possibility of threshold-driven transitions between recovery and sustained degeneration.

We emphasize that this regime highlighted in the zoomed region of Fig.~\ref{fig:existenceDstar} corresponds to small values of $\delta$, associated with slow damage clearance and it is therefore not representative of the early-stage dynamics considered in this work.

A complete analytical description of \eqref{eqequilibri} remains challenging in general, due to the additional constraints $0 < h^* + d^* \leq 1$ and the nonlinear dependence on the parameters. In particular, a full characterization of the below-threshold regime appears to require additional conditions on the coefficients of \eqref{eqequilibri}.

For this reason, in the following we focus on the dynamics near the healthy equilibrium and on the early-stage regime, where the onset of disease progression can be analyzed through linear mechanisms.
\section{Well-posedness of the reaction-diffusion-chemotaxis system}

In this section, we establish the global well-posedness of the reaction--diffusion--chemotaxis system \eqref{eq:dimensionless}. In particular, we prove the existence, uniqueness and boundedness of solutions, together with the preservation of a biologically admissible domain, ensuring that all components remain non-negative and that the saturation constraint $h+d \leq 1$ is satisfied for all times.
The analysis relies on a Leray--Schauder fixed point argument, combined with a priori estimates and parabolic regularity techniques. These results provide the mathematical framework required for the subsequent investigation of the dynamical properties of the model.

For the well-posedness analysis, we rewrite the dimensionless system in the following form:

	\begin{equation}
		\label{eq:dimensionless_WP_section}
		\left\{
	\begin{array}{ll}
		\displaystyle
		\partial_t h =
		\Delta h
		+ (\sigma + \rho d)(1 - h - d)
		- h
		- A(m,c) h\, ,&  
		(t,x)\in \Omega_T  \cr
		\displaystyle
		\partial_t d =
		D_d \Delta d
		+ A(m,c) h
		- \delta d
		- \rho d (1 - h - d)\,,   &(t,x)\in \Omega_T \cr
		\displaystyle
		\partial_t m =
		D_m \Delta m
		-  \nabla \cdot \left(\chi(m,c) \nabla c \right)
		+ \nu (d
		-  m)\,,   &(t,x)\in \Omega_T \cr
		\displaystyle
		\partial_t c =
		D_c \Delta c
		+ r d
		- \mu c,& (t,x)\in \Omega_T
		\end{array}
		\right.
	\end{equation}
with:
\begin{equation}
\chi(m,c)= \chi_0\dfrac{m\,c}{\kappa + c},\quad
A(m,c)= \alpha\dfrac{c+c_\epsilon}{1 + c}\,m\,
\end{equation}
and we denote by $\Omega_T=(0,T)\times \Omega$, where $\Omega$ is a bounded domain in $\mathbb{R}^n$
$(n \in\mathbb{N}, \ n \geq 1)$
with smooth, specified later, boundary $\partial\Omega$.
We write $\mathbf{n}=\mathbf{n}(s)$ for the outward normal to $\partial \Omega$ at a point $s$ of the boundary and impose the following Neumann boundary condition on $\partial\Omega$:
\begin{equation}
	\label{BC}
	\begin{cases}
		\mathbf{n}(s)\cdot \bm{\nabla} h(t,s) =0 \,& \qquad (t,s)\in (0,T)\times\partial\Omega \,\\
		\mathbf{n}(s)\cdot \bm{\nabla} d(t,s) =0 \,& \qquad (t,s)\in (0,T)\times\partial\Omega\,\\
		\mathbf{n}(s)\cdot \bm{\nabla} m(t,s) =0 \,& \qquad (t,s)\in (0,T)\times\partial\Omega \,\\	
		\mathbf{n}(s)\cdot \bm{\nabla} c(t,s) =0 \,& \qquad (t,s)\in (0,T)\times\partial\Omega \,
	\end{cases}
\end{equation}
Finally, we impose the following non-negative initial conditions:
\begin{equation}
	\label{IC}
	\begin{cases}
		h(t=0,x)=h_{in}(x) \,,& \qquad x\in \Omega\,  \cr
		d(t=0,x)=d_{in}(x) \,,& \qquad x\in \Omega\,  \cr
		m(t=0,x)=m_{in}(x) \,,& \qquad x\in \Omega \, \cr
		c(t=0,x)=c_{in}(x) \,,& \qquad x\in \Omega \, \cr
	\end{cases}
\end{equation}
and we assume that the following condition holds:
\begin{equation}\label{IC_saturation_condition}
h_{in}(x)+d_{in}(x)\leq 1\,,\quad x\in \Omega\,.	
\end{equation}
Moreover, we recall that all the system parameters are positive.
%
%
We introduce the functional framework used throughout the section (see \cite{GT15,LSU88}).

For $1\leq p\leq \infty$, we denote by $L^p(\Omega)$ the usual Lebesgue spaces with norm $\|\cdot\|_{L^p(\Omega)}$ and by $L^p(\Omega_T)=L^p((0,T);L^p(\Omega))$ the corresponding
space-time spaces, equipped with the norm $\|\cdot\|_{L^p(\Omega_T)}$.

More generally, for $0\leq t_0<T$, we set $\Omega_{t_0,T}=(t_0,T)\times\Omega$ and define
$L^p(\Omega_{t_0,T})$ accordingly.

We write $W^{r,p}(\Omega)$ for the Sobolev spaces of order $r$ with integrability $p$.
For the parabolic setting, we consider the space:
\[
W^{1,2}_p(\Omega_{t_0,T})=
\left\{v:\ \partial_t^r \partial_x^s v \in L^p(\Omega_{t_0,T}),\quad 2r+s\leq 2\right\},
\]
endowed with the norm:
\[
\|v\|_{W^{1,2}_p(\Omega_{t_0,T})}
=\sum_{2r+s\leq 2}\|\partial_t^r\partial_x^s v\|_{L^p(\Omega_{t_0,T})}.
\]
We also denote by $C^k(\overline{\Omega})$ the space of functions with continuous derivatives
up to order $k$ and by $C^0(\Omega_T)$ the space of continuous functions in $\Omega_T$.

We finally introduce the space $C^{0,1}=C^{0,1}([0,T]\times\overline{\Omega})$ of functions
that are continuous together with their spatial gradients, endowed with the norm:
\[
\|u\|_{C^{0,1}}=\sup_{t\in[0,T]}\|u(t,\cdot)\|_{L^\infty(\Omega)}
+\sup_{t\in[0,T]}\|\nabla u(t,\cdot)\|_{L^\infty(\Omega)}.
\]
For a Banach space $X$, we denote by $[X]^m$ the Cartesian product of $m$ copies of $X$,
endowed with the natural norm.

Our main well-posedness result is stated in the following theorem.
\begin{theorem}
\label{theorem:globalexistence}
    Let $\Omega$ be a smooth ($C^{2+\eta}$, for some $\eta>0$) bounded connected open subset of $\mathbb{R}^n$, with $n\in \mathbb{N}$ and $n\geq 1$. For $\tilde{p}>n+2$, let the non-negative initial conditions $(h_{in}, d_{in}, m_{in}, c_{in}) \in [W^{(2-2/\tilde{p}),\tilde{p}}(\Omega)]^4$, with \eqref{IC_saturation_condition}, 
	then the system \eqref{eq:dimensionless_WP_section} admits a unique global strong solution which is non-negative in each component, with $h+d\leq 1$ in $\Omega$ and bounded in time.
	More precisely, we have that:
	\begin{equation}
\partial_t(h,d,m,c)\,,\,\, \Delta (h,d,m,c) \,\in [L^{\tilde{p}} \left(\Omega_T \right)]^4\,\,,\,\,\quad\quad \nabla\cdot\left( \chi(m,c)\nabla c\right) \in L^{\tilde{p}} \left(\Omega_T \right) 	\,\nonumber
	\end{equation}
	and 		$(h,d,m,c) \in [C^{0,1}]^4$,
	for all $T>0$.
	\vskip0.2cm
	Furthermore, if $(h_{in},d_{in},m_{in},c_{in})\in [C^2(\overline{\Omega})]^4$, non negative with \eqref{IC_saturation_condition}, and  $\nabla h_{in}\cdot \mathbf{n}=\nabla d_{in}\cdot \mathbf{n}=\nabla m_{in}\cdot \mathbf{n}=\nabla c_{in}\cdot \mathbf{n}=0$ on $\partial\Omega$, then the solution to the system \eqref{eq:dimensionless_WP_section} is classical, i.e., $\forall T>0$,
	\begin{equation}
		\partial_t(h,d,m,c)\,,\, \Delta (h,d,m,c)\in [C^{0} \left(\Omega_T  \right)]^4 \,,\,\,\quad\quad \nabla\cdot\left( \chi(m,c)\nabla c\right) \in C^{0} \left(\Omega_T  \right) 	\,.\nonumber
	\end{equation}
    Moreover, there exists a constant $\overline{C}>0$, depending on $\Omega,h_{in},d_{in},m_{in},c_{in}$ and on all the system parameters, such that the unique global solution of the system \eqref{eq:dimensionless_WP_section} satisfies:
		
		\begin{equation}\label{uniform_bound_sup}
\sup_{t\geq0}\left\|\left(h(t,\cdot)\,,\, d(t,\cdot)\,,\, m(t,\cdot)\,,\, c(t,\cdot)\right)\right\|_{[W^{1,\infty}(\Omega)]^4} \leq \overline{C}\,		
		\end{equation}
	
		and
	
		\begin{equation}\label{uniform_bound_limsup}
			\limsup_{t\rightarrow +\infty}\left\|\left(h(t,\cdot)\,,\, d(t,\cdot)\,,\, m(t,\cdot)\,,\, c(t,\cdot)\right)\right\|_{[W^{1,\infty}(\Omega)]^4}\leq \overline{C}\,.			
		\end{equation}
\end{theorem}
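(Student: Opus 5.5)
The plan is a Leray--Schauder fixed point argument on a truncated system, followed by a priori bounds that make the truncation inactive.

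\textbf{Fixed point construction.} Fix $T>0$ and work in $X=[C^{0,1}]^4$. Given $(\bar h,\bar d,\bar m,\bar c)\in X$, replace each nonlinearity by a version evaluated at the positive parts $\bar h^+,\bar d^+,\bar m^+,\bar c^+$, and cut off $h+d$ at $1$. Then solve the four decoupled linear Neumann problems with these frozen coefficients and sources:
\begin{itemize}
\item first the $c$-equation, with source $r\bar d^+$;
\item then the $h$- and $d$-equations, which are linear in $h$ and $d$ respectively once the coefficients are frozen;
\item finally the $m$-equation, written as $\partial_t m - D_m\Delta m = -\nabla\cdot(\chi(\bar m^+,\bar c^+)\nabla\bar c)+\nu(\bar d^+-m)$.
\end{itemize}
The divergence is expanded so that the source involves $\Delta \bar c$. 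To keep this term in $L^{\tilde p}$, I will instead use $\nabla c$ and $\Delta c$ from the already-solved $c$, which lies in $W^{1,2}_{\tilde p}$.

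Maximal $L^{\tilde p}$ regularity \cite{LSU88} gives solutions in $W^{1,2}_{\tilde p}(\Omega_T)$. Since $\tilde p>n+2$, the embedding $W^{1,2}_{\tilde p}\hookrightarrow C^{1+\beta,(1+\beta)/2}$ is compact into $C^{0,1}$. Hence the map $\mathcal F$ is continuous and compact.

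\textbf{Uniform bound on the homotopy.} For the homotopy $u=\lambda\mathcal F(u)$, $\lambda\in[0,1]$, I need a bound independent of $\lambda$. I obtain it by first showing that any fixed point is admissible.
\begin{itemize}
\item \emph{Nonnegativity.} Test each equation with the negative part of the unknown. At a zero of a component, the reaction term has the correct sign, exactly as in cases (i)--(iv) of Theorem~\ref{theorem:invariant_region}. The chemotaxis term in the $m$-equation vanishes where $m=0$ because $\chi(m,c)$ contains the factor $m$, so a Stampacchia argument gives $m\ge0$.
\item \emph{Saturation constraint.} Set $T=h+d$ and $w=T-1$. It satisfies $\partial_t T=\Delta h+D_d\Delta d+(\sigma+\rho d)(1-T)-h-\delta d$. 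Because $D_d\ne1$, $T$ does not solve a scalar parabolic equation. I would handle this through a weighted comparison, or by testing with $(h+d-1)^+$ and absorbing the cross term $(D_d-1)\nabla d\cdot\nabla (T-1)^+$. If that absorption fails, a fallback is to require the cut-off to act only on the reaction and to use case (v) of Theorem~\ref{theorem:invariant_region} at the level of $L^2$ energy estimates.
\end{itemize}
Once $0\le h,d\le1$, the $c$-equation has a bounded source, so $c$ is bounded in $W^{1,2}_{q}$ for every $q$, and $\nabla c\in L^\infty$. The $m$-equation is then linear in $m$ with drift $\chi_0\frac{c}{\kappa+c}\nabla c\in L^\infty$ and divergence of drift in $L^{q}$. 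A standard $L^p$ iteration (Moser/Alikakos) yields $m\in L^\infty$ and then $W^{1,2}_{\tilde p}$. This gives the $\lambda$-uniform bound, so Leray--Schauder yields a solution.

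\textbf{Uniqueness, time-uniform bounds and classical regularity.} Uniqueness follows from a Gronwall estimate on the difference of two solutions in $L^2$, using the Lipschitz bounds available in $C^{0,1}$. For the bounds \eqref{uniform_bound_sup}--\eqref{uniform_bound_limsup}, I apply the same estimates on unit windows $(t_0,t_0+2)$ with a smooth time cut-off, together with the exponential decay $e^{-\mu t}$ and $e^{-\nu t}$ of the semigroup. This gives constants independent of $t_0$, since the sources $rd$ and $\nu d$ are uniformly bounded by $1$. For classical solutions with $C^2$ compatible data, Schauder theory \cite{GT15,LSU88} applied to the now Hölder-continuous coefficients gives $C^{2+\beta,1+\beta/2}$ regularity.

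\textbf{Main obstacle.} I expect the main obstacle to be the invariance $h+d\le1$ in the PDE setting with unequal diffusions $D_d\neq1$, since the ODE argument does not transfer directly. The $L^\infty$ bound on $m$ in the presence of chemotaxis is routine only once $\nabla c$ is controlled.
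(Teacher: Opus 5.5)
There is a genuine gap, and it is exactly where you expected it: the saturation constraint $h+d\le1$. Everything after it in your plan rests on it. That includes your $\lambda$-uniform bound ($0\le d\le1$ gives a bounded source for $c$, then Moser--Alikakos for $m$), your time-uniform bounds (sources bounded by $1$), and the removal of the cut-off. None of your remedies can work, because for $D_d\neq1$ the constraint is not preserved by \eqref{eq:dimensionless_WP_section} at all.

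With $T=h+d$ one has $\partial_t T=\Delta h+D_d\Delta d+\sigma(1-T)-h-\delta d$. On $(0,L)$ take $h_{in}=1-d_{in}$ and $d_{in}(x)=\tfrac12-\tfrac14\cos(2\pi kx/L)$, which is smooth, Neumann-compatible and takes values in $[\tfrac14,\tfrac34]$. Then $T(0,\cdot)\equiv1$ and $\partial_t T|_{t=0}=(D_d-1)d_{in}''-h_{in}-\delta d_{in}$. At interior minima of $d_{in}$ (if $D_d>1$), or at interior maxima (if $D_d<1$), this is at least $|D_d-1|\pi^2k^2/L^2-(1+\delta)$, hence positive for $k$ large, so $h+d>1$ for small $t>0$.

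The same obstruction shows up in your energy approach. Testing with $(T-1)^+$ leaves the cross term $(D_d-1)\int\nabla d\cdot\nabla(T-1)^+$. Absorbing it produces a forcing $\tfrac{(D_d-1)^2}{4}\int_{\{T>1\}}|\nabla d|^2$, which is not controlled by $\|(T-1)^+\|_{L^2}^2$, so Gronwall cannot start from $(T-1)^+(0)=0$. No weighted sum $ah+bd$ helps either, since $a\Delta h+bD_d\Delta d$ is not a Laplacian of $ah+bd$ unless $D_d=1$.

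The paper does not close this step either. It invokes the Chueh--Conley--Smoller invariant-region theorem for the triangle $\{h\ge0,\ d\ge0,\ h+d\le1\}$. That theorem requires the normal of each face to be a left eigenvector of $\mathrm{diag}(1,D_d)$, and for the face $h+d=1$, with normal $(1,1)$, this holds only if $D_d=1$. So the saturation part of the statement is false for $D_d\neq1$, and both arguments are complete only for $D_d=1$.

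For $D_d=1$ the step is immediate: $T$ solves the scalar equation $\partial_t T=\Delta T+\sigma(1-T)-h-\delta d$, and testing with $(T-1)^+$, using $h,d\ge0$, gives $T\le1$. In that case your route is valid and more elementary than the paper's. The paper obtains the finite-time bounds for the truncated problem without saturation, via the $L^1$ estimate, the $L^{(n+2)/(n+1)}$ bound on $\nabla c$ and a duality argument, and uses saturation mainly to remove the cut-off.

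For $D_d\neq1$, a repair would have to drop $h+d\le1$ and treat the untruncated system directly. There $-\rho d(1-h-d)=\rho d(h+d-1)$ becomes a superlinear source. Also, the sign of $(\sigma+\rho d)(1-d)$ at $h=0$, which is what gives $h\ge0$, is no longer guaranteed.
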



The proof of Theorem~\ref{theorem:globalexistence} is divided into three parts.
In section~\ref{subsec:existence} we prove the global existence of solutions by means of a Leray--Schauder fixed point argument. In section~\ref{subsec:uniquenss} we establish uniqueness, while in section~\ref{subsec:uniform} we derive the uniform-in-time bounds
\eqref{uniform_bound_sup}--\eqref{uniform_bound_limsup}. The argument builds on the approach developed in \cite{DGMT21,GLRSS24}, suitably adapted to the specific nonlinear and coupled structure of the reaction--diffusion system \eqref{eq:dimensionless_WP_section}.

We recall the following embedding lemma, which will be used in the sequel (see \cite{LSU88,S86}).

\begin{lemma}\label{embedding_lemma}
Let $1<p<\infty$, $0\leq t_0<T$ and $\Omega$ be a bounded domain in $\mathbb{R}^n$.	Then there exists a constant $C_T>0$, depending on $T-t_0$, $\Omega$, $p$ and $n$, such that for all $u\in W^{1,2}_p(\Omega_{t_0,T})$ the following estimates hold:
	
	\begin{itemize}
		\item \begin{multline}\label{lemma_1_stima_1}
			\|u\|_{L^q(\Omega_{t_0,T})}\leq C_T \|u\|_{W^{1,2}_p(\Omega_{t_0,T})}\,,\\
			\text{where }\quad
			\begin{cases}
				q=\frac{(n+2)p}{n+2-2p} &\text{if }p<\frac{n+2}{2}\,\cr
				q=\infty &\text{if }p>\frac{n+2}{2}\,\cr
				q \text{ is finite and arbitrary}&\text{if }p=\frac{n+2}{2}\,
			\end{cases}
		\end{multline} 
		\item \begin{multline}\label{lemma_1_stima_2}
			\|\nabla u\|_{L^q(\Omega_{t_0,T})}\leq C_T \|u\|_{W^{1,2}_p(\Omega_{t_0,T})}\,,\\
			\text{where }\quad
			\begin{cases}
				q=\frac{(n+2)p}{n+2-p} &\text{if }p<n+2\,\cr
				q=\infty &\text{if }p>n+2\,\cr
				q \text{ is finite and arbitrary}&\text{if }p=n+2\,
			\end{cases}
		\end{multline} 
		\item $W^{1,2}_p(\Omega_{t_0,T})$ is compactly embedded in $C^{0,1}$, if $p>n+2$. 
	\end{itemize}

\end{lemma}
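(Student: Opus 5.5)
The embedding lemma is the classical anisotropic Sobolev embedding for parabolic spaces, so the plan is to reduce it to known results rather than rederive it. First I reduce to a fixed reference cylinder. Translating in time, the cylinder $\Omega_{t_0,T}$ becomes $(0,T-t_0)\times\Omega$, which explains why $C_T$ depends only on $T-t_0$. Because $\Omega$ is bounded with smooth boundary, I then build a bounded extension operator $E: W^{1,2}_p(\Omega_{t_0,T})\to W^{1,2}_p(\mathbb{R}^{n+1})$ with compact support. In space this is done by the usual reflection across $\partial\Omega$ in local charts; this is compatible with the anisotropic scaling, since it only involves spatial derivatives up to order two. In time I reflect evenly (or by higher-order reflection) across $t=t_0$ and $t=T$, and then multiply by a cutoff.

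Next I establish the embeddings on $\mathbb{R}^{n+1}$, working with the parabolic dilation $(t,x)\mapsto(\lambda^2 t,\lambda x)$, for which the homogeneous dimension is $n+2$. Time counts as two derivatives, so $W^{1,2}_p$ is "two derivatives in $L^p$" in a space of dimension $n+2$. I write $u=\Gamma*(\partial_t-\Delta)u$, with $\Gamma$ the heat kernel, or use the parabolic Bessel potential $\mathcal{J}_2=(1-\partial_t... )$ via the symbol $(1+|\xi|^2+i\tau)^{-1}$. Then $u$ is a parabolic Riesz/Bessel potential of order $2$ of an $L^p$ function, and $\nabla u$ is one of order $1$. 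The parabolic Hardy--Littlewood--Sobolev inequality in homogeneous dimension $n+2$ gives:
\begin{itemize}
\item for order $2$, $q=\frac{(n+2)p}{n+2-2p}$ when $p<\frac{n+2}{2}$;
\item for order $1$, $q=\frac{(n+2)p}{n+2-p}$ when $p<n+2$.
\end{itemize}
In the supercritical cases, Hölder's inequality applied to the kernel, which lies in $L^{p'}$ near the origin and decays exponentially, gives $L^\infty$ bounds. In the borderline cases, every finite $q$ follows by embedding into a slightly smaller $p$ and interpolating. Together these yield \eqref{lemma_1_stima_1}--\eqref{lemma_1_stima_2}. Alternatively, I can simply cite \cite[Ch.~II, Lemma~3.3]{LSU88}, where exactly these exponents appear.

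For compactness when $p>n+2$, the same potential representation gives a Hölder estimate. Specifically, $\nabla u$ is Hölder continuous in $x$ with exponent $1-\frac{n+2}{p}$ and in $t$ with half that exponent, and $u$ itself is Hölder continuous in $t$. This comes from the standard Morrey/Campanato-type estimate for parabolic potentials of order $1$ with $L^p$ densities. Hence bounded sets of $W^{1,2}_p$ are bounded in a parabolic Hölder space $C^{\beta/2,1+\beta}$. Compactness of that space in $C^{0,1}$ then follows from Arzelà--Ascoli applied to $u$ and $\nabla u$ on the compact set $[t_0,T]\times\overline\Omega$.

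The main obstacle is the boundary and the time-endpoint extension: making sure the extension preserves the mixed norm while keeping constants uniform in $T-t_0$. I handle this by fixing the interval length and transporting through the time translation. If that proves delicate, I fall back on citing \cite{LSU88,S86}, as the paper itself does.
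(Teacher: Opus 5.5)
Your outline is correct, but it takes a different route from the paper: the paper does not prove this lemma at all and simply recalls it from \cite{LSU88,S86}. You instead reconstruct the standard proof:
extension to $\mathbb{R}^{n+1}$; representation of $u$ and $\nabla u$ as parabolic potentials of order $2$ and $1$ in homogeneous dimension $n+2$; parabolic Hardy--Littlewood--Sobolev for finite $q$; H\"older's inequality on the kernel for $q=\infty$; the inclusion $W^{1,2}_p\subset W^{1,2}_{\tilde p}$, $\tilde p<p$, on the bounded cylinder for the borderline cases; and a parabolic Morrey estimate plus Arzel\`a--Ascoli for compactness in $C^{0,1}$.

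What your route buys is transparency. It shows where the exponent $n+2$ comes from. Through your time-translation reduction, it also shows why the constant depends only on the length $T-t_0$. This is exactly what Section~\ref{subsec:uniform} uses when the lemma is applied on $\Omega_{\tau,\tau+2}$ with $\tau$-independent constants. It is not more elementary, though: parabolic Hardy--Littlewood--Sobolev and the Campanato-type H\"older bound for parabolic potentials are quoted results of the same depth as the lemma. Citing, as the paper does, is shorter, but the restatement drops a hypothesis you correctly supply (see below).

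Three points to tighten.

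First, your extension uses regularity of $\partial\Omega$, which the statement (``bounded domain'') omits. This hypothesis is genuinely needed, not an artifact of your method. For time-independent $u(t,x)=v(x)$ the first estimate reduces to $W^{2,p}(\Omega)\hookrightarrow L^q(\Omega)$, which fails on bounded domains with sufficiently sharp outward cusps. In the paper $\Omega$ is $C^{2+\eta}$, so your assumption is the right one. In space you need a reflection matching the first normal derivative, since the plain even reflection does not preserve $W^{2,p}$. In time, even reflection suffices because only $\partial_t u$ enters.

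Second, the constant cannot be uniform as $T-t_0\to0$. Testing with $u\equiv1$ forces $C_T\ge((T-t_0)|\Omega|)^{-1/p}$ in the $L^\infty$ case. So ``uniform in $T-t_0$'' should read ``depending only on $T-t_0$'', which is what translation gives.

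Third, the heat kernel $\Gamma$ is not integrable on $\mathbb{R}^{n+1}$ and decays only like $t^{-n/2}$ in time. The exponential decay you invoke holds for the Bessel-type kernel $G=e^{-t}\Gamma\,\mathbf{1}_{t>0}$, the inverse of the symbol $1+|\xi|^2+i\tau$. With it, $v=G*(v+\partial_t v-\Delta v)$ holds on all of $\mathbb{R}^{n+1}$. Alternatively, truncate $\Gamma$ in time using the compact support of the extended function.
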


\subsection{Global existence of solutions}\label{subsec:existence}

In this section, we prove the global existence of non-negative solutions to
system \eqref{eq:dimensionless_WP_section} by means of a Leray--Schauder fixed point argument.
We also derive uniform bounds in the $L^\infty$ norm.

The proof is structured as follows. We first construct a suitable map $S$ and establish
its compactness and continuity. We then introduce the set $\Phi$ of its fixed points and
prove that its elements are non-negative and satisfy the saturation constraint.
Finally, we show that $\Phi$ is bounded and conclude the existence result.

\subsubsection{Construction of the map \texorpdfstring{$S$}{S}}\label{subsection:S_map}

We begin by constructing a suitable solution operator $S$, which will be used to apply the Leray--Schauder fixed point theorem, as follows:
\begin{equation}\label{S}
	S:[C^{0,1}]^4 \times [0,1] \longrightarrow [C^{0,1}]^4\,.
\end{equation}
Given $(h,d,m,c)\in [C^{0,1}]^4$ and $\lambda\in[0,1]$, we define:
\begin{equation}\label{eq:S}
	S(h,d,m,c,\lambda)=(\lambda\,h^\ast,\lambda\,d^\ast,\lambda\,m^\ast,\lambda\,c^\ast),
\end{equation}
where $(h^\ast,d^\ast,m^\ast,c^\ast)$ is obtained by solving, in sequence, the following
linear parabolic problems.
%
	
For a given $d\in C^{0,1}$, we consider the problem:
	\begin{equation}
		\begin{cases}\label{eq_c_ast}
			\partial_t c^\ast=D_c
			\Delta c^\ast -\mu c^\ast + rd_+ &\quad(t,x)\in \Omega_T\,\\
			\mathbf{n}\cdot \bm{\nabla} c^\ast =0 &\quad (t,s)\in (0,T)\times\partial\Omega \,\\
			c^\ast(t=0,x)=c_{in}(x) &\quad x\in \Omega\,
		\end{cases}
	\end{equation}
	where $d_+=\max(0,d)$.
%
Once $c^\ast$ is determined, for given $m,c\in C^{0,1}$ we solve:
	\begin{equation}\label{eq_m_ast}
		\begin{cases}
			\partial_t m^\ast=
			D_m \Delta m^\ast  -  \nabla\cdot\left(\chi(m_+,c_+) \nabla c^\ast\right)+\nu (d_+ - m^\ast) &(t,x)\in \Omega_T\,\\
			\mathbf{n}\cdot \bm{\nabla} m^\ast =0 &(t,s)\in (0,T)\times\partial\Omega\,\\
			m^\ast(x,0)=m_{in}(x) &x\in \Omega\,
		\end{cases}
	\end{equation}
	where $m_+=\max(0,m)$ and $c_+=\max(0,c)$.
	%
Next, for given $h,d\in C^{0,1}$, we solve:
	\begin{equation}
		\begin{cases}\label{eq_d_ast}
			\partial_t d^\ast=
			D_d\Delta d^\ast-\delta d^\ast-\rho d^\ast+\rho d_+\overline{(h+ d)}_+ +A(m^\ast, c_+)h_+ &(t,x)\in \Omega_T\\
			\mathbf{n}\cdot \bm{\nabla} d^\ast =0 &(t,s)\in (0,T)\times\partial\Omega\,\\
			d^\ast(x,0)=d_{in}(x) &x\in \Omega\,
		\end{cases}
	\end{equation}
	where $h_+=\max(0,h)$ and $\overline{(h+d)}_+=\max(0,h+d)\wedge\min(1,h+d)$.
	
Finally, we consider:
	\begin{equation}\label{eq_h_ast}
		\begin{cases}
			\partial_t h^\ast=
			\Delta h^\ast -h^\ast-A(m_+, c_+)h^\ast +\sigma(1-\overline{(h +d)}_+)\\
        \phantom{aaaa}+\rho d^\ast(1-\overline{(h+d)}_+) &(t,x)\in \Omega_T\\
			\mathbf{n}\cdot \bm{\nabla} h^\ast =0 &(t,s)\in (0,T)\times\partial\Omega\,\\
			h^\ast(x,0)=h_{in}(x) &x\in \Omega\,
		\end{cases}
	\end{equation}
	%
%
By construction, the map $S$ associates to each $(h,d,m,c,\lambda)$ an element of
$[C^{0,1}]^4$.

Moreover, from the definition of $S$ given by \eqref{eq_c_ast}--\eqref{eq_h_ast} and the maximal
regularity theory for parabolic equations with Neumann boundary conditions
(see \cite{DGMT21,LSU88}), we obtain:
\begin{equation}\label{S_precompact}
	\|(h^\ast,d^\ast,m^\ast,c^\ast)\|_{[W^{1,2}_p(\Omega_T)]^4}\leq K^\ast,
\end{equation}
where $K^\ast$ depends on $T$, $\Omega$, $n$, $p$, the initial data, the parameters of the system
and the $C^{0,1}$ norms of $h$, $d$, $m$ and $c$.

Since $p>n+2$, the space $W^{1,2}_p(\Omega_T)$ is compactly embedded in $C^{0,1}$
(Lemma~\ref{embedding_lemma}) and therefore:
\[
(h^\ast,d^\ast,m^\ast,c^\ast)\in [C^{0,1}]^4.
\]
\subsubsection{Compactness and continuity of \texorpdfstring{$S$}{S}}\label{subsubsec:compactness_continuity}
To apply the Leray--Schauder fixed point theorem, we show that the map $S$ is compact and continuous in $[C^{0,1}]^4$.
This is achieved by establishing precompactness of $S$ on bounded sets and then proving continuity in the following lemmas.
\begin{lemma}\label{lemma_S_bound}
	Under the hypotheses of Theorem \ref{theorem:globalexistence}, the map $S$ maps bounded sets of $[C^{0,1}]^4\times [0,1]$ into precompact sets of $[C^{0,1}]^4$.
\end{lemma}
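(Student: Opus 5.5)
The plan is to prove that, for any bounded set $B\subset[C^{0,1}]^4\times[0,1]$ with $\|(h,d,m,c)\|_{[C^{0,1}]^4}\le R$ on $B$, the image $S(B)$ lies in a bounded subset of $[W^{1,2}_p(\Omega_T)]^4$ with $p>n+2$. The conclusion then follows from the compact embedding of $W^{1,2}_p(\Omega_T)$ into $C^{0,1}$ in Lemma~\ref{embedding_lemma}. Multiplication by $\lambda\in[0,1]$ does not affect boundedness. So the whole task reduces to making the bound \eqref{S_precompact} quantitative: we need $K^\ast$ to depend only on $R$, and not on the particular element of $B$.

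We follow the sequential structure \eqref{eq_c_ast}--\eqref{eq_h_ast} and invoke maximal $L^p$-regularity for the Neumann heat operator $\partial_t - D\Delta + \beta$ with $\beta\ge 0$ bounded. This gives
\[
\|u\|_{W^{1,2}_p(\Omega_T)}\le C_T\bigl(\|f\|_{L^p(\Omega_T)}+\|u_{in}\|_{W^{2-2/p,p}(\Omega)}\bigr).
\]
We take $p=\tilde p$ so that the initial data are admissible.

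\emph{Step 1, the $c^\ast$ equation.} The source is $rd_+$, with $\|rd_+\|_{L^\infty}\le rR$. Hence $\|c^\ast\|_{W^{1,2}_p}\le C(R)$, and by Lemma~\ref{embedding_lemma} also $\|\nabla c^\ast\|_{L^\infty}$ and $\|c^\ast\|_{L^\infty}$ are bounded by $C(R)$.

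\emph{Step 2, the $m^\ast$ equation.} This is the delicate step. We expand
\[
\nabla\cdot(\chi(m_+,c_+)\nabla c^\ast)=\nabla\chi(m_+,c_+)\cdot\nabla c^\ast+\chi(m_+,c_+)\Delta c^\ast .
\]
Since $m_+$ and $c_+$ are Lipschitz in $x$ with $C^{0,1}$ norm at most $R$, and $\chi$ is smooth and globally Lipschitz in $c\ge0$ on bounded sets, we get $\|\nabla\chi(m_+,c_+)\|_{L^\infty}\le C(R)$. Together with Step~1 this gives $\|\nabla\cdot(\chi\nabla c^\ast)\|_{L^p}\le C(R)\|c^\ast\|_{W^{1,2}_p}\le C(R)$. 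The other source term $\nu d_+$ is bounded, so maximal regularity bounds $\|m^\ast\|_{W^{1,2}_p}$ by $C(R)$. It is essential here that the chemotactic flux uses the frozen $m_+$ rather than $m^\ast$, and that $\Delta c^\ast$ lies in $L^p$ and not merely in $L^\infty$. I expect this step to be the main obstacle, because it is the only place where second derivatives from a previous stage enter a source term.

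\emph{Step 3, the $d^\ast$ equation.} The damping coefficient $\delta+\rho$ is constant. The sources are $\rho d_+\overline{(h+d)}_+$, bounded by $\rho R$, and
\[
A(m^\ast,c_+)h_+,\qquad |A(m^\ast,c_+)h_+|\le \alpha\max(1,c_\epsilon)\,|m^\ast|\,R .
\]
Here $(c+c_\epsilon)/(1+c)\le\max(1,c_\epsilon)$ for $c\ge0$, and $m^\ast\in L^\infty$ by Step~2 and the embedding. The same estimate therefore bounds $d^\ast$.

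\emph{Step 4, the $h^\ast$ equation.} This equation is linear with a zeroth-order coefficient $1+A(m_+,c_+)$, which is bounded and nonnegative. We can either keep this coefficient on the left, where it is allowed because it is bounded, or move $A h^\ast$ to the right and absorb it by a Gronwall-type argument on short time intervals. The source $(\sigma+\rho d^\ast)(1-\overline{(h+d)}_+)$ is bounded using Step~3. This gives $\|h^\ast\|_{W^{1,2}_p}\le C(R)$.

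Collecting Steps 1--4, $S(B)$ is bounded in $[W^{1,2}_p(\Omega_T)]^4$, hence precompact in $[C^{0,1}]^4$.
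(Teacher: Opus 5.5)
Your proposal is correct and takes the same route as the paper. The paper's proof invokes the maximal-regularity bound \eqref{S_precompact} for $(h^\ast,d^\ast,m^\ast,c^\ast)$ in $[W^{1,2}_p(\Omega_T)]^4$ and then the compact embedding into $C^{0,1}$ from Lemma~\ref{embedding_lemma}, while your Steps 1--4 make explicit, equation by equation in the order $c^\ast\to m^\ast\to d^\ast\to h^\ast$, that the constant $K^\ast$ depends only on the $C^{0,1}$ bound $R$ of the input, which the paper asserts without detail.
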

\begin{proof}
	The claim follows from \eqref{S_precompact}. Since $p>n+2$, the space
	$W^{1,2}_p(\Omega_T)$ is compactly embedded in $C^{0,1}$ by Lemma \ref{embedding_lemma},
	which yields the desired precompactness.
\end{proof}
\begin{lemma}\label{lemma_S_cont}
	Under the hypotheses of Theorem \ref{theorem:globalexistence}, the map $S$ is continuous.
\end{lemma}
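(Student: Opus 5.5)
To prove Lemma~\ref{lemma_S_cont}, take a sequence $(h_k,d_k,m_k,c_k,\lambda_k)\to(h,d,m,c,\lambda)$ in $[C^{0,1}]^4\times[0,1]$. Let $(h_k^\ast,d_k^\ast,m_k^\ast,c_k^\ast)$ and $(h^\ast,d^\ast,m^\ast,c^\ast)$ be the solutions of \eqref{eq_c_ast}--\eqref{eq_h_ast} built from these data. The goal is to show $S(h_k,d_k,m_k,c_k,\lambda_k)\to S(h,d,m,c,\lambda)$ in $[C^{0,1}]^4$. Since $\lambda_k\to\lambda$ and the starred functions will be uniformly bounded, it suffices to prove $(h_k^\ast,\dots,c_k^\ast)\to(h^\ast,\dots,c^\ast)$ in $[C^{0,1}]^4$.

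The argument follows the order of construction and works with differences. Each difference, say $\tilde c_k=c_k^\ast-c^\ast$, solves a linear Neumann problem with zero initial datum, to which maximal $L^p$ regularity \cite{LSU88} applies directly.

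The steps are as follows.
\begin{enumerate}
\item For $c$: $\tilde c_k$ solves the heat-type problem with source $r(d_{k,+}-d_+)$. Since $u\mapsto u_+$ is $1$-Lipschitz, maximal regularity gives $\|\tilde c_k\|_{W^{1,2}_p}\le C\|d_k-d\|_{L^\infty}\to0$. The embedding of Lemma~\ref{embedding_lemma} then yields convergence in $C^{0,1}$.
\item For $m$: the source of $\tilde m_k$ is
\[
-\nabla\cdot\big(\chi(m_{k,+},c_{k,+})\nabla c_k^\ast-\chi(m_+,c_+)\nabla c^\ast\big)+\nu(d_{k,+}-d_+).
\]
Expand the divergence as $\chi\,\Delta c^\ast+\partial_m\chi\,\nabla m_+\cdot\nabla c^\ast+\partial_c\chi\,\nabla c_+\cdot\nabla c^\ast$. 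The coefficients involve $\nabla m_{k,+}$ and $\nabla c_{k,+}$, which are bounded in $L^\infty$, and $\chi$ is smooth and globally Lipschitz in $c\ge0$. Because $\nabla c_k^\ast\to\nabla c^\ast$ uniformly and $\Delta c_k^\ast\to\Delta c^\ast$ in $L^p$ by step 1, the source tends to zero in $L^p$. Maximal regularity and the embedding then give $m_k^\ast\to m^\ast$ in $C^{0,1}$.
\item For $d$ and $h$: handle these in the same way, using the Lipschitz continuity of $A$ on bounded sets, of $u\mapsto u_+$, and of the truncation $\overline{(\cdot)}_+$, together with the convergences already obtained for $m^\ast$ and $d^\ast$. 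The terms $A(m_{k,+},c_{k,+})h_k^\ast$ are linear in the unknown, so the difference splits into a term of the form $A\,\tilde h_k$, which is absorbed into the operator, plus a source that is small in $L^\infty$.
\end{enumerate}

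The main obstacle is step 2. The truncations $m_+$ and $c_+$ are only Lipschitz, so differences such as $\nabla m_{k,+}-\nabla m_+$ need not converge uniformly: the gradient of the positive part jumps across $\{m=0\}$.

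The first fix to try is to avoid expanding the divergence. Instead, treat $\nabla\cdot F_k$ with
\[
F_k=\chi(m_{k,+},c_{k,+})\nabla c_k^\ast-\chi(m_+,c_+)\nabla c^\ast,
\]
and note that $F_k\to0$ uniformly because $\chi$ is Lipschitz and $\nabla c_k^\ast\to\nabla c^\ast$ uniformly. Then use the regularity for divergence-form sources in \cite{LSU88}: $\|\tilde m_k\|_{L^\infty}+\|\nabla\tilde m_k\|_{L^q}\le C\|F_k\|_{L^q}$, with $q$ large. This gives $\tilde m_k\to0$ in a weaker space.

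To upgrade to $C^{0,1}$, combine this with the uniform $W^{1,2}_p$ bound \eqref{S_precompact} and the compact embedding. Every subsequence then has a further subsequence converging in $C^{0,1}$, and the limit must be $m^\ast$ by uniqueness for the linear problem. Hence the whole sequence converges. The same compactness-plus-uniqueness argument can be used in all steps if the direct estimates become cumbersome.
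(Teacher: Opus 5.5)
Your proof is correct, but it takes a different route from the paper's. The paper argues by explicit difference estimates. For two inputs it bounds $c_1^\ast-c_2^\ast$, $m_1^\ast-m_2^\ast$, $d_1^\ast-d_2^\ast$ and $h_1^\ast-h_2^\ast$ in turn in $C^{0,1}$, using maximal regularity, the embedding of Lemma~\ref{embedding_lemma} and pointwise inequalities on the source differences. This produces Lipschitz-type bounds such as \eqref{stima_m1_m2}.

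For the chemotactic term the paper expands the divergence exactly as in your first attempt. This produces the term $\chi_0|\nabla(m_{1+}-m_{2+})|\,|\nabla c_2^\ast|$, which it then tacitly controls by $\|m_1-m_2\|_{C^{0,1}}$; it also drops the $c$-dependence of $\chi$. This is precisely the point you flag: $u\mapsto u_+$ is not Lipschitz at the level of gradients, so that Lipschitz estimate is not justified as written.

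Your argument avoids this issue. You use the uniform $W^{1,2}_p$ bound \eqref{S_precompact} along the convergent sequence and the compact embedding into $C^{0,1}$. You then identify every subsequential limit through the equation, where the chemotactic term passes to the limit in weak form because $F_k\to0$ uniformly. This gives only continuity, with no quantitative modulus, but continuity is all the Leray--Schauder theorem requires. For $c$, $d$ and $h$ the two arguments coincide in substance.

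You could also have kept the direct expansion. Maximal regularity only needs the source to converge in $L^p(\Omega_T)$, not uniformly. Moreover $\nabla m_{k,+}=\nabla m_k\,\mathbf{1}_{\{m_k>0\}}\to\nabla m_+$ almost everywhere, because $\nabla m=0$ a.e.\ on $\{m=0\}$, and this holds with a uniform bound; the same applies to $\nabla c_{k,+}$. Dominated convergence then gives $L^p$ convergence of the expanded source, and neither the divergence-form regularity theory nor the subsequence argument is needed.
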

\begin{proof}
	Let $(h_1,d_1,m_1,c_1),(h_2,d_2,m_2,c_2)\in [C^{0,1}]^4$.
	
	We start by estimating the difference $c^\ast_1-c^\ast_2$. 
	From \eqref{eq_c_ast}, using maximal regularity (see \cite{LSU88}) together with the embedding
	properties of $W^{1,2}_p$ (Lemma \ref{embedding_lemma}) for $p>n+2$, we obtain:
	\begin{equation}\label{stima_c1_c2}
		\|c_{1}^\ast-c_2^\ast\|_{C^{0,1}}\leq  C^\ast_T \|d_{1+}-d_{2+}\|_{L_p(\Omega_T)}\leq C^\ast_T \|d_1-d_2\|_{C^{0,1}}	\, ,
	\end{equation}
	where $C^\ast_T$ denotes a constant depending on $T$, $\Omega$, $n$, the initial data and the parameters of the system.
	
	We now turn to the estimate of $m_1^\ast-m_2^\ast$, using \eqref{eq_m_ast}.
	The following pointwise estimate holds:
	\begin{multline*}
		\left|\nabla\cdot\left(\chi(m_{1+},c_{1+}) \nabla c_1^\ast-\chi(m_{2+},c_{2+}) \nabla c_2^\ast\right)\right|\leq\\
		\leq\chi_0|m_{1+}|\, |\Delta(c_1^\ast-c_2^\ast)| +\chi_0\,|\nabla m_{1+}|\, |\nabla c_1^\ast-\nabla c_2^\ast|+\\
		\qquad+\chi_0|m_{1+}-m_{2+}|\, |\Delta c_2^\ast| +\chi_0|\nabla (m_{1+} -m_{2+})|\, |\nabla c_2^\ast|
		\,. 	
	\end{multline*}
	Applying the same arguments used for $c_1^\ast-c_2^\ast$, we deduce:
	\begin{equation}\label{stima_m1_m2}
		\|m_{1}^\ast-m_2^\ast\|_{C^{0,1}}\leq C_T \left(\|d_1-d_2\|_{C^{0,1}}+\|m_1-m_2\|_{C^{0,1}}\right)	\,,
	\end{equation}
	where $C^\ast_T$ depends on $T$, $\Omega$, $n$, the initial data and the parameters of the system.
	
	We finally estimate $d_1^\ast-d_2^\ast$ and $h_1^\ast-h_2^\ast$ using
	\eqref{eq_d_ast} and \eqref{eq_h_ast}.
	We first establish the following auxiliary estimates:
	\begin{multline}\label{continuity_1}
		\left| A(m^\ast_{1},c_{1+})h_{1+} -A(m_{2}^\ast,c_{2+})h_{2+}\right|\leq\alpha(1+c_\epsilon)\left( h_{2+}|m_{1}^\ast-m_{2}^\ast|+ |m_{1}^\ast||h_{1+}-h_{2+}|\right)+\\
		+\alpha\, h_{2+}	\,|m_1^\ast|\,|1-c_\epsilon||c_{1+}-c_{2+}|\,,
	\end{multline}	
	\begin{multline}\label{continuity_2}
		\left|\rho d_{1+}\overline{(h_{1}+d_{1})}_+-\rho d_{2+}\overline{(h_{2}+d_{2})}_+\right|\leq\\
		\leq\rho d_{2+}\left(|h_{1+}-h_{2+}|+|d_{1+}-d_{2+}|\right)+\rho(h_{1+}+d_{1+})|d_{1+}-d_{2+}|\,,
	\end{multline}
	\begin{multline}\label{continuity_3}
		\left| A(m_{1+},c_{1+})h_1^\ast -A(m_{2+},c_{2+})h_2^\ast\right|\leq
		\alpha(1+c_\epsilon)\left( |h_{2}^\ast||m_{1+}-m_{2+}|+ m_{1+}|h_{1}^\ast-h_{2}^\ast|\right)+\\
		+\alpha\, m_{1+}\,|h_{2}^\ast|	\,|1-c_\epsilon||c_{1+}-c_{2+}|\,,
	\end{multline}
	\begin{equation}\label{continuity_4}
		|\overline{(h_{1}+d_{1})}_+)-\overline{(h_{2}+d_{2})}_+| \leq(|h_{1}-h_{2}|+|d_{1}-d_{2}|)\,,
	\end{equation}
	\begin{multline}\label{continuity_5}
		\left|\rho d_{1}^\ast(1-\overline{(h_{1}+d_{1})}_+)-\rho d_{2}^\ast(1-\overline{(h_{2}+d_{2})}_+)\right|\leq\\
		\leq\rho |d_2^\ast|\left(|h_{1}-h_{2}|+|d_{1}-d_{2}|\right)+\rho(1+h_{1+}+d_{1+}))|d_{1}^\ast-d_{2}^\ast|\,.
	\end{multline}
	
	Using \eqref{continuity_1}--\eqref{continuity_5} together with the same arguments as before,
	we obtain:
	\begin{equation}\label{stima_d1_d2}
		\|d_{1}^\ast-d_2^\ast\|_{C^{0,1}}\leq C_T \left(\|h_1-h_2\|_{C^{0,1}}+\|d_1-d_2\|_{C^{0,1}}+\|m_1-m_2\|_{C^{0,1}}+\|c_1-c_2\|_{C^{0,1}}\right)	\,,
	\end{equation}
	and
	\begin{equation}\label{stima_h1_h2}
		\|h_{1}^\ast-h_2^\ast\|_{C^{0,1}}\leq C_T \left(\|h_1-h_2\|_{C^{0,1}}+\|d_1-d_2\|_{C^{0,1}}+\|m_1-m_2\|_{C^{0,1}}+\|c_1-c_2\|_{C^{0,1}}\right)	\,,
	\end{equation}
	where $C^\ast_T$ depends on $T$, $\Omega$, $n$, the initial data and the parameters of the system.
	
	Combining \eqref{stima_c1_c2}, \eqref{stima_m1_m2}, \eqref{stima_d1_d2} and \eqref{stima_h1_h2},
	we conclude that the map $S$ is continuous.
\end{proof}
%
\subsubsection{The set \texorpdfstring{$\Phi$}{Phi}}\label{subsubsec:set_phi}

To apply the Leray--Schauder theorem, we introduce the set $\Phi$:
\begin{equation}\label{set}
	\Phi=\left\{(h,d,m,c)\in [C^{0,1}]^4: S(h,d,m,c,\lambda)=(h,d,m,c),\;   0<\lambda\leq 1 \right\}.	
\end{equation}
By definition, if $(h,d,m,c)\in \Phi$, then
$h=\lambda h^\ast$, $d=\lambda d^\ast$, $m=\lambda m^\ast$ and $c=\lambda c^\ast$,
where $h^\ast$, $d^\ast$, $m^\ast$ and $c^\ast$ solve
\eqref{eq_h_ast}, \eqref{eq_d_ast}, \eqref{eq_m_ast} and \eqref{eq_c_ast}, respectively.

Using again the definition of $\Phi$, and multiplying equations
\eqref{eq_h_ast}, \eqref{eq_d_ast}, \eqref{eq_m_ast} and \eqref{eq_c_ast} by $\lambda$,
we deduce that, if $(h,d,m,c)\in \Phi$, then $(h,d,m,c)$ satisfies the following system:
\begin{equation}\label{eq:dimensionless_Phi_1}
	\begin{cases}
		\displaystyle
		\partial_t h=
		\Delta h - h-A(m_+, c_+)h +\lambda\sigma(1-\overline{(h+d)}_+)+\rho d(1-\overline{(h+d)}_+),& (t,x)\in \Omega_T
		\\
		\displaystyle
		\partial_t d=
		D_d\Delta d -(\delta+\rho) d+ A(m, c_+)h+\rho \lambda d_+\overline{(h+ d)}_+,&(t,x)\in \Omega_T
		\\
		\displaystyle
		\partial_t m=
		D_m \Delta m  -  \nabla\cdot\left(\chi(m_+,c_+) \nabla c\right)+\nu (\lambda d_+ - m),&(t,x)\in \Omega_T
		\\
		\displaystyle
		\partial_t c=D_c
		\Delta c -\mu c+ r\lambda d_+\,,& (t,x)\in \Omega_T
	\end{cases}
\end{equation}
with boundary conditions:
\begin{equation}\label{BC_Phi_1}
	\mathbf{n}\cdot \bm{\nabla} h(t,s) =\mathbf{n}\cdot \bm{\nabla} d(t,s)=\mathbf{n}\cdot \bm{\nabla} m(t,s)=\mathbf{n}\cdot \bm{\nabla} c(t,s)=0\,\,\quad  (t,s)\in (0,T)\times\partial\Omega\
\end{equation}
and initial conditions:
\begin{equation}\label{IC_Phi_1}
	\begin{cases}
		h(0,x)=\lambda h_{in}(x) & x\in \Omega\,
		\\
		d(0,x)=\lambda d_{in}(x) &  x\in \Omega\,
		\\
		m(0,x)=\lambda m_{in}(x) & x\in \Omega\,
		\\
		c(0,x)=\lambda c_{in}(x) &  x\in \Omega\,
	\end{cases}
\end{equation}
which are non-negative and satisfy the saturation condition \eqref{IC_saturation_condition}.
We first establish the following Lemma.

\begin{lemma}\label{lemma_positiv}
	If $(h,d,m,c)$ are in $\Phi$, then they are non-negative. 
\end{lemma}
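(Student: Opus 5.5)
Since any element of $\Phi$ solves \eqref{eq:dimensionless_Phi_1}--\eqref{IC_Phi_1} and lies in $C^{0,1}$, and in fact in $W^{1,2}_p(\Omega_T)$ by \eqref{S_precompact}, the plan is a Stampacchia-type argument. We test each equation against the negative part $v_-=\max(0,-v)$ of the corresponding component, integrate over $\Omega$, use the Neumann conditions \eqref{BC_Phi_1} to discard boundary terms, and close with Gronwall's lemma. The initial data $\lambda h_{in},\lambda d_{in},\lambda m_{in},\lambda c_{in}$ are non-negative, so every $\|v_-(0)\|_{L^2(\Omega)}=0$. The order of the components is dictated by the coupling: first $c$, then $m$, then $d$, finally $h$.

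\emph{Step 1 ($c$).} Multiplying the $c$-equation by $-c_-$ gives
\[
\tfrac12\tfrac{d}{dt}\|c_-\|_{L^2(\Omega)}^2+D_c\|\nabla c_-\|_{L^2(\Omega)}^2+\mu\|c_-\|_{L^2(\Omega)}^2=-r\lambda\int_\Omega d_+c_-\le 0 .
\]
Hence $c_-\equiv0$. This step is autonomous because the source contains $d_+$ and not $d$.

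\emph{Step 2 ($m$).} Multiplying by $-m_-$, the reaction part contributes $-\nu\lambda\int d_+m_-\le0$ and $-\nu\|m_-\|^2$. The chemotactic term produces $-\int\chi(m_+,c_+)\nabla c\cdot\nabla m_-$. Since $\chi(m_+,c_+)=\chi_0 m_+c_+/(\kappa+c_+)$ vanishes wherever $m_-\neq0$, and $\nabla m_-=0$ a.e.\ on $\{m\ge0\}$, this integrand is zero a.e. Thus $m_-\equiv0$.

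\emph{Step 3 ($d$).} The $d$-equation contains $A(m,c_+)h$, in which $h$ is not truncated, so its sign is not controlled a priori. I would use $m\ge0$ from Step 2 and $c_+\ge0$, so that $A\ge0$, and then handle $h$ in one of two ways. The first is to go back to the sequential structure of \eqref{eq_d_ast}, where the source is $A(m^\ast,c_+)h_+\ge0$; since $m^\ast=m/\lambda\ge0$, this term is non-negative. The second, if the identification through $\lambda$ is needed, is to couple Steps 3 and 4. In either case, testing against $-d_-$ gives $\frac12\frac{d}{dt}\|d_-\|^2\le -(\delta+\rho)\|d_-\|^2+\int A\,h_-\,d_-$, with the term $\rho\lambda d_+\overline{(h+d)}_+d_-$ vanishing identically.

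\emph{Step 4 ($h$).} Testing against $-h_-$, the terms $-h$ and $-A h$ give non-positive contributions $-\|h_-\|^2-\int A h_-^2$. The term $\lambda\sigma(1-\overline{(h+d)}_+)$ is non-negative because $\overline{(h+d)}_+\in[0,1]$. The remaining term is $\rho d(1-\overline{(h+d)}_+)$, where $-d_-$ appears. Using Young's inequality together with the $L^\infty$ bound on $A$ available from $m,c\in C^{0,1}$, I would obtain
\[
\tfrac{d}{dt}\bigl(\|h_-\|^2+\|d_-\|^2\bigr)\le C\bigl(\|h_-\|^2+\|d_-\|^2\bigr),
\]
and Gronwall's lemma then yields $h_-=d_-=0$.

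The main obstacle is the coupling in Steps 3--4: the untruncated $h$ in the $d$-equation and the untruncated $d$ in the $h$-source prevent a purely sequential argument. I would first try to exploit the fixed-point relations ($m=\lambda m^\ast$, etc.) to reduce to the truncated sources $h_+$ and $d_+$. If that fails, I would fall back on the joint Gronwall estimate, which requires only bounded coefficients, and these are supplied by membership in $C^{0,1}$.
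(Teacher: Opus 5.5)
Your proposal is correct and follows essentially the paper's argument: you test against negative parts in the order $c$, $m$, $d$, $h$, the chemotactic term drops out because $m_+m_-=0$, and the $d$-source is taken in its truncated form $A(m,c_+)h_+$. The paper's proof also uses this truncated form, despite the untruncated $h$ displayed in \eqref{eq:dimensionless_Phi_1}, so $d\ge 0$ and then $h\ge 0$ follow sequentially. The differences are minor: the paper multiplies by $v_-^2$ rather than $v_-$ and handles $c$ by a comparison argument, and your joint Gronwall fallback, while valid, is not needed once the truncated source is used.
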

\begin{proof}
	Since $c_{in}\geq0$ and $d_+\geq0$, then $c\geq0$.
	To prove that $m\geq 0$, we multiply the equation for $m$ by $m_{-}^2$, with $m_{-}=\max(0,-m)$ and integrate in space and time. Recalling that $m_{in}\geq 0$, we obtain:
	\begin{multline*}
		-\frac{1}{3}\int_\Omega m_{-}^3 dx =2D_m \int_0^t\int_\Omega m_{-}|\nabla m_{-}|^2 dx dt -\nu\int_0^t \int_\Omega m m_{-}^2dx dt\\+\nu \lambda \int_0^t \int_\Omega d_{+} m_{-}^2dx dt
		- 2\chi_0 \int_0^t\int_\Omega \frac{cm_{-}m_+}{1+c} 
		\nabla m_- \cdot \nabla c dx dt \\
		\geq 2 \int_0^t\int_\Omega m_{-}|\nabla m_{-}|^2 dx dt+\nu\int_0^t \int_\Omega  m_{-}^3dx dt \geq 0\,,
	\end{multline*}
	and consequently $m_{-}=0$ for each $t\in [0,T]$.
	
	We proceed analogously for $h$ and $d$, obtaining:
	\begin{multline*}
		-\frac{1}{3}\int_\Omega d_{-}^3 dx = 2D_d \int_0^t\int_\Omega d_{-}|\nabla d_{-}|^2 dx dt +(\delta+\rho)\int_0^t \int_\Omega d_{-}d_{-}^2 + \\+\lambda\rho \int_0^t \int_\Omega d_{+}\overline{(h+d)}_+ d_{-}^2dx dt + \int_0^t \int_\Omega A(m,c_+)h_+ d_{-}^2dx dt\geq  0\,,
	\end{multline*}
	\begin{multline*}
		-\frac{1}{3}\int_\Omega h_{-}^3 dx =2 \int_0^t\int_\Omega h_{-}|\nabla h_{-}|^2 dx dt +\lambda\sigma\int_0^t \int_\Omega (1-\overline{(h+d)}_+) h_{-}^2dx dt\\+ \int_0^t \int_\Omega h_{-}^3dx dt+ \int_0^t \int_\Omega A(m_{+},c_+)h_{-}^3dx dt+\rho \int_0^t \int_\Omega d\,(1-\overline{(h+d)}_+) h_{-}^2dx dt
		\geq  0\,,
	\end{multline*}
	using that $h_{in}\geq0$ and $d_{in}\geq0$. Hence $h_{-}=0$ and $d_{-}=0$ for each $t\in [0,T]$.
\end{proof}

By Lemma \ref{lemma_positiv}, system \eqref{eq:dimensionless_Phi_1} reduces to:
\begin{equation}\label{eq:dimensionless_Phi_2}
	\begin{cases}
		\displaystyle
		\partial_t h=
		\Delta h - h- A(m, c)h +\lambda\sigma(1-\overline{(h+d)}_+)+\rho d(1-\overline{(h+d)}_+),&  (t,x)\in \Omega_T
		\\
		\displaystyle
		\partial_t d=
		D_d\Delta d -(\delta+\rho) d+ A(m, c)h+\rho \lambda d\overline{(h +d)}_+,&(t,x)\in \Omega_T
		\\
		\displaystyle
		\partial_t m=
		D_m \Delta m  -  \nabla\cdot\left(\chi(m,c) \nabla c\right)+\nu (\lambda d - m),&(t,x)\in \Omega_T
		\\
		\displaystyle
		\partial_t c=D_c
		\Delta c -\mu c+ r\lambda d\,,& (t,x)\in \Omega_T\,
	\end{cases}
\end{equation}
with boundary conditions \eqref{BC_Phi_1} and non-negative initial data \eqref{IC_Phi_1} with condition \eqref{IC_saturation_condition}.


\subsubsection{Boundedness of $\Phi$}\label{subsubsec:boundness_phi}

We now show that $\Phi$ is bounded in $[C^{0,1}]^4$ (see Lemma \ref{lemma_Phi_bound} below). 
To this end, we first establish two auxiliary lemmas describing the regularity of $h$, $d$, $m$, and $c$ for $(h,d,m,c)\in \Phi$.

We then derive a sequence of a priori estimates, which will be used to prove the boundedness of the set of fixed points and to conclude the existence argument.

\begin{lemma}\label{lemma:L1}
	Let $(h,d,m,c)\in \Phi$. Under the hypotheses of Theorem \ref{theorem:globalexistence}, there exists a constant $H_1$, depending on $T$, $\Omega$, $n$, the initial data and the parameters of the system, but not on $\lambda$, such that
	\begin{equation}\label{stima_L1}
		\sup_{t\in[0,T]}\|(h(t,\cdot),d(t,\cdot),m(t,\cdot),c(t,\cdot))\|_{L^1(\Omega)}\leq H_1\,.
	\end{equation}
\end{lemma}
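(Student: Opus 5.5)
Since the elements of $\Phi$ are non-negative by Lemma~\ref{lemma_positiv}, the $L^1$ norms coincide with the spatial integrals $\int_\Omega h$, $\int_\Omega d$, $\int_\Omega m$, $\int_\Omega c$. The plan is therefore to integrate the reduced system \eqref{eq:dimensionless_Phi_2} over $\Omega$. The Neumann conditions \eqref{BC_Phi_1} make the diffusion terms vanish. They also make the chemotactic divergence $\nabla\cdot(\chi(m,c)\nabla c)$ vanish, since its flux is proportional to $\mathbf n\cdot\nabla c=0$ on $\partial\Omega$. What remains is a system of ODE inequalities for the integrals. We then close them with Gronwall-type arguments in the order $h+d$, then $c$, then $m$, with all constants independent of $\lambda\in(0,1]$.

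\textbf{Step 1: the tissue variables.} Add the equations for $h$ and $d$. The immune term $A(m,c)h$ cancels. Using $0\le \overline{(h+d)}_+\le 1$, $\lambda\le1$ and $d\ge0$, the remaining reaction terms are bounded by
\[
\lambda\sigma(1-\overline{(h+d)}_+)+\rho d(1-\overline{(h+d)}_+)-h-(\delta+\rho)d+\rho\lambda d\,\overline{(h+d)}_+\le \sigma+\rho d-h-\delta d .
\]
Setting $Y(t)=\int_\Omega(h+d)\,dx$, this gives
\[
Y'\le \sigma|\Omega|+\rho\,Y .
\]
Gronwall's lemma then yields $\sup_{[0,T]}Y\le (Y(0)+\sigma|\Omega|T)e^{\rho T}$, and $Y(0)\le|\Omega|$ by \eqref{IC_saturation_condition}. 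One could even obtain a $T$-independent bound by exploiting the $-h-\delta d$ terms more carefully. That refinement is not needed here, since $H_1$ is allowed to depend on $T$.

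\textbf{Step 2: chemokines and macrophages.} Integrating the $c$-equation gives
\[
\frac{d}{dt}\int_\Omega c=-\mu\int_\Omega c+r\lambda\int_\Omega d\le -\mu\int_\Omega c+rY .
\]
Hence $\int_\Omega c\le \int_\Omega c_{in}+rT\sup Y$, or a sharper bound via variation of constants. The $m$-equation is handled the same way after the chemotactic flux has been removed by the Neumann condition:
\[
\frac{d}{dt}\int_\Omega m=\nu\lambda\int_\Omega d-\nu\int_\Omega m ,
\]
which is bounded in terms of $\sup Y$ and $\int_\Omega m_{in}$.

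Collecting the four bounds defines $H_1$. It depends only on $T$, $|\Omega|$, the $L^1$ norms of the initial data and the parameters, and not on $\lambda$.

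\textbf{Main obstacle.} The only delicate point is justifying the integration by parts and the vanishing of the chemotactic boundary term at the available regularity. Elements of $\Phi$ are in $W^{1,2}_p$ with $p>n+2$, and $\chi(m,c)\nabla c$ has zero normal component on $\partial\Omega$. We expect this to be routine.
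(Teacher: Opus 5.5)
Your proposal is correct and follows essentially the same route as the paper. Both arguments add the $h$ and $d$ equations so that $A(m,c)h$ cancels and apply Gronwall, then integrate the $m$ and $c$ equations, where diffusion and the chemotactic flux vanish by the Neumann conditions, and apply Gronwall again.

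The only difference is in the tissue step. The paper uses the fact that the $\rho$-terms combine to $-\rho(1-\lambda)d\,\overline{(h+d)}_+\le 0$, which gives $Y'\le\sigma|\Omega|$ and a bound linear in $T$. Your cruder estimate $Y'\le\sigma|\Omega|+\rho Y$ yields an extra $e^{\rho T}$ factor, which is harmless since $H_1$ may depend on $T$.
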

\begin{proof}
	
Integrating the equations for $h$ and $d$ in \eqref{eq:dimensionless_Phi_2} and summing, we obtain:
\begin{equation}
	\partial_t\int_\Omega (h+d) dx 
	\leq \sigma\int_\Omega(1-\overline{(h+d)}_+) dx \leq \sigma|\Omega|\,.
\end{equation}
Here $|\Omega|$ denotes the measure of the domain. Applying Gronwall's lemma and using that $h_{in}+d_{in}\leq 1$, we conclude that $h$ and $d$ are bounded in $L^1(\Omega)$.
Integrating the equations for $m$ and $c$ in \eqref{eq:dimensionless_Phi_2}, we obtain:
\begin{equation}
	\frac{d}{dt}\int_\Omega m \,dx\leq -\nu\int_\Omega m \,dx+\nu\|d\|_{L^1(\Omega)}\,,
\end{equation}
and
\begin{equation}
	\frac{d}{dt}\int_\Omega c \, dx\leq -\mu\int_\Omega c \,dx+r\|d\|_{L^1(\Omega)}\,.
\end{equation}
Applying Gronwall's lemma, we obtain \eqref{stima_L1}.
	
\end{proof}

We now derive a regularity estimate for $c$, which will be used in the sequel.
\begin{lemma}\label{lemma:stima_C}
	Let $(h,d,m,c)\in \Phi$. Under the hypotheses of Theorem \ref{theorem:globalexistence}, there exists a constant $H^\ast$, depending on $\Omega$, $T$, $n$, the initial data and the parameters of the system, but not on $\lambda$, such that
	\begin{equation}\label{stima_grad_c}
		\|\nabla c\|_{L^{(n+2)/(n+1)}(\Omega_T)}\leq H^\ast\,.
	\end{equation}
\end{lemma}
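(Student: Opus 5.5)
The plan is to treat the $c$-equation in \eqref{eq:dimensionless_Phi_2} as a linear heat equation with an $L^1$ source and apply the classical $L^1$ parabolic regularity theory. Since $(h,d,m,c)\in\Phi$, we have $d\ge0$ by Lemma~\ref{lemma_positiv}, and Lemma~\ref{lemma:L1} gives
\[
\|\lambda r d\|_{L^1(\Omega_T)}\le rTH_1 .
\]
The initial datum $\lambda c_{in}$ is bounded in $W^{2-2/\tilde p,\tilde p}(\Omega)\subset L^\infty(\Omega)$, uniformly in $\lambda\in(0,1]$. So $c$ solves
\[
\partial_t c - D_c\Delta c + \mu c = f,\qquad \text{homogeneous Neumann conditions},
\]
with $f$ bounded in $L^1(\Omega_T)$ independently of $\lambda$. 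The known estimate for this problem is that $\nabla c$ is bounded in $L^q(\Omega_T)$ for every $q<(n+2)/(n+1)$, and also in the weak space $L^{(n+2)/(n+1),\infty}$. This result is due to Boccardo--Gallouët, and is also derivable from the Neumann heat kernel bound
\[
|\nabla G(t,x,y)|\lesssim t^{-(n+1)/2}e^{-c|x-y|^2/t}.
\]

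To get the endpoint exponent stated in the lemma, I would split $c=c_1+c_2$, where $c_1$ carries the initial datum and $c_2$ carries the source.

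\emph{Initial-datum part.} By standard semigroup bounds on the bounded smooth domain, $c_1$ is smooth for $t>0$ with $\nabla c_1\in L^{\tilde p}(\Omega_T)$. This piece causes no trouble.

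\emph{Source part.} Write the Duhamel formula
\[
c_2=\int_0^t\!\!\int_\Omega G(t-s,x,y)\,f(s,y)\,dy\,ds .
\]
Differentiate in $x$ and take the $L^q(\Omega_T)$ norm. By Minkowski's inequality the bound reduces to controlling $\|\nabla G\|_{L^q(\Omega_T)}$ uniformly in $y$. The parabolic scaling gives $|\nabla G|\sim t^{-(n+1)/2}$ on a parabolic region of volume $t^{1+n/2}$. This kernel is in $L^q$ exactly when $q<(n+2)/(n+1)$, and it lies only in weak-$L^{(n+2)/(n+1)}$ at the endpoint.

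\emph{Main obstacle: the endpoint.} An $L^1$ source alone cannot give a strong endpoint estimate, so extra structure is needed. I would try two routes, in this order.

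First, I would bootstrap using the extra information that $d\le 1$, which follows from the saturation constraint on fixed points. If this is available (or provable at this stage), then $f\in L^\infty(\Omega_T)$ with norm at most $r$. Maximal regularity (Lemma~\ref{embedding_lemma}) then gives $c\in W^{1,2}_p$ for every $p$, so $\nabla c\in L^\infty(\Omega_T)$. The required $L^{(n+2)/(n+1)}$ bound follows trivially because $\Omega_T$ is bounded.

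If the saturation bound is not yet available at this point in the argument, I would fall back on duality. Test the $c$-equation against the solution $\psi$ of the backward adjoint problem with right-hand side $\operatorname{div} g$, where $g\in L^{n+2}(\Omega_T)$. Maximal regularity for the adjoint problem gives $\psi$ in a space that embeds into $L^\infty$ in the borderline case. That borderline embedding is exactly where the proof is delicate, and I would settle for any $q<(n+2)/(n+1)$ if that is all the later steps require.

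All constants depend only on $T,\Omega,n$, the initial data and the parameters, and never on $\lambda$.
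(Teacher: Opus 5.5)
Your proposal does not prove the lemma as stated. Both of your routes stall at the endpoint, and you end by settling for $q<(n+2)/(n+1)$, which is a weaker result. The obstacle comes from discarding information you already had. You reduced the source to $\|\lambda r d\|_{L^1(\Omega_T)}\le rTH_1$. For a general $L^1(\Omega_T)$ right-hand side the gradient is indeed only in weak-$L^{(n+2)/(n+1)}$, which is exactly why your duality fallback breaks at the borderline embedding.

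Your first route (borrowing $d\le 1$) cannot rescue this. The paper proves $h+d\le1$ only later, in Lemma~\ref{lemma_saturation}, and it cannot simply be bootstrapped. The invariant-region criterion for the triangle $\{h,d\ge0,\ h+d\le1\}$ requires $(1,1)$ to be a left eigenvector of $\mathrm{diag}(1,D_d)$, i.e.\ $D_d=1$. For $D_d\neq1$, take data with $h_{in}+d_{in}\equiv1$ and $h_{in}$ oscillating at high frequency. Then $\partial_t(h+d)=(1-D_d)\Delta h_{in}-h_{in}-\delta d_{in}>0$ at some points at $t=0$, so the constraint genuinely fails there.

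The missing idea is that Lemma~\ref{lemma:L1} gives more than an $L^1(\Omega_T)$ bound. It gives $\sup_{t\in[0,T]}\|d(t)\|_{L^1(\Omega)}\le H_1$, and the paper's proof uses precisely this. Write the Duhamel formula for $\nabla c$ and apply the Neumann semigroup estimate $\|\nabla e^{tD_c\Delta}f\|_{L^q(\Omega)}\le C_T\,t^{-\frac12-\frac n2(1-\frac1q)}\|f\|_{L^1(\Omega)}$ for $0<t\le T$. With $q=\frac{n+2}{n+1}$ the time singularity is $(t-s)^{-\frac{n+1}{n+2}}$, which is integrable, so
\[
\sup_{t\in[0,T]}\|\nabla c(t)\|_{L^{\frac{n+2}{n+1}}(\Omega)}\le \sup_{t\in[0,T]}\|\nabla e^{tD_c\Delta}c_{in}\|_{L^{\frac{n+2}{n+1}}(\Omega)}+C_T\,r\,H_1\int_0^T s^{-\frac{n+1}{n+2}}\,ds
\]
uniformly in $\lambda\in(0,1]$. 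Integrating in time gives the claim.

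In your kernel language: Minkowski over $(s,y)$ requires $\nabla G\in L^q(\Omega_T)$. At the endpoint this amounts to $\int_0^T t^{-1}\,dt<\infty$, which fails. With $L^\infty_tL^1_x$ data you only need $t\mapsto\|\nabla G(t,\cdot,y)\|_{L^q_x}\sim t^{-\frac{n+1}{n+2}}$ to lie in $L^1(0,T)$. In fact such data give $\nabla c\in L^\infty(0,T;L^q(\Omega))$ for every $q$ with $n(1-\frac1q)<1$, so $(n+2)/(n+1)$ is not a critical exponent here at all.
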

\begin{proof}
	From the equation for $c$ in \eqref{eq:dimensionless_Phi_2}, we may write
	\[
	\nabla c(t)=e^{-\mu t}\nabla e^{tD_c\Delta}(\lambda c_{in})
	+r\lambda\int_0^t e^{-\mu(t-s)}\nabla e^{(t-s)D_c\Delta}d(s)\,ds.
	\]
	Using classical \(L^p\)-\(L^{q}\) estimates (see \cite{GGS10}),	we obtain
	\[
	\|\nabla c(t)\|_{L^{\frac{n+1}{n+2}}(\Omega)}
	\le
	e^{-\mu t}\|\nabla e^{tD_c\Delta} c_{in}\|_{L^{\frac{n+1}{n+2}}(\Omega)}
	+
	Cr\int_0^t e^{-\mu(t-s)}(t-s)^{-\frac{n+1}{n+2}}\|d(s)\|_{L^1(\Omega)}\,ds.
	\]
	
	The first term is bounded uniformly on \([0,T]\) by the regularity of the initial datum, while the second term by \eqref{stima_L1}.
	Therefore,
	\[
	\sup_{t\in[0,T]}\|\nabla c(t)\|_{L^{\frac{n+1}{n+2}}(\Omega)}\le C,
	\]
	with \(C\) independent of \(\lambda\) and this concludes the proof.
\end{proof}

We now extend the previous estimates to $L^p$ bounds for the solution components by means of a duality argument.

\begin{lemma}\label{lemma_Lp}
	Let $(h,d,m,c)\in \Phi$. Under the hypotheses of Theorem \ref{theorem:globalexistence}, for any $1 \leq p < \infty$, there exists a constant $H_p$, depending on $p$, $\Omega$, $T$, $n$, the initial data and the parameters of the system, but not on $\lambda$, such that:
	\begin{equation}\label{stima_Lp}
		\|h\|_{L^{p}(\Omega_T)}+\|d\|_{L^{p}(\Omega_T)}+\|m\|_{L^{p}(\Omega_T)}\leq H_p\,.
	\end{equation}
\end{lemma}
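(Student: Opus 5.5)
The plan is to first get $L^p$ control of the tissue variables $h,d$ through the \emph{sum} $w:=h+d$, in which the immune-damage term $A(m,c)h$ cancels. I will then transfer the resulting integrability to $c$ by maximal regularity, and finally to $m$ by an $L^p$-energy estimate in which the chemotactic flux is absorbed.

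\emph{Step 1 (duality for $h+d$).} By Lemma~\ref{lemma_positiv} all components are non-negative, so I work with \eqref{eq:dimensionless_Phi_2}. Adding the $h$ and $d$ equations, $A(m,c)h$ cancels. Since $0\le \overline{(h+d)}_+\le 1$ and $\lambda\le1$, the remaining reaction terms are bounded above by $\sigma+\rho d-(\delta+\rho)d\le\sigma$. Setting $z:=h+D_d d$, this gives
\[
\partial_t w-\Delta z\le \sigma,\qquad z=a\,w \ \text{ with } \ a:=\frac{h+D_d d}{h+d}\in[\min(1,D_d),\max(1,D_d)].
\]
This is exactly the setting of Pierre's duality lemma: testing against the solution of the backward problem $-\partial_t\phi-a\Delta\phi=\theta$ with Neumann data yields $\|w\|_{L^2(\Omega_T)}\le C$, with $C$ independent of $\lambda$ because the initial data are $\lambda h_{in}+\lambda d_{in}\le1$. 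I then plan to upgrade this exponent with the improved duality argument of Cañizo--Desvillettes--Fellner, which gives $L^{2+\varepsilon}$. Iterating the duality estimate with the dual problem solved in $W^{1,2}_{q'}$ for decreasing $q'$ (the coefficient $a$ is bounded and bounded away from $0$), together with the one-sided bound $\le\sigma$ on the right-hand side, I expect to reach $w\in L^p(\Omega_T)$ for every $p<\infty$. Non-negativity then gives the bounds for $h$ and $d$ separately.

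\emph{Step 2 (the chemokine $c$).} The $c$ equation is linear with source $r\lambda d\in L^p(\Omega_T)$ for every $p$. Maximal regularity therefore gives $\|c\|_{W^{1,2}_p(\Omega_T)}\le C_p$ for all $p$. Taking $p>n+2$ in Lemma~\ref{embedding_lemma}, this yields $\nabla c\in L^\infty(\Omega_T)$, uniformly in $\lambda$. This improves the weaker bound of Lemma~\ref{lemma:stima_C}.

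\emph{Step 3 (the macrophages $m$).} I multiply the $m$ equation by $m^{p-1}$ and integrate. The chemotactic term gives
\[
(p-1)\chi_0\int_\Omega \frac{c}{\kappa+c}\,m^{p-1}\nabla m\cdot\nabla c\,dx .
\]
By Young's inequality and the bounds $0\le c/(\kappa+c)\le1$ and $\|\nabla c\|_\infty\le C$, this is at most
\[
\tfrac{D_m(p-1)}{2}\int_\Omega m^{p-2}|\nabla m|^2\,dx+C_p\int_\Omega m^p\,dx .
\]
The first part is absorbed by the diffusion. The source term $\nu\lambda\int_\Omega d\,m^{p-1}\,dx$ is handled by Young's inequality together with Step 1. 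Gronwall's lemma then gives $\sup_t\|m\|_{L^p(\Omega)}\le C_p$, and in particular the $L^p(\Omega_T)$ bound \eqref{stima_Lp}.

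\emph{Main obstacle.} The difficulty is Step 1, namely going beyond $L^{2+\varepsilon}$ when $D_d\neq1$, because the individual equations for $h$ and $d$ involve the uncontrolled coupling $A(m,c)h$. If the duality iteration stalls, my fallback is a bootstrap. With $d\in L^{2+\varepsilon}$, Step 2 already improves the integrability of $\nabla c$, and Step 3 (using Gagliardo--Nirenberg in place of $\|\nabla c\|_\infty$) improves that of $m$. The comparison $\partial_t h-\Delta h\le\sigma+\rho d$ controls $h$ in terms of $d$. Feeding $A(m,c)h\le\alpha(1+c_\epsilon)mh$ into the $d$ equation then gains integrability for $d$ at each round, and I would iterate until every finite $p$ is reached.
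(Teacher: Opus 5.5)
\textbf{There is a genuine gap in Step~1:} your argument does not get past $L^{2+\varepsilon}$, and Steps~2 and~3 both rest on it.

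\emph{Why the duality iteration fails.} The duality bound for $\partial_t w-\Delta(aw)\le\sigma$ needs $\|\Delta\phi\|_{L^{q'}(\Omega_T)}\le C\|\theta\|_{L^{q'}(\Omega_T)}$ for the dual problem $-\partial_t\phi-a\Delta\phi=\theta$. With $a=(h+D_d d)/(h+d)$ merely bounded and measurable, this is only guaranteed for $q'$ in a neighbourhood of $2$. The Ca\~{n}izo--Desvillettes--Fellner perturbation around the constant coefficient $(1+D_d)/2$ requires $C_{q'}\,|D_d-1|/(D_d+1)<1$. Here the heat maximal-regularity constant $C_{q'}$ equals $1$ at $q'=2$ and blows up as $q'\to1$. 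There is nothing to iterate: knowing $w\in L^{2+\varepsilon}$ does not make $a$ more regular, so the dual problems needed for large $p$ never become available.

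\emph{Why the fallback does not close.} It routes $d$ through $m$, so each round needs the drift $\chi_0\frac{c}{\kappa+c}\nabla c$ in the $m$-equation to be subcritical, i.e.\ $\nabla c\in L^q(\Omega_T)$ with $q>n+2$. From $d\in L^{2+\varepsilon}$, maximal regularity and Lemma~\ref{embedding_lemma} give this only when $2+\varepsilon>(n+2)/2$, essentially $n\le2$. The lemma is claimed for every $n$.

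\textbf{The missing observation is that $h$ is bounded a priori, independently of $m$.} For non-negative $h,d$, the factor $1-\overline{(h+d)}_+$ lies in $[0,1]$ and vanishes when $h+d\ge1$; when $h+d<1$ we have $d<1$. Hence $0\le(\lambda\sigma+\rho d)(1-\overline{(h+d)}_+)\le\sigma+\rho$. Since $-A(m,c)h\le0$, the $h$-equation gives $\partial_t h-\Delta h+h\le\sigma+\rho$, so $0\le h\le\max(1,\sigma+\rho)$ uniformly in $\lambda$.

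Now write the sum with the \emph{constant} coefficient $D_d$: $\partial_t w-D_d\Delta w=(1-D_d)\Delta h+R$, where $R\le\sigma$ as you already showed. Test this against the constant-coefficient dual solution $\psi_{D_d}\ge0$ of \eqref{eq_heat_rev_psi}:
\[
\int_0^T\!\!\int_\Omega w\,\theta\,dx\,dt\le\int_\Omega w(0)\,\psi_{D_d}(0)\,dx+|1-D_d|\,\|h\|_{L^p(\Omega_T)}\|\Delta\psi_{D_d}\|_{L^{p'}(\Omega_T)}+\sigma\|\psi_{D_d}\|_{L^1(\Omega_T)} .
\]
Constant-coefficient maximal regularity holds for every $1<p'<\infty$. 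Therefore $\|h\|_{L^p(\Omega_T)}+\|d\|_{L^p(\Omega_T)}\le C_p$ for all $p<\infty$, uniformly in $\lambda$, and your Steps~2 and~3 then go through as written.

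\textbf{Comparison with the paper.} Repaired this way, your sequential route (cancel $A(m,c)h$ in the sum, then $c$, then $m$ with $\nabla c\in L^\infty$) differs from the paper's. The paper runs a single duality argument on the separate equations with $\psi_1,\psi_{D_d},\psi_{D_m}$. It must therefore absorb the quadratic term $\int_0^T\!\int_\Omega mh\,\psi_{D_d}$ and the chemotactic term using only $L^1$-type bounds and the weak estimate \eqref{stima_grad_c}. Moreover, the exponent conditions \eqref{condition_s_xi}--\eqref{condition_p_prime} it relies on cannot all hold, since $(2-\xi)(n+2)>1$ for $\xi<1$. Your route never has to handle those terms.
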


\begin{proof}
The proof follows \cite{DGMT21,GLRSS24} (see also \cite{CDF14,DLMT15,DT15,MT20,Pie10}).
	
We prove \eqref{stima_Lp} for $p$ large enough, namely for $p=p{'}/(p{'}-1)$, where $p{'}$ satisfies $1<p{'}<\frac{n+2}{2}$. Consequently, we have $1+\frac{2}{n}<p<\infty$.
	
We proceed by a duality argument. Let $\theta\in L^{p{'}}(\Omega_T)$ be such that $\theta\geq0$ and $\|\theta\|_{L^{p{'}}(\Omega_T)}\leq 1$.
We denote by $\psi_D$ the unique solution of the following backward heat equation:
\begin{equation}\label{eq_heat_rev_psi}
	\begin{cases}
		\begin{array}{l}
			\frac{\partial}{\partial t}\psi_D+D\Delta \psi_D=-\theta \qquad (t,x)\in \Omega_T\,\\
			\nabla\psi_D\cdot \mathbf{n}=0 \qquad\qquad (t,s)\in (0,T)\times\partial\Omega\,\\
			\psi_D(T,x)=0 \qquad\qquad x\in\Omega\,
		\end{array}
	\end{cases}
\end{equation}
The function $\psi_D$ is non-negative. Moreover, by maximal regularity for the heat semigroup \cite{DGMT21,LSU88} and Lemma \ref{embedding_lemma}, we have:
\begin{equation}\label{psi_estimates}
	\|\partial_t \psi_D\|_{L^{p{'}}(\Omega_T)}+\|\psi_D\|_{L^{q_1}(\Omega_T)}+\|\nabla \psi_D\|_{L^{q_2}(\Omega_T)}\leq C_{p{'}}\,,
\end{equation}
where $q_1=(n+2)p{'}/(n+2-2p{'})$, $q_2=(n+2)p{'}/(n+2-p{'})$ and $C_{p{'}}$ is a positive constant depending on $\Omega$, $T$, $n$ and $p{'}$.
Multiplying the equation by $\theta$, integrating over space and time and applying \eqref{eq_heat_rev_psi} with $D=D_m$ together with integration by parts and \eqref{eq:dimensionless_Phi_2}, yields:
\begin{multline}\label{stima_duality_base_m}
		\int_0^T\int_\Omega m\,
		\theta\, dx dt \leq \int_\Omega m_{in}\psi_{D_m}(0,\cdot)\,dxdt+\int_0^T\int_\Omega \chi(m,c)\nabla c \nabla \psi_{D_m}\, dx dt\,\,+\\+\nu\int_0^T\int_\Omega d\,
		\psi_{D_m}\, dx dt\,.
\end{multline}
Proceeding as above, multiplying $\theta$ by $h+d$, integrating over space and time, and applying \eqref{eq_heat_rev_psi} with $D=1$ and $D=D_d$ together with integration by parts and \eqref{eq:dimensionless_Phi_2}, yields:
\begin{multline}\label{stima_duality_base_h_d}
	\int_0^T\int_\Omega (h+d)\,
	\theta\, dx dt \leq \int_\Omega h_{in}\,\psi_1(0,\cdot)\,dxdt+\int_\Omega d_{in}\,\psi_{D_d}(0,\cdot)\,dxdt+\rho\int_0^T\int_\Omega d\,
	\psi_{D_d}\, dx dt+\\
	+\alpha(1+c_\epsilon)\int_0^T\int_\Omega m\,h\, \psi_{D_d}\, dx dt\,\,+\int_0^T\int_\Omega (\sigma+\rho d)\,
	\psi_1\, dx dt\,.
\end{multline}
We now estimate each term in the above inequalities separately.

We begin with the initial term in \eqref{stima_duality_base_m}.
Using H\"older's inequality together with \eqref{psi_estimates} (see \cite{DGMT21}), we obtain:
\begin{multline}\label{first_term_m}
	\int_\Omega m_{in}\psi_{D_m}(0,x)dx \leq \|m_{in}\|_{L^p(\Omega)}\|\psi_{D_m}(0,x)\|_{L^{p{'}}(\Omega)}
	\\
	\leq \|m_{in}\|_{L^p(\Omega)} \,T^{1/p}\left\|\frac{\partial}{\partial_t}\psi_{D_m}\right\|_{L^{p{'}}(\Omega_T)}
	\leq \|m_{in}\|_{L^p(\Omega)} \,T^{1/p} \,C_{p{'}}\,.
\end{multline}
The same argument applied to the first term in \eqref{stima_duality_base_h_d} gives:
\begin{equation}\label{first_term_h_d}
	\int_\Omega h_{in}\psi_1(0,x)dx +\int_\Omega d_{in}\psi_{D_d}(0,x)dx\leq \left(\|h_{in}\|_{L^p(\Omega)}+\|d_{in}\|_{L^p(\Omega)}\right) \,T^{1/p} \,C_{p{'}}\,.
\end{equation}
In the remainder of the proof, $C^*$ denotes a positive constant depending on $p$, $\Omega$, $T$ and $n$, but independent of $\lambda$.

We now estimate the last term in \eqref{stima_duality_base_h_d}. Using \eqref{psi_estimates}, we have:
\begin{equation}\label{last_term_h_d}
	\int_0^T\int_\Omega \sigma\psi_{1} \, dx\,dt = \sigma \|\psi_{1}\|_{L^1(\Omega_T)}\leq \sigma C^* \|\psi_{1}\|_{L^{q_1}(\Omega_T)}\leq \sigma C^* C_{p{'}}\,.	
\end{equation}
We next estimate the last term in \eqref{stima_duality_base_m}. 
Following \cite{DGMT21,GLRSS24}, we introduce $\vartheta=1-\frac{2}{n+2}$, so that $0<\vartheta<1$.
Using H\"older's inequality together with \eqref{stima_L1}, \eqref{psi_estimates} and Lemma \ref{embedding_lemma}, we obtain:
\begin{multline}\label{third_term_m}
	\int_0^T\int_\Omega d \psi_{D_m}\, dx dt \leq C^\ast \left(\int_0^T\int_\Omega \left(|d|^{1-\vartheta}\left|\psi_{D_m}\right|\right)^{p{'}}\, dx dt\right)^{1/p{'}}\,\|d\|^\vartheta_{L^p(\Omega_T)}\\
	\leq C^\ast \|\psi_{D_m}\|_{L^{q_1}(\Omega_T)}\|d\|^{1-\vartheta}_{L^1(\Omega_T)}\|d\|^\vartheta_{L^{p}(\Omega_T)}
	\leq  C^* C_{p{'}} H_1^{1-\vartheta} \left\|d\right\|_{L^p(\Omega_T)}^\vartheta \,.
\end{multline}
The same estimate applied to \eqref{stima_duality_base_h_d} yields:
\begin{equation}\label{third_term_h_d}
	\int_0^T\int_\Omega d\,\left(\psi_{1}+\psi_{D_d}\right)\, dx\,dt
	\leq  C^* C_{p{'}} H_1^{1-\vartheta} \left\|d\right\|_{L^p(\Omega_T)}^\vartheta \,.
\end{equation}
It remains to estimate the terms in \eqref{stima_duality_base_m} and \eqref{stima_duality_base_h_d}.

We fix $s$ and $\xi$ in $(0,1)$ such that:
\begin{equation}\label{condition_s_xi}
	(1-s)(n+2)<1\,,\qquad	(2-\xi)(n+2)<1
\end{equation}
and consider:
\begin{equation}\label{condition_p_prime}
	1<p{'}<\min\left(\frac{n+2}{2}\,,\,\frac{2-(1-s)(n+2)}{n+2}\,,\,\frac{2-(2-\xi)(n+2)}{n+2}\right).	
\end{equation}
Using \eqref{condition_s_xi} and \eqref{condition_p_prime}, together with H\"older's and Young's inequalities and the estimates \eqref{stima_L1} and \eqref{stima_grad_c}, we have:
\begin{multline}\label{forth_term_m}
	\int_0^T\int_\Omega \chi(m,c)\nabla c \nabla \psi_{D_m}\, dx dt \leq \chi_0  \int_0^T\int_\Omega m\left|\nabla c \right| \left|\nabla \psi_{D_m}\right|\, dx dt\\
	\leq \chi_0 \left(\int_0^T\int_\Omega \left(|m|^{1-s}\left|\nabla c \right| \left|\nabla \psi_{D_m}\right|\right)^{p{'}}\, dx dt\right)^{1/p{'}}\,\|m\|^s_{L^p(\Omega_T)}\\
	\leq \chi_0 \|\nabla \psi_{D_m}\|_{L^{q_2}(\Omega_T)}\|m\|^{1-s}_{L^1(\Omega_T)}\|\nabla c\|_{L^{\frac{n+2}{n+1}}(\Omega_T)}\|m\|^s_{L^{p}(\Omega_T)}\\
	\leq \chi_0  \,C_{p{'}}H_1^{1-s}H^\ast\,\|m\|^s_{L^p(\Omega_T)}\,,
\end{multline}
and
\begin{multline}\label{forth_term_h_d}
	\int_0^T\int_\Omega m\,h\, \psi_{D_d} \, dx dt \leq
	C^\ast\int_0^T\int_\Omega (m^2+h^2)\, \psi_{D_d} \, dx dt \leq \\
	\leq C^\ast\|h^{2-\xi} \, \psi_{D_d}\|_{L^{p{'}}(\Omega_T)}\,\|h\|^\xi_{L^p(\Omega_T)} + C^\ast\|m^{2-\xi} \, \psi_{D_d}\|_{L^{p{'}}(\Omega_T)}\,\|m\|^\xi_{L^p(\Omega_T)} \\
	\leq C^\ast \|\psi_{D_d}\|_{L^{q_1}(\Omega_T)}\left(\|h\|^{2-\xi}_{L^{1}(\Omega_T)}\|h\|^\xi_{L^{p}(\Omega_T)}+\|m\|^{2-\xi}_{L^{1}(\Omega_T)}\|m\|^\xi_{L^{p}(\Omega_T)}\right)\\
	\leq C^* \,C_{p{'}}H_1^{(2-\xi)}\left(\|h\|^\xi_{L^p(\Omega_T)}+\|m\|^\xi_{L^p(\Omega_T)}\right)\,.
\end{multline}	
Collecting the estimates \eqref{first_term_m}, \eqref{third_term_m} and \eqref{forth_term_m} in \eqref{stima_duality_base_m}, together with \eqref{first_term_h_d}, \eqref{third_term_h_d} and \eqref{forth_term_h_d} in \eqref{stima_duality_base_h_d}, we obtain:
\begin{multline}
		\int_0^T\int_\Omega (m+h+d)\,
		\theta\, dx dt \leq\\ \leq C_1+C_2\left\|m+h+d\right\|_{L^p(\Omega_T)}^\vartheta+C_3\left\|m+h+d\right\|_{L^p(\Omega_T)}^s+C_4\left\|m+h+d\right\|_{L^p(\Omega_T)}^\xi\,,\nonumber
\end{multline}
where $C_1$, $C_2$, $C_3$ and $C_4$ are constants whose explicit expressions follow from the previous estimates and depend on $p$, $\Omega$, $T$, $n$, the initial data and the parameters of the system, but not on $\lambda$.

Since the above inequality holds for every $\theta\geq0$ with $\|\theta\|_{L^{p{'}}(\Omega_T)}\leq1$, duality in $L^p(\Omega_T)$ yields:
\begin{align*}
\left\|m+h+d\right\|_{L^p(\Omega_T)} &\leq C_1+C_2\left\|m+h+d\right\|_{L^p(\Omega_T)}^\vartheta +C_3\left\|m+h+d\right\|_{L^p(\Omega_T)}^s\\
&+C_4\left\|m+h+d\right\|_{L^p(\Omega_T)}^\xi\,,
\end{align*}
with $\vartheta$, $s$ and $\xi\in(0,1)$.

The conclusion follows by an application of Young's inequality.

\end{proof}

We now show that the set $\Phi$ is bounded.
\begin{lemma}\label{lemma_Phi_bound}
	Under the hypotheses of Theorem \ref{theorem:globalexistence}, the set $\Phi$ is bounded in $[C^{0,1}]^4$. In particular, if $(h,d,m,c)\in \Phi$, then:
	\begin{equation}\label{stima_h_d_m_c}
		\|(h,d,m,c)\|_{[C^{0,1}]^4}\leq C\,,
	\end{equation}
	where $C$ is a constant depending on $\Omega$, $T$, $n$, the parameters of the system and the initial data, but not on $\lambda$.
\end{lemma}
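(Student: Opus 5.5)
The plan is a bootstrap that upgrades the $\lambda$-independent $L^p(\Omega_T)$ bounds of Lemma~\ref{lemma_Lp} to a $W^{1,2}_{\tilde p}(\Omega_T)$ bound for each component. The $C^{0,1}$ bound \eqref{stima_h_d_m_c} then follows from the embedding in Lemma~\ref{embedding_lemma}, since $\tilde p>n+2$. The equations \eqref{eq:dimensionless_Phi_2} will be treated in the triangular order $c$, then $m$, then $d$ and $h$. At each step the nonlinear terms are regarded as given right-hand sides of a linear Neumann heat equation, and maximal parabolic regularity \cite{LSU88,DGMT21} is applied. Since $\lambda\in(0,1]$ enters only as a bounded multiplicative factor, all constants obtained are independent of $\lambda$.

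\textbf{Step 1 ($c$).} The source $r\lambda d$ lies in $L^p(\Omega_T)$ for every finite $p$ by Lemma~\ref{lemma_Lp}. Maximal regularity then gives $\|c\|_{W^{1,2}_p(\Omega_T)}\le C_p$ for every $p<\infty$. Choosing $p>n+2$ and using Lemma~\ref{embedding_lemma}, we get $c,\nabla c\in L^\infty(\Omega_T)$, and $\Delta c\in L^p(\Omega_T)$ for all finite $p$.

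\textbf{Step 2 ($m$), the main obstacle.} We expand the chemotactic term:
\[
\nabla\cdot(\chi(m,c)\nabla c)=\chi_0\frac{c}{\kappa+c}\nabla m\cdot\nabla c+\chi_0\frac{\kappa}{(\kappa+c)^2}m|\nabla c|^2+\chi_0\frac{c}{\kappa+c}m\Delta c .
\]
By Step 1 the coefficients multiplying $\nabla m$ and $m$ are bounded, except for the factor $\Delta c$, which is only in $L^p$ for every finite $p$. Because $m$ is already in $L^p$ for all $p$ by Lemma~\ref{lemma_Lp}, Hölder's inequality puts $m\Delta c$ in $L^{p}$ for every finite $p$ as well. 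The term $\nabla m\cdot\nabla c$ is of lower order with a bounded coefficient. We plan to handle it with the standard interpolation estimate $\|\nabla m\|_{L^p}\le \varepsilon\|m\|_{W^{1,2}_p}+C_\varepsilon\|m\|_{L^p}$ and absorb it into the left-hand side; if absorption is delicate, we work on short time intervals and iterate. Alternatively, we can keep the equation in divergence form and first obtain $\nabla m\in L^{q}$ with $q$ increasing, bootstrapping via Lemma~\ref{embedding_lemma}. Either route yields $\|m\|_{W^{1,2}_{\tilde p}(\Omega_T)}\le C$, and hence $m,\nabla m\in L^\infty(\Omega_T)$.

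\textbf{Step 3 ($d$, $h$).} With $m,c$ bounded, we have $A(m,c)\le\alpha(1+c_\epsilon)\|m\|_\infty$. In the $d$-equation the source $A(m,c)h+\rho\lambda d\,\overline{(h+d)}_+$ lies in $L^{\tilde p}(\Omega_T)$ by Lemma~\ref{lemma_Lp}, using $0\le\overline{(h+d)}_+\le1$. Maximal regularity then gives the $W^{1,2}_{\tilde p}$ bound for $d$, and the same argument applies to $h$, whose right-hand side is bounded by $C(1+h+d)$. Collecting Steps 1--3 and applying the embedding of $W^{1,2}_{\tilde p}(\Omega_T)$ into $C^{0,1}$ proves \eqref{stima_h_d_m_c}, with a constant depending only on $\Omega,T,n$, the data and the parameters.
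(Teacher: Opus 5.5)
Your proposal is correct and follows the paper's proof. The $\lambda$-independent bounds of Lemma~\ref{lemma_Lp} are passed through maximal parabolic regularity in the order $c$, $m$, $d$, $h$, and the embedding of $W^{1,2}_p(\Omega_T)$ into $C^{0,1}$ for $p>n+2$ concludes; your expansion of $\nabla\cdot(\chi(m,c)\nabla c)$ and absorption of the drift term $\nabla m\cdot\nabla c$ in fact justify the bound on $\|\nabla\cdot(\chi(m,c)\nabla c)\|_{L^p(\Omega_T)}$ that the paper simply asserts. One small adjustment: with data only in $W^{2-2/\tilde p,\tilde p}(\Omega)$, maximal regularity up to $t=0$ gives $\Delta c\in L^p(\Omega_T)$ only for $p\le\tilde p$, so run the H\"older step on $m\Delta c$ first at some $q\in(n+2,\tilde p)$, and then again at $\tilde p$ once you know $m\in L^\infty(\Omega_T)$.
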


\begin{proof}
Since $p>n+2$, the space $W^{1,2}_p(\Omega_T)$ is compactly embedded in $C^{0,1}$ (see Lemma \ref{embedding_lemma}). Applying maximal regularity to the equation for $c$ in \eqref{eq:dimensionless_Phi_2} and using \eqref{stima_Lp}, we obtain:
\begin{equation}\label{stima_c_lambda_1}
		\|c\|_{C^{0,1}}\leq C_T \|c\|_{W^{1,2}_p(\Omega_T)}
		\leq C_T \left(r\| d \|_{L^p(\Omega_T)}+\|c_{in}\|_{W^{p,2-2/p}(\Omega)}\right)\leq C\,,
\end{equation}
where $C_T$ depends on $\Omega$, $T$, $n$, $p$, and $C$ depends on $\Omega$, $T$, $n$, $p$, the parameters of the system and the initial data, but not on $\lambda$.
	
Repeating the same argument for the $m$-equation in \eqref{eq:dimensionless_Phi_2}, we have:
\begin{multline}\label{stima_m_lambda_1}
	\|m\|_{C^{0,1}}\leq C_T \|m\|_{W^{1,2}_p(\Omega_T)}\\
	\leq C_T \left( \nu\| d\|_{L^p(\Omega_T)}+\| \nabla\cdot\left(\chi(m,c)\nabla c\right)\|_{L^p(\Omega_T)}+\|m_{in}\|_{W^{p,2-2/p}(\Omega)}\right)\leq C\,,
\end{multline}
and analogously for the $d$-equation and the $h$-equation in \eqref{eq:dimensionless_Phi_2}:
\begin{multline}\label{stima_d_lambda_1}
	\|d\|_{C^{0,1}}\leq C_T \|d\|_{W^{1,2}_p(\Omega_T)}\\
	\leq C_T \left( \| A(m, c)h\|_{L^p(\Omega_T)}+\rho \lambda \|d\overline{(h +d)}_+ \|_{L^p(\Omega_T)}+\|d_{in}\|_{W^{p,2-2/p}(\Omega)}\right)\leq C\,,	
\end{multline}
\begin{multline}\label{stima_h_lambda_1}
	\|h\|_{C^{0,1}}\leq C_T \|h\|_{W^{1,2}_p(\Omega_T)}\\
	\leq C_T \left( \| (\lambda\sigma+\rho d)(1-\overline{(h+d)}_+)\|_{L^p(\Omega_T)}+\| h_{in}\|_{W^{p,2-2/p}(\Omega)}\right)\leq C\,.
\end{multline}
Combining \eqref{stima_h_lambda_1}, \eqref{stima_d_lambda_1}, \eqref{stima_m_lambda_1} and \eqref{stima_c_lambda_1}, we obtain \eqref{stima_h_d_m_c}.
\end{proof}

We conclude by proving that the saturation condition for $h$ and $d$ is preserved.

\begin{lemma}\label{lemma_saturation}
	Let $(h,d,m,c)\in \Phi$. Then:
	\[
	h+d\leq 1.
	\]
\end{lemma}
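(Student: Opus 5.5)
The natural object is the total tissue $T:=h+d$ and its excess $w:=T-1$. The plan is a Stampacchia-type truncation argument in the spirit of Lemma~\ref{lemma_positiv}: show that $w_+=\max(0,w)$ vanishes identically. Note that $w_+(0,\cdot)=0$, since by \eqref{IC_Phi_1} and \eqref{IC_saturation_condition} we have $\lambda(h_{in}+d_{in})\le 1$.

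\emph{Step 1 (evolution of $T$).} Summing the first two equations of \eqref{eq:dimensionless_Phi_2}, the immune terms $\pm A(m,c)h$ cancel. The regeneration terms combine into $\rho d(1-\overline{T}_+)-\rho d+\rho\lambda d\,\overline{T}_+=-\rho(1-\lambda)d\,\overline{T}_+\le 0$, where $\overline{T}_+:=\overline{(h+d)}_+$. This gives
\[
\partial_t T=\Delta h+D_d\Delta d-h-\delta d+\lambda\sigma(1-\overline{T}_+)-\rho(1-\lambda)d\,\overline{T}_+ .
\]
On the set $\{T>1\}$ the truncation gives $\overline{T}_+=1$. Using $h,d\ge0$ from Lemma~\ref{lemma_positiv}, the reaction part is therefore bounded there by $-h-\delta d\le -\min(1,\delta)\,T<0$. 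This is the PDE analogue of item (v) in Theorem~\ref{theorem:invariant_region}.

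\emph{Step 2 (energy estimate).} Multiply the $T$-equation by $w_+$ and integrate over $(0,t)\times\Omega$. The Neumann conditions \eqref{BC_Phi_1} kill the boundary terms, and the sign from Step~1 makes the reaction contribution nonpositive, so
\[
\tfrac12\int_\Omega w_+^2(t)\,dx\le -\int_0^t\!\!\int_\Omega(\nabla h+D_d\nabla d)\cdot\nabla w_+\,dx\,ds .
\]
If $D_d=1$, the right-hand side equals $-\int_0^t\!\int_\Omega|\nabla w_+|^2$, which immediately gives $w_+\equiv0$.

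\emph{Step 3 (main obstacle: unequal diffusivities).} When $D_d\neq1$, write $\nabla h+D_d\nabla d=\nabla T+(D_d-1)\nabla d$. The cross term $(D_d-1)\int\nabla d\cdot\nabla w_+$ has no sign, and $T$ does not satisfy a closed scalar parabolic inequality. This is where I expect the difficulty to lie. My first attempt will be Young's inequality,
\[
|D_d-1|\,|\nabla d\cdot\nabla w_+|\le \tfrac12|\nabla w_+|^2+\tfrac12(D_d-1)^2|\nabla d|^2\mathbf 1_{\{w>0\}},
\]
and then to control $\int_0^t\!\int_{\{w>0\}}|\nabla d|^2$ using two ingredients: the strictly negative reaction margin $-\min(1,\delta)\int_{\{w>0\}}T\,w_+$, and the $C^{0,1}$ bound of Lemma~\ref{lemma_Phi_bound}. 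The aim is to reduce the inequality to the Gronwall form $\frac{d}{dt}\|w_+\|_{L^2}^2\le C\|w_+\|_{L^2}^2$. Since this does not obviously close, a backup is to run the same argument separately on the $h$-equation and on the $d$-equation, tested against $w_+$. One would then use that in each equation the truncated reaction $\overline{T}_+$ feeds back with the correct (dissipative) sign on $\{T>1\}$, while the diffusion of each component is handled through $\nabla h\cdot\nabla w_+=|\nabla w_+|^2-\nabla d\cdot\nabla w_+$. A second backup is a weighted combination $h+\gamma d$ whose effective diffusion is better balanced. If neither closes, the honest fallback is the comparison/invariant-region argument for the specific truncated system, which uses the $\overline{(\cdot)}_+$ cutoff so that $h^\ast$ is driven by a source vanishing exactly when $T\ge1$.
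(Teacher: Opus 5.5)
Your Steps 1 and 2 are correct, and for $D_d=1$ they already form a complete proof. In that case $\partial_t T=\Delta T+R$ with $R=-h-\delta d-\rho(1-\lambda)d\le 0$ on $\{T>1\}$, and testing with $w_+$, where $w_+(0)=0$, gives $w_+\equiv 0$.

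The gap is Step 3, and none of the devices you list can close it, because for $D_d\neq 1$ the statement is false. Take $\lambda=1$; the solution of the truncated problem is then a fixed point of $S(\cdot,1)$, hence lies in $\Phi$. Choose any admissible $m_{in},c_{in}$ and set $h_{in}=1-d_{in}$, where $d_{in}$ is smooth, $0<d_{in}<1$, constant near $\partial\Omega$, and has an interior point $x_0$ with $(D_d-1)\Delta d_{in}(x_0)>1+\delta$. Concretely, put a narrow dip of $d_{in}$ at $x_0$ if $D_d>1$, or a narrow bump if $D_d<1$. Since $T(0,\cdot)\equiv 1$, the truncation equals $1$ at $t=0$. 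Summing the first two equations of \eqref{eq:dimensionless_Phi_2} then gives
\[
\partial_t T(0,x_0)=(D_d-1)\Delta d_{in}(x_0)-h_{in}(x_0)-\delta d_{in}(x_0)>0 .
\]
For such compatible data $\partial_t T$ is continuous up to $t=0$, so $T(t,x_0)>1$ for small $t>0$. With the reaction switched off the mechanism is transparent: $T(t)-1=(e^{tD_d\Delta}-e^{t\Delta})d_{in}$ has zero mean and is not identically zero, so it is positive somewhere.

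This also explains why your Young-inequality step cannot work. The leftover $\tfrac12(D_d-1)^2\int_{\{w>0\}}|\nabla d|^2$ is controlled only by the measure of $\{w>0\}$ (via the $C^{0,1}$ bound), not by $\|w_+\|_{L^2}^2$. Meanwhile the reaction margin is only of order $\int w_+$, so neither ingredient can absorb it. Weighting $h+\gamma d$, or testing the two equations separately, just reproduces a cross term proportional to $(D_d-1)\Delta d$.

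Your ``honest fallback'' is exactly the paper's proof. It applies the Chueh--Conley--Smoller invariant-region theorem to the triangle $\Sigma$ after checking that the reaction points inward on each face. However, with diffusion matrix $\mathrm{diag}(1,D_d)$ and $D_d\neq 1$, that theorem requires the normal of each face to be a left eigenvector of the diffusion matrix, so it only yields rectangles. The face $h+d=1$, with normal $(1,1)$, is not admissible. The paper's argument is therefore likewise valid only for $D_d=1$, which is the value used in the simulations. There it reduces to the scalar maximum principle for $T$, which your Step 2 proves by an energy estimate.

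For general $D_d$ you can only salvage box-type bounds. For example, $h\le 1$ follows by testing the $h$-equation alone with $(h-1)_+$: on $\{h>1\}$ one has $\overline{(h+d)}_+=1$, so the $h$-reaction is $-(1+A(m,c))h<0$. The constraint $h+d\le 1$ needs an extra hypothesis such as $D_d=1$. The same applies to the corresponding claim in Theorem~\ref{theorem:globalexistence}, and to identifying the $\lambda=1$ fixed point with a solution of \eqref{eq:dimensionless_WP_section}.
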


\begin{proof}
From \eqref{eq:dimensionless_Phi_2}, the $2\times 2$ parabolic semilinear system for $h$ and $d$ has  a diagonal diffusion.
Arguing as in Theorem \ref{theorem:invariant_region} and considering Lemma \ref{lemma_positiv}, the reaction field is inward pointing on the boundary face \(h+d=1\), and on $h=0$ and $d=0$. 
Applying the invariant-region theorem for semilinear parabolic systems (see \cite{CCS77,S94}) yields that the convex bounded domain
\[
\Sigma=\{(h,d)\in\mathbb{R}^2:\ h\ge0,\ d\ge0,\ h+d\le1\}
\]
is positively invariant and the Lemma is proved.	
	
\end{proof}

By Lemma \ref{lemma_saturation}, we have $h+d\leq 1$. In particular, the truncation $\overline{(h+d)}_+$ coincides with $(h+d)$ and system \eqref{eq:dimensionless_Phi_1} reduces to:
\begin{equation}\label{eq:dimensionless_Phi_3}
	\begin{cases}
		\displaystyle
		\partial_t h=
		\Delta h - h- A(m, c)h +\lambda\sigma(1-(h+d))+\rho d(1-(h+d)),&  (t,x)\in \Omega_T
		\\
		\displaystyle
		\partial_t d=
		D_d\Delta d -(\delta+\rho) d+ A(m, c)h+\rho \lambda d(h +d),&  (t,x)\in \Omega_T
		\\
		\displaystyle
		\partial_t m=
		D_m \Delta m  -  \nabla\cdot\left(\chi(m,c) \nabla c\right)+\nu (\lambda d - m),&  (t,x)\in \Omega_T
		\\
		\displaystyle
		\partial_t c=D_c
		\Delta c -\mu c+ r\lambda d\,,&  (t,x)\in \Omega_T\,
	\end{cases}
\end{equation}
with boundary conditions \eqref{BC_Phi_1} and nonnegative initial data \eqref{IC_Phi_1} with condition \eqref{IC_saturation_condition}.
\subsubsection{Existence of the solution}\label{subsubsec:existence}

In this subsection, we prove the existence of a global solution. The result is stated in the following proposition, whose proof follows from Lemmas \ref{lemma_positiv}, \ref{lemma_S_bound}, \ref{lemma_S_cont}, \ref{lemma_Phi_bound} and \ref{lemma_saturation}.

\begin{proposition}\label{proposition_existence}
	Under the hypotheses of Theorem \ref{theorem:globalexistence}, for any $T>0$, the system \eqref{eq:dimensionless_Phi_3} with Neumann boundary conditions \eqref{BC} and non-negative initial data \eqref{IC} satisfying \eqref{IC_saturation_condition}, admits a strong non-negative solution $(h,d,m,c)$ in $[0,T]$. 
	
	Moreover, there exists a constant $C$, depending on $\Omega$, $T$, $n$, the parameters of the system and $\|(h_{in},d_{in},m_{in},c_{in})\|_{[W^{2-2/\bar{p},\bar{p}}(\Omega)]^4}$, such that:
	\begin{equation}\label{stima_proposizione_1}
		\|(h,d,m,c)\|_{[C^{0,1}]^4}\leq C\,.
	\end{equation}	
\end{proposition}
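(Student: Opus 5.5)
The plan is to apply the Leray--Schauder fixed point theorem to the map $S$ defined in \eqref{eq:S}. The version we use reads as follows. Let $X=[C^{0,1}]^4$ and let $S:X\times[0,1]\to X$ be compact and continuous with $S(\cdot,0)\equiv 0$. Suppose the set of fixed points of $S(\cdot,\lambda)$, $\lambda\in(0,1]$, is bounded in $X$. Then $S(\cdot,1)$ has a fixed point.

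We first verify the hypotheses in order.
\begin{itemize}
\item \emph{Normalization.} By \eqref{eq:S}, $S(h,d,m,c,0)=(0,0,0,0)$ for every argument, so this holds trivially.
\item \emph{Compactness.} For fixed $T>0$, Lemma~\ref{lemma_S_bound} shows that $S$ sends bounded subsets of $X\times[0,1]$ into precompact subsets of $X$.
\item \emph{Continuity.} For continuity jointly in $(h,d,m,c,\lambda)$, we combine Lemma~\ref{lemma_S_cont} with the fact that $(h^\ast,d^\ast,m^\ast,c^\ast)$ does not depend on $\lambda$. Hence $\lambda\mapsto\lambda(h^\ast,\dots)$ is Lipschitz, with constant bounded by $K^\ast$ from \eqref{S_precompact}.
\item \emph{Bounded fixed-point set.} Lemma~\ref{lemma_Phi_bound} gives a bound on $\Phi$ in $X$ uniform in $\lambda\in(0,1]$.
\end{itemize}
Leray--Schauder therefore yields $(h,d,m,c)\in X$ with $S(h,d,m,c,1)=(h,d,m,c)$, that is, an element of $\Phi$ with $\lambda=1$.

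Next we identify this fixed point with a solution of the original problem. Lemma~\ref{lemma_positiv} gives nonnegativity of all components, and Lemma~\ref{lemma_saturation} gives $h+d\le1$. With these, the truncations collapse: $d_+=d$, $h_+=h$, $m_+=m$, $c_+=c$, and $\overline{(h+d)}_+=h+d$. So \eqref{eq:dimensionless_Phi_1} with $\lambda=1$ becomes \eqref{eq:dimensionless_Phi_3} with $\lambda=1$, which is exactly \eqref{eq:dimensionless_WP_section}. One check is needed in the $d$-equation, where $-(\delta+\rho)d+\rho d(h+d)=-\delta d-\rho d(1-h-d)$. The initial data \eqref{IC_Phi_1} reduce to \eqref{IC}, and the boundary conditions are \eqref{BC}.

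For strong regularity, the fixed point equals $(h^\ast,d^\ast,m^\ast,c^\ast)$. Each of these solves a linear parabolic problem whose right-hand side lies in $L^{p}(\Omega_T)$, since the data are in $C^{0,1}$. By maximal regularity each component lies in $W^{1,2}_p(\Omega_T)$, which gives $\partial_t$, $\Delta$ and the chemotactic divergence in $L^p$. This also holds for $p=\tilde p$, because the initial data are in $W^{2-2/\tilde p,\tilde p}$.

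The estimate \eqref{stima_proposizione_1} is \eqref{stima_h_d_m_c} taken at $\lambda=1$. Its constant depends only on $\Omega$, $T$, $n$, the parameters and the initial norms. Since $T>0$ is arbitrary, we obtain the result on every $[0,T]$.

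The main obstacle I expect is checking that the constants in Lemmas~\ref{lemma_Lp}--\ref{lemma_Phi_bound} are genuinely independent of $\lambda$. This requires that the chemotactic term $\nabla\cdot(\chi(m,c)\nabla c)$ is controlled in $L^p$ by the $C^{0,1}$ norm of $m$ and the $W^{1,2}_p$ norm of $c$. I would handle this by bootstrapping: first bound $c$ in $W^{1,2}_p$ using Lemma~\ref{lemma_Lp}, which places $\nabla c$ in $L^\infty$ and $\Delta c$ in $L^p$. Then treat the $m$-equation as linear with a bounded drift, so that maximal regularity closes the estimate for $m$.
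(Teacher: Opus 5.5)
Your proposal is correct and follows the paper's route. Leray--Schauder is applied to $S$ via Lemmas~\ref{lemma_S_bound}, \ref{lemma_S_cont} and \ref{lemma_Phi_bound}; the $\lambda=1$ fixed point is identified with a solution of \eqref{eq:dimensionless_WP_section} through Lemmas~\ref{lemma_positiv} and \ref{lemma_saturation}; strong regularity comes from maximal regularity; and \eqref{stima_proposizione_1} is read off from \eqref{stima_h_d_m_c}. You also make explicit several points the paper leaves implicit: the normalization $S(\cdot,0)=0$, joint continuity in $\lambda$, and the collapse of the truncations, including the rewriting of the $d$-equation. Your drift-type treatment of the chemotactic term is a sound way to make the $m$-estimate in Lemma~\ref{lemma_Phi_bound} non-circular.
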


\begin{proof}
	By the Leray--Schauder fixed point theorem, the map $S$ admits a fixed point. 
	
	This yields, for any $T>0$, a solution of the original system \eqref{eq:dimensionless_WP_section} in $[0,T]$. 
	
	Estimate \eqref{stima_proposizione_1} follows from \eqref{stima_h_d_m_c}.
	
	Finally, since all terms in \eqref{eq:dimensionless_WP_section} are defined almost everywhere as $L^1(\Omega_T)$ functions, the solution is strong.
\end{proof}

\subsection{Uniqueness}\label{subsec:uniquenss}

In this subsection, we prove the uniqueness of the solution to system \eqref{eq:dimensionless_WP_section} with boundary conditions \eqref{BC} and non-negative initial data \eqref{IC} satisfying \eqref{IC_saturation_condition}.

\begin{proposition}\label{proposition_uniqueness}
	Under the hypotheses of Theorem \ref{theorem:globalexistence}, for each $T>0$, the solution to system \eqref{eq:dimensionless_WP_section} with Neumann boundary conditions \eqref{BC} and non-negative initial data \eqref{IC} satisfying \eqref{IC_saturation_condition}, constructed in Proposition \ref{proposition_existence}, is unique.
\end{proposition}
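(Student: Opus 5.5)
Uniqueness will follow from a Gronwall argument on the differences of two solutions, using the $C^{0,1}$ bounds of Proposition~\ref{proposition_existence} (valid for any strong solution in the class of Theorem~\ref{theorem:globalexistence}) together with positivity and $h+d\le 1$. Let $(h_i,d_i,m_i,c_i)$, $i=1,2$, be two strong solutions with the same data on $[0,T]$, and set $H=h_1-h_2$, $D=d_1-d_2$, $M=m_1-m_2$, $Z=c_1-c_2$. Both solutions lie in $[C^{0,1}]^4$, so there is $L>0$ with $\|(h_i,d_i,m_i,c_i)\|_{[C^{0,1}]^4}\le L$. In particular the reaction terms $(\sigma+\rho d)(1-h-d)-h-A(m,c)h$, $A(m,c)h-\delta d-\rho d(1-h-d)$, $\nu(d-m)$ and $rd-\mu c$ are Lipschitz on the relevant bounded set, with constant depending on $L$. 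For $A$ this uses $|\partial_c \frac{c+c_\epsilon}{1+c}|=\frac{|1-c_\epsilon|}{(1+c)^2}\le|1-c_\epsilon|$ for $c\ge0$, exactly as in \eqref{continuity_1}.

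I will test the four difference equations against $H$, $D$, $M$, $Z$ in $L^2(\Omega)$, using the Neumann conditions to integrate by parts. For $H$, $D$ and $Z$ this gives
\[
\tfrac12\tfrac{d}{dt}\|U\|_{L^2}^2+D_U\|\nabla U\|_{L^2}^2\le C_L\big(\|H\|_{L^2}^2+\|D\|_{L^2}^2+\|M\|_{L^2}^2+\|Z\|_{L^2}^2\big).
\]
The $M$-equation is the main obstacle because of the chemotactic term. I will split
\[
\chi(m_1,c_1)\nabla c_1-\chi(m_2,c_2)\nabla c_2=\big(\chi(m_1,c_1)-\chi(m_2,c_2)\big)\nabla c_1+\chi(m_2,c_2)\nabla Z .
\]
After integrating by parts against $\nabla M$, we need $|\chi(m_1,c_1)-\chi(m_2,c_2)|\le\chi_0(|M|+\kappa^{-1}L|Z|)$, the bound $\|\nabla c_1\|_{L^\infty}\le L$, and $|\chi(m_2,c_2)|\le\chi_0 L$. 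Young's inequality then absorbs the gradient term:
\[
\Big|\int_\Omega(\dots)\cdot\nabla M\Big|\le \tfrac{D_m}{2}\|\nabla M\|_{L^2}^2+C_L\big(\|M\|_{L^2}^2+\|Z\|_{L^2}^2+\|\nabla Z\|_{L^2}^2\big).
\]

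The remaining term $\|\nabla Z\|_{L^2}^2$ cannot be absorbed by the $Z$-dissipation alone with an arbitrary constant. I will handle it by using a weighted energy $E=\|H\|^2+\|D\|^2+\|M\|^2+K\|Z\|^2$ with $K$ large, say $K D_c\ge 2C_L$, so that the dissipation $K D_c\|\nabla Z\|^2$ of the $Z$-equation dominates. If that proves awkward, I will instead test the $Z$-equation with $-\Delta Z$. This gives $\frac{d}{dt}\|\nabla Z\|^2+D_c\|\Delta Z\|^2\le C\|D\|^2$, and $\|\nabla Z\|^2$ can then be included in the energy.

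Either way we obtain $E'(t)\le C E(t)$ with $E(0)=0$, since the two solutions share the same data. Gronwall's lemma gives $E\equiv0$ on $[0,T]$, and because $T$ is arbitrary the solution is unique.
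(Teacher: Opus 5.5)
Your proposal is correct and follows essentially the same route as the paper. Both arguments use $L^2$ energy estimates for the differences and Young's inequality on the split chemotactic flux; the paper splits it as $\chi(m_1,c_1)\nabla(c_1-c_2)+(\chi(m_1,c_1)-\chi(m_2,c_2))\nabla c_2$, the mirror image of yours. Both then use a weighted energy in which $\|c_1-c_2\|^2_{L^2(\Omega)}$ carries the factor $A=\frac{3}{4}\frac{\chi_0^2}{D_mD_c}\|m_1\|^2_{C^{0,1}}$, exactly the role of your large $K$, so that the $c$-dissipation absorbs $\|\nabla(c_1-c_2)\|^2_{L^2(\Omega)}$ before Gronwall is applied.
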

\begin{proof}
	Let $(h_1,d_1,m_1,c_1)$ and $(h_2,d_2,m_2,c_2)$ be two solutions of  
	system \eqref{eq:dimensionless_WP_section}, with boundary and initial conditions given in \eqref{BC} and \eqref{IC} with \eqref{IC_saturation_condition}. 
	Taking their differences and using Young inequality, we have:
		\begin{equation}\label{diff_c}
			\frac{d}{dt}\|c_1-c_2\|^2_{L^2(\Omega)} +2D_c\|\nabla(c_1-c_2)\|^2_{L^2(\Omega)}
			\leq 
			r\|d_1-d_2\|^2_{L^2(\Omega)}+r\|c_1-c_2\|^2_{L^2(\Omega)}-2\mu\|c_1-c_2\|^2_{L^2(\Omega)}\,,
		\end{equation}		
		%
		%
		\begin{multline}\label{diff_m}
			\frac{d}{dt}\|m_1-m_2\|^2_{L^2(\Omega)} +2D_m\|\nabla(m_1-m_2)\|^2_{L^2(\Omega)}\leq  \nu
			\|d_1-d_2\|^2_{L^2(\Omega)}-\nu
			\|m_1-m_2\|^2_{L^2(\Omega)}+\\
			\quad\quad\quad+\frac{3\chi_0^2}{2D_m}\|m_1\|^2_{C^{0,1}}\|\nabla (c_1-c_2)\|^2_{L^2(\Omega)}+\frac{2}{3}D_m \|\nabla (m_1-m_2)\|^2_{L^2(\Omega)}+\\
			\quad\quad\quad+\frac{3\chi_0^2}{2D_m}\|c_2\|^2_{C^{0,1}}\|m_1-m_2)\|^2_{L^2(\Omega)}+\frac{2}{3}D_m \|\nabla (m_1-m_2)\|^2_{L^2(\Omega)}+\\
			\quad\quad\quad+\frac{3 \chi_0^2}{2\kappa^2D_m}\|m_2\|^2_{C^{0,1}}\|c_2\|^2_{C^{0,1}}\|c_1-c_2\|^2_{L^2(\Omega)}+\frac{2}{3}D_m \|\nabla (m_1-m_2)\|^2_{L^2(\Omega)}\,,
		\end{multline} 
		\begin{multline}\label{diff_h_d}
		\frac{d}{dt}\|h_1-h_2\|^2_{L^2(\Omega)}+ \frac{d}{dt}\|d_1-d_2\|^2_{L^2(\Omega)}\leq 
		\sigma\left(\|h_1-h_2\|^2_{L^2(\Omega)}+\|d_1-d_2\|^2_{L^2(\Omega)}\right)\\
		+8\rho\max\left(1,\|h_2\|_{C^{0,1}},\|d_1\|_{C^{0,1}},\|d_2\|_{C^{0,1}}\right)\left(\|(h_1-h_2)\|^2_{L^2(\Omega)}+\|(d_1-d_2)\|^2_{L^2(\Omega)}\right)\\
		+\alpha(1+c_\epsilon)\|m_1\|_{C^{0,1}}\left(3\|(h_1-h_2)\|^2_{L^2(\Omega)}+\|(d_1-d_2)\|^2_{L^2(\Omega)}\right)\\
		+\alpha|1-c_\epsilon|\|m_1\|_{C^{0,1}}\|h_2\|_{C^{0,1}}\left(\|(h_1-h_2)\|^2_{L^2(\Omega)}+2\|(c_1-c_2)\|^2_{L^2(\Omega)}+\|(d_1-d_2)\|^2_{L^2(\Omega)}\right)\\ +\alpha(1+c_\epsilon)\|h_2\|_{C^{0,1}}\left(\|(h_1-h_2)\|^2_{L^2(\Omega)}+2\|(m_1-m_2)\|^2_{L^2(\Omega)}+\|(d_1-d_2)\|^2_{L^2(\Omega)}\right)\,.
	\end{multline}
	
	Denoting by $A=\frac{3}{4}\frac{\chi_0^2}{D_m\,D_c}\|m_1\|^2_{C^{0,1}}$, we complete the proof applying the Gronwall's lemma to:
	$$\|h_1-h_2\|^2_{L^2(\Omega)}+\|d_1-d_2\|^2_{L^2(\Omega)}+\|m_1-m_2\|^2_{L^2(\Omega)}+A\|c_1-c_2\|^2_{L^2(\Omega)}\,,$$ 
	using the previous estimates \eqref{diff_c}-\eqref{diff_m}-\eqref{diff_h_d} and estimate \eqref{stima_proposizione_1} given in Proposition \ref{proposition_existence}.
\end{proof}

\subsection{Uniform-in-time estimates}\label{subsec:uniform}

In this section we will conclude the proof of Theorem \ref{theorem:globalexistence} proving \eqref{uniform_bound_sup} and \eqref{uniform_bound_limsup}.


We prove three preliminary Lemmas where we establish, respectively, uniform in time bounds in $L^1(\Omega)$ for the solution, uniform in time bounds for the gradient of $c$ and finally we state uniform in time bounds in $L^p(\Omega)$ for the solution. We conclude this section  with Proposition \ref{proposition_uniform}, where we prove the uniform in time estimate of the solution in $W^{1,\infty}(\Omega)$ (see \cite{DGMT21}).

\begin{lemma}\label{lemma:L1_asintotico}
Under the hypothesis of Theorem \ref{theorem:globalexistence}, let $(h,d,m,c)$ be the unique solution to \eqref{eq:dimensionless_WP_section} on $\mathbb{R}_+\times\Omega$, then there exists a constant $\overline{H}_1$, depending on $\Omega$, $n$, the parameters of the system and the initial data, such that:
\begin{equation}\label{stima_L1_asintotica_sup}
\sup_{t\in \mathbb{R}_+}\left(\|(h,d,m,c)\|_{[L^1(\Omega)]^4}\right)\leq \overline{H}_1\,,		
\end{equation}
and
\begin{equation}\label{stima_L1_asintotica_limsup}
\limsup_{t\rightarrow \infty}\left(\|(h,d,m,c)\|_{[L^1(\Omega)]^4}\right)\leq \overline{H}_1\,.	
\end{equation}
\end{lemma}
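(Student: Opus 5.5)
We prove the uniform-in-time $L^1$ bound by integrating each equation of \eqref{eq:dimensionless_WP_section} over $\Omega$. By the Neumann conditions \eqref{BC}, all diffusion terms integrate to zero, and so does the chemotactic flux $\nabla\cdot(\chi(m,c)\nabla c)$. We then work with the resulting linear differential inequalities, using that the solution is non-negative and satisfies $h+d\le 1$ (Proposition \ref{proposition_existence} and Lemma \ref{lemma_saturation}, now for $\lambda=1$).

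\textbf{Step 1: the tissue variables.} Since $0\le h$, $0\le d$ and $h+d\le 1$ pointwise for all $t\ge0$, we get directly
\[
\|h(t)\|_{L^1(\Omega)}+\|d(t)\|_{L^1(\Omega)}\le |\Omega| \qquad \text{for all } t\ge 0.
\]
No Gronwall argument is needed here, so we avoid the exponential-in-$T$ growth of Lemma \ref{lemma:L1}. Alternatively, summing the integrated $h$ and $d$ equations gives
\[
\frac{d}{dt}\int_\Omega (h+d)\,dx=\int_\Omega\big[(\sigma+\rho d)(1-h-d)-h-\delta d\big]dx\le(\sigma+\rho)|\Omega|-\min(1,\delta)\int_\Omega(h+d)\,dx .
\]
This also yields a bound on the $\limsup$ of $\int_\Omega(h+d)$ of order $(\sigma+\rho)|\Omega|/\min(1,\delta)$. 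The pointwise bound is simpler, and we use it.

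\textbf{Step 2: macrophages and chemokines.} Integrating the $m$ and $c$ equations and using Step 1 gives
\[
\frac{d}{dt}\int_\Omega m\,dx=\nu\Big(\int_\Omega d\,dx-\int_\Omega m\,dx\Big)\le \nu|\Omega|-\nu\int_\Omega m\,dx,\qquad
\frac{d}{dt}\int_\Omega c\,dx\le r|\Omega|-\mu\int_\Omega c\,dx .
\]
By the linear comparison (Gronwall) lemma, for all $t\ge 0$,
\[
\int_\Omega m(t)\,dx\le e^{-\nu t}\|m_{in}\|_{L^1(\Omega)}+|\Omega|\big(1-e^{-\nu t}\big),\qquad
\int_\Omega c(t)\,dx\le e^{-\mu t}\|c_{in}\|_{L^1(\Omega)}+\frac{r}{\mu}|\Omega|\big(1-e^{-\mu t}\big).
\]
Taking the supremum over $t$ gives \eqref{stima_L1_asintotica_sup} with
\[
\overline H_1=|\Omega|+\max\big(\|m_{in}\|_{L^1(\Omega)},|\Omega|\big)+\max\big(\|c_{in}\|_{L^1(\Omega)},r|\Omega|/\mu\big).
\]
Taking $\limsup$ as $t\to\infty$ gives the sharper bound $|\Omega|(2+r/\mu)\le\overline H_1$, which proves \eqref{stima_L1_asintotica_limsup}.

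\textbf{Main obstacle.} The only delicate point is justifying the integrations, namely the vanishing of the boundary terms and the differentiability of $t\mapsto\int_\Omega u$. This follows from the strong regularity in Theorem \ref{theorem:globalexistence}: $\partial_t u$ and $\Delta u$ lie in $L^{\tilde p}(\Omega_T)$, and $\nabla\cdot(\chi\nabla c)$ does as well. The divergence theorem together with the Neumann conditions then gives $\int_\Omega\Delta u=0$ and $\int_\Omega\nabla\cdot(\chi\nabla c)=0$ for a.e. $t$. The differential inequalities therefore hold in the sense of absolutely continuous functions on every $[0,T]$. Since the constants do not depend on $T$, the bounds hold on all of $\mathbb{R}_+$.
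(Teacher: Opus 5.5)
Your proof is correct and follows essentially the paper's argument: integrate over $\Omega$, use the Neumann conditions to eliminate the diffusion and chemotactic fluxes, and close with a linear Gronwall/comparison argument for $m$ and $c$. The one small difference is that you bound $\|h\|_{L^1(\Omega)}+\|d\|_{L^1(\Omega)}\le|\Omega|$ directly from $0\le h,d$ and $h+d\le 1$, whereas the paper applies Gronwall to $\frac{d}{dt}\int_\Omega(h+d)\le\sigma\int_\Omega(1-(h+d))$, which gives the same bound. (In your unused aside, note that the $\rho d(1-h-d)$ terms cancel when the two equations are summed, so the exact identity is $\frac{d}{dt}\int_\Omega(h+d)=\int_\Omega[\sigma(1-h-d)-h-\delta d]$; your inequality still holds because the extra term you kept is nonnegative.)
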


\begin{proof}
Integrating the equations for $h$ and $d$ in \eqref{eq:dimensionless_WP_section}, summing and considering the saturation condition given in Lemma \ref{lemma_saturation}, we obtain:
\begin{equation}
	\partial_t\int_\Omega (h+d) dx 
	\leq \sigma\int_\Omega(1-(h+d)) dx\,,
\end{equation}
Applying Gronwall's lemma, we have \eqref{stima_L1_asintotica_sup} and \eqref{stima_L1_asintotica_limsup} for $h$ and $d$.

Integrating the equation of $m$ and $c$ in \eqref{eq:dimensionless_WP_section}, we obtain:
\begin{equation}
	\frac{d}{dt}\int_\Omega m \,dx\leq -\nu\int_\Omega m \,dx+\nu\|d\|_{L^1(\Omega)}\,,\nonumber
\end{equation}
and	
\begin{equation}
	\frac{d}{dt}\int_\Omega c \, dx\leq -\mu\int_\Omega c \,dx+r\|d\|_{L^1(\Omega)}\,.\nonumber
\end{equation}	
Moreover, \eqref{stima_L1_asintotica_sup} and \eqref{stima_L1_asintotica_limsup} for $m$ and $c$ are proved applying Gronwall's lemma. 	
\end{proof}

To prove the next two lemmas, following \cite{DGMT21}, we define a smooth positive cutoff function $\upsilon \in C^{\infty}\left(\mathbb{R};[0,1]\right)$ such that $\sup_{s\in\mathbb{R}}|\upsilon(s)|\leq 1$  and $\upsilon(s)=1$ for $s\geq 1$ and $\upsilon(s)=0$ for $s\leq 0$ and $|\upsilon{'}(s)|\leq 2$ for all $s\in \mathbb{R}$. For any $\tau\in \mathbb{N}$, we define $\upsilon_\tau(\cdot)=\upsilon(\cdot-\tau)$.
The idea is to multiply the equations \eqref{eq:dimensionless_WP_section} by $\upsilon_\tau$ and repeating the steps used in Lemmas \ref{lemma:stima_C} and \ref{lemma_Lp} to obtain new estimates in $\Omega_{\tau,\tau+1}$ which are independent of $\tau$.

\begin{lemma}\label{lemma:c_asintotico}
Under the hypothesis of Theorem \ref{theorem:globalexistence}, let $(h,d,m,c)$ be the unique solution to \eqref{eq:dimensionless_WP_section} on $\mathbb{R}_+\times\Omega$, then there exists a constant $\overline{H}^\ast$, depending on $\Omega$, $n$, the parameters of the system and the initial data, but not on $\tau$, such that:	
\begin{equation}\label{stima_c_asintotica_sup}
	\sup_{\tau \in \mathbb{N}-\{0\}}\|\nabla c\|_{L^{\frac{n+2}{n+1}}(\Omega_{\tau,\tau+1})}\leq \overline{H}^\ast\,,		
\end{equation}
and
\begin{equation}\label{stima_c_asintotica_limsup}
	\limsup_{\tau\rightarrow \infty}\|\nabla c\|_{L^{\frac{n+2}{n+1}}(\Omega_{\tau,\tau+1})}\leq \overline{H}^\ast\,.	
\end{equation}	
\end{lemma}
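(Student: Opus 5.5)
The plan is to repeat the argument of Lemma~\ref{lemma:stima_C} on each unit time window $(\tau,\tau+1)$, using the cutoff $\upsilon_\tau$ to remove the dependence on the initial datum. The only input will be the uniform $L^1$ bound of Lemma~\ref{lemma:L1_asintotico}.

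\textbf{Step 1: localize in time.} Fix $\tau\in\mathbb N\setminus\{0\}$ and set $w=\upsilon_{\tau-1}\,c$. Since $\upsilon_{\tau-1}$ vanishes for $t\le\tau-1$ and equals $1$ for $t\ge\tau$, the function $w$ satisfies
\[
\partial_t w=D_c\Delta w-\mu w+r\,\upsilon_{\tau-1} d+\upsilon_{\tau-1}'\,c\quad\text{on }(\tau-1,\tau+1)\times\Omega,
\]
with homogeneous Neumann conditions and $w(\tau-1,\cdot)=0$. It coincides with $c$ on $\Omega_{\tau,\tau+1}$.

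\textbf{Step 2: Duhamel and smoothing.} Write Duhamel's formula for $w$ starting from $t_0=\tau-1$:
\[
\nabla w(t)=\int_{\tau-1}^t e^{-\mu(t-s)}\nabla e^{(t-s)D_c\Delta}\big(r\upsilon_{\tau-1}d+\upsilon'_{\tau-1}c\big)(s)\,ds .
\]
Apply the same $L^1\to L^q$ gradient smoothing estimate for the Neumann heat semigroup used in Lemma~\ref{lemma:stima_C}, with $q=\frac{n+2}{n+1}$. This gives
\[
\|\nabla e^{tD_c\Delta}f\|_{L^q}\le C\,t^{-\gamma}\|f\|_{L^1},\qquad \gamma=\tfrac n2\big(1-\tfrac1q\big)+\tfrac12=\tfrac{n+1}{n+2}<1 .
\]
Combined with $|\upsilon'|\le2$, $0\le\upsilon\le1$ and Lemma~\ref{lemma:L1_asintotico}, this yields
\[
\|\nabla w(t)\|_{L^q(\Omega)}\le C\,(r+2)\,\overline H_1\int_0^{2}s^{-\gamma}\,ds\le C',
\qquad t\in(\tau-1,\tau+1).
\]

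\textbf{Step 3: integrate in time.} Integrating the $q$-th power over $(\tau,\tau+1)$ gives \eqref{stima_c_asintotica_sup} with $\overline H^\ast=C'$. The constant $C'$ depends only on $\Omega$, $n$, the parameters, and $\overline H_1$ (hence the initial data), not on $\tau$. The limsup statement \eqref{stima_c_asintotica_limsup} follows immediately from the sup bound. One could even replace $\overline H_1$ by the asymptotic bound \eqref{stima_L1_asintotica_limsup} for large $\tau$.

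\textbf{Main obstacle.} The delicate point is ensuring the semigroup constant is uniform. The windows all have the same length $2$, so the time integral $\int_0^2 s^{-\gamma}\,ds$ is fixed, and it converges because $\gamma<1$. The extra source $\upsilon'c$ is controlled only because $c$ is already uniformly bounded in $L^1$. If the exact exponent in the $L^p$--$L^q$ estimate were problematic, I would instead use a duality argument with a backward Neumann problem on $\Omega_{\tau-1,\tau+1}$, as in Lemma~\ref{lemma_Lp}, which yields the same $\tau$-independent bound.
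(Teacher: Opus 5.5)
Your proposal is correct and follows the same route as the paper. You localize with the cutoff so that the localized chemokine vanishes at the start of the window, apply Duhamel together with the $L^1\to L^{\frac{n+2}{n+1}}$ gradient smoothing estimate (exponent $\frac{n+1}{n+2}<1$) as in Lemma~\ref{lemma:stima_C}, and use the uniform $L^1$ bounds of Lemma~\ref{lemma:L1_asintotico} for both $d$ and the extra source $\upsilon' c$. Your choice of $\upsilon_{\tau-1}$ instead of the paper's $\upsilon_\tau$ is the correct index shift, so that the localized function coincides with $c$ exactly on $\Omega_{\tau,\tau+1}$.
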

\begin{proof}

Repeating the same arguments in Lemma \ref{lemma:stima_C} for the equation of  $\upsilon_\tau \, c$, considering that $\upsilon_\tau c=0$ in $(x,\tau)$ and using \eqref{stima_L1_asintotica_sup} and \eqref{stima_L1_asintotica_limsup} and the properties of $\upsilon_\tau$, we obtain \eqref{stima_c_asintotica_sup} and \eqref{stima_c_asintotica_limsup}. 
\end{proof}

\begin{lemma}\label{lemma:Lp_asintotico}
Under the hypothesis of Theorem \ref{theorem:globalexistence}, let $(h,d,m,c)$ be the unique solution to \eqref{eq:dimensionless_WP_section} on $\mathbb{R}_+\times\Omega$, then there exists a constant $\overline{H}_p$, depending on $\Omega$, $p$, $n$, the parameters of the system and the initial data, but not on $\tau$, such that:
\begin{equation}\label{stima_Lp_asintotica_sup}
	\sup_{\tau \in \mathbb{N}-\{0\}}\left(\| (h,d,m,c)\|_{[L^{p}(\Omega_{\tau,\tau+1})]^4}\right)\leq \overline{H}_p\,,		
\end{equation}
and
\begin{equation}\label{stima_Lp_asintotica_limsup}
	\limsup_{\tau\rightarrow \infty}\left(\|(h,d,m,c)\|_{[L^{p}(\Omega_{\tau,\tau+1})]^4}\right)\leq \overline{H}_p\,.	
\end{equation}	
for any $1\leq p<\infty$. 	
\end{lemma}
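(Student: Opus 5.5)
The plan is to repeat the duality argument of Lemma \ref{lemma_Lp} on each unit time window $\Omega_{\tau-1,\tau+1}$, after localizing in time with the cutoff $\upsilon_\tau$. This makes every constant independent of $\tau$. For $h$ and $d$ nothing needs to be done: Lemma \ref{lemma_saturation} and Lemma \ref{lemma_positiv} give $0\le h,d\le 1$ pointwise, so $\|h\|_{L^p(\Omega_{\tau,\tau+1})}+\|d\|_{L^p(\Omega_{\tau,\tau+1})}\le 2|\Omega|^{1/p}$ uniformly in $\tau$. Once $d$ is bounded in $L^\infty$ uniformly in time, $c$ follows as well. Its equation is linear with source $rd\le r$, so the comparison principle gives $0\le c\le \max(\|c_{in}\|_{L^\infty},\, r/\mu)$ for all $t\ge 0$.

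The real work concerns $m$. I would set $\tilde m=\upsilon_{\tau-1}\, m$ (shifting the cutoff so that $\tilde m=m$ on $(\tau,\tau+1)$ and $\tilde m$ vanishes at $t=\tau-1$). It satisfies
\[
\partial_t\tilde m-D_m\Delta\tilde m=-\nabla\cdot(\chi(\tilde m,c)\nabla c)+\nu\upsilon_{\tau-1}d-\nu\tilde m+\upsilon_{\tau-1}'\,m
\]
on $\Omega_{\tau-1,\tau+1}$, with zero initial datum and Neumann boundary conditions. I then take $\theta\ge 0$ with $\|\theta\|_{L^{p'}(\Omega_{\tau-1,\tau+1})}\le 1$ and let $\psi_{D_m}$ be the backward heat solution on this window of fixed length $2$. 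The constant $C_{p'}$ in \eqref{psi_estimates} then does not depend on $\tau$, and integrating against $\psi_{D_m}$ gives the analogue of \eqref{stima_duality_base_m} with no initial term. The source terms $\nu\upsilon_{\tau-1} d\psi$ and $\upsilon'_{\tau-1} m\psi$ are controlled by $\|\psi\|_{L^{q_1}}$ together with $\|d\|_{L^\infty}\le 1$ and the interpolation \eqref{third_term_m}. The interpolation uses $\sup_t\|m\|_{L^1(\Omega)}\le \overline H_1$ from Lemma \ref{lemma:L1_asintotico}, and gives bounds of the form $C\,\|m\|^{\vartheta}_{L^p(\Omega_{\tau-1,\tau+1})}$.

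The chemotactic term $\int\chi(\tilde m,c)\nabla c\cdot\nabla\psi$ is treated exactly as in \eqref{forth_term_m}. The factor $\|\nabla c\|_{L^{(n+2)/(n+1)}}$ on the window is bounded by Lemma \ref{lemma:c_asintotico}. Since $c$ is in fact bounded in $L^\infty$ with bounded source, one could even use a uniform gradient bound from the semigroup estimates, which is simpler. This gives
\[
\|\tilde m\|_{L^p}\le C_1+C_2\|m\|^{\vartheta}_{L^p(\Omega_{\tau-1,\tau+1})}+C_3\|m\|^{s}_{L^p(\Omega_{\tau-1,\tau+1})},
\]
with $\tau$-independent constants. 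The step I expect to be the main obstacle is that the right-hand side involves $m$ on the larger window, not $\tilde m$. I would close the argument by induction on $\tau$. Set $X_\tau=\|m\|_{L^p(\Omega_{\tau,\tau+1})}$; the inequality then reads $X_\tau\le C_1+C(X_{\tau-1}+X_\tau)^{\max(\vartheta,s)}$. By Young's inequality this becomes $X_\tau\le C'+\varepsilon X_{\tau-1}$ with $\varepsilon<1$, and iterating gives $\sup_\tau X_\tau\le C'/(1-\varepsilon)+X_0$. Here $X_0$ is bounded by Lemma \ref{lemma_Lp} on $[0,1]$. This yields \eqref{stima_Lp_asintotica_sup}, and \eqref{stima_Lp_asintotica_limsup} follows because $X_0$ drops out in the limit, so the limsup is at most $C'/(1-\varepsilon)\le\overline H_p$.

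If the Young splitting turns out to be awkward, the fallback is to avoid the iteration altogether. For that I would use $L^1$-in-time-uniform bounds: every sublinear power is taken of $\|m\|_{L^1}$ norms, which are already uniform by Lemma \ref{lemma:L1_asintotico}, and the interpolation exponents are arranged so that only $\|\tilde m\|_{L^p}$ appears on the right. Since $\tilde m$ and $m$ have the same $L^1$ bound, the contribution of the cutoff derivative $\upsilon'$ is then estimated purely through $L^1$ times $\|\psi\|_{L^\infty}$ for $p'$ chosen close to $1$.
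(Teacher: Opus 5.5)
Your argument is correct, and it takes a different route from the paper's.

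\textbf{How the routes differ.} The paper repeats the duality scheme of Lemma \ref{lemma_Lp} for $h+d$, $m$ and $c$ simultaneously. Each is multiplied by $\upsilon_\tau$ on $\Omega_{\tau,\tau+2}$, and the paper closes with a single Young inequality on $\|\upsilon_\tau(m+h+d+c)\|_{L^p(\Omega_{\tau,\tau+2})}$. You instead get $h$ and $d$ directly from the invariant triangle $h,d\ge0$, $h+d\le1$. You get $c$ by comparison with the constant supersolution $\max(\|c_{in}\|_{L^\infty},r/\mu)$. Duality is then needed only for $m$, and you close by a recursion over consecutive windows.

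That recursion handles the point the paper's final step passes over. The term $\int\upsilon_\tau' m\,\psi_{D_m}$ involves $m$ itself on the layer where $0\le\upsilon_\tau<1$. It is therefore not bounded by a power of the left-hand side $\|\upsilon_\tau m\|_{L^p}$, so the concluding Young step as written does not close without something like your induction. The paper's template has the merit of using no pointwise information. Yours exploits the structure of the system to reduce everything to one linear equation for $m$ with bounded drift, which is shorter and genuinely closes.

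\textbf{Two corrections.} First, for the chemotactic term the uniform bound $\|\nabla c\|_{L^\infty}\le C$ is not merely simpler but necessary. The H\"older step in \eqref{forth_term_m}, using only $\nabla c\in L^{(n+2)/(n+1)}$, is unbalanced:
$\frac1{q_2}+(1-s)+\frac{n+1}{n+2}=\frac1{p'}+(1-s)+\frac{n}{n+2}>\frac1{p'}$ for every $s<1$.
So the bound of Lemma \ref{lemma:c_asintotico} is too weak here.

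You can get $\nabla c$ bounded uniformly from maximal regularity for $\upsilon_{\tau-1}c$ on unit windows, since its source $\upsilon_{\tau-1}'c+r\upsilon_{\tau-1}d$ is uniformly bounded. Then $\chi_0\int\tilde m|\nabla c||\nabla\psi|\le C\|\nabla\psi\|_{L^{q_2}}\overline H_1^{1-s}\|\tilde m\|_{L^p}^{s}$ with $s=\frac{n+1}{n+2}$. The recursion then goes through, and in fact reduces to $X_\tau\le C+CX_{\tau-1}^{\vartheta}$. The same observation lets you bound $X_0$ by duality on $\Omega_{0,1}$ with the initial term, rather than through Lemma \ref{lemma_Lp}, whose proof relies on \eqref{forth_term_m}.

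Second, your fallback is mis-stated. $\psi\in L^\infty$ requires $p'>\frac{n+2}{2}$, which corresponds to $p<\frac{n+2}{n}$, not to $p'$ close to $1$. That route would need a bootstrap in $p$, but since the recursion works you do not need it.
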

\begin{proof}
The proof follows the duality arguments performed in Lemma \ref{lemma_Lp}. The main difference is that the equations for $h$, $d$, $m$ and $c$ are multiplied by $\upsilon_\tau$ and equation \eqref{eq_heat_rev_psi} and regularity estimates \eqref{psi_estimates} are considered in the domain $\Omega_{\tau,\tau+2}$.

Multiplying $\theta$ by $\upsilon_\tau m$, integrating in space and time and using equation \eqref{eq_heat_rev_psi} in $\Omega_{\tau,\tau+2}$, integrating thus by parts and using \eqref{eq:dimensionless_WP_section}, we obtain:

\begin{multline}\label{stima_duality_base_m_asintotica}
	\int_\tau^{\tau+2}\int_\Omega \upsilon_\tau m\,
	\theta\, dx dt \leq \int_\tau^{\tau+2}\int_\Omega \chi(m,c)\nabla c \nabla \psi_{D_m}\, dx dt\,\,+\\+\int_\tau^{\tau+2}\int_\Omega \upsilon{'}_\tau m\,
	\psi_{D_m}\, dx dt+\nu\int_\tau^{\tau+2}\int_\Omega \upsilon_\tau d\,
	\psi_{D_m}\, dx dt\,.
\end{multline}
Proceeding as above, we multiply $\theta$ by $\upsilon_\tau (h+d)$, integrate over space and time, and apply \eqref{eq_heat_rev_psi} in $\Omega_{\tau,\tau+2}$ together with integration by parts and \eqref{eq:dimensionless_WP_section}, which yields:	
\begin{multline}\label{stima_duality_base_h_d_asintotica}
	\int_\tau^{\tau+2}\int_\Omega \upsilon_\tau(h+d)\,
	\theta\, dx dt \leq\alpha(1+c_\epsilon)\int_\tau^{\tau+2}\int_\Omega m\,h\, \psi_{D_d}\, dx dt\,\,+\rho\int_\tau^{\tau+2}\int_\Omega \upsilon_\tau d\,
	\psi_{D_d}\, dx dt+
	\\+\int_\tau^{\tau+2}\int_\Omega \upsilon_\tau(\sigma+\rho d)\,
	\psi_1\, dx dt+\int_\tau^{\tau+2}\int_\Omega \upsilon{'}_\tau d \psi_{D_d}\,
	dx dt+\int_\tau^{\tau+2}\int_\Omega \upsilon{'}_\tau h \psi_1\, dx dt\,.			
\end{multline}
For the equation of $c$, we have:
\begin{equation}\label{stima_duality_base_c_asintotica}
	\int_\tau^{\tau+2}\int_\Omega \upsilon_\tau\, c\,
	\theta\, dx dt \leq r\int_\tau^{\tau+2}\int_\Omega \upsilon_\tau d\,
	\psi_{D_c}\, dx dt+
	\int_\tau^{\tau+2}\int_\Omega \upsilon{'}_\tau c \, \psi_{D_c}\,dx dt \,.			
\end{equation}
We can estimate each terms in \eqref{stima_duality_base_m_asintotica}, \eqref{stima_duality_base_h_d_asintotica} and \eqref{stima_duality_base_c_asintotica} following the proof in Lemma \ref{lemma_Lp} and using \eqref{stima_L1_asintotica_sup}, \eqref{stima_L1_asintotica_limsup}, \eqref{stima_c_asintotica_sup} and \eqref{stima_c_asintotica_limsup}, obtaining:
%
\begin{multline*}
	\left\|\upsilon_\tau(m+h+d+c)\right\|_{L^p(\Omega_{\tau,\tau+2})} \leq \overline{C}_1+\overline{C}_2\left\|\upsilon_\tau (m+h+d+c)\right\|_{L^p(\Omega_{\tau,\tau+2})}^\vartheta \\+\overline{C}_3\left\|\upsilon_\tau (m+h+d+c)\right\|_{L^p(\Omega_{\tau,\tau+2})}^s+\overline{C}_4\left\|\upsilon_\tau (m+h+d+c)\right\|_{L^p(\Omega_{\tau,\tau+2})}^\xi\,,\nonumber
\end{multline*}
with $\vartheta$, $s$ and $\xi\in(0,1)$. The constants $\overline{C}_1$, $\overline{C}_2$, $\overline{C}_3$ and $\overline{C}_4$ can be computed explicitly and depend on $\Omega$, $n$, $p$, the initial data and the parameters of the system, but not on $\tau$. 
We obtain \eqref{stima_Lp_asintotica_sup} and \eqref{stima_Lp_asintotica_limsup} using Young's inequality and the properties of $\upsilon_\tau$. 
%

\end{proof}

We conclude with the following proposition.

\begin{proposition}\label{proposition_uniform}
	Under the hypotheses of Theorem \ref{theorem:globalexistence}, let $(h,d,m,c)$ be the unique solution to system \eqref{eq:dimensionless_WP_section} with Neumann boundary conditions \eqref{BC} and initial data \eqref{IC}, constructed in Propositions \ref{proposition_existence} and \ref{proposition_uniqueness}. Then, there exists a constant $\overline{C}>0$, depending on $\Omega$, $h_{in}$, $d_{in}$, $m_{in}$, $c_{in}$ and on the parameters of the system, such that the solution satisfies:
	\begin{equation}\label{uniform_bound_sup_lemma}
		\sup_{t\geq0}\left\|\left(h(t,\cdot)\,,\, d(t,\cdot)\,,\, m(t,\cdot)\,,\, c(t,\cdot)\right)\right\|_{[L^{\infty}(\Omega)]^4} \leq C\,,			
	\end{equation}
	\begin{equation}\label{uniform_bound_limsup_lemma}
		\limsup_{t\rightarrow +\infty}\left\|\left(h(t,\cdot)\,,\, d(t,\cdot)\,,\, m(t,\cdot)\,,\, c(t,\cdot)\right)\right\|_{[L^{\infty}(\Omega)]^4}\leq C\,,			
	\end{equation}
	and
	\begin{equation}\label{uniform_grad_bound_sup_lemma}
		\sup_{t\geq0}\left\|\left(\nabla h(t,\cdot)\,,\, \nabla d(t,\cdot)\,,\, \nabla m(t,\cdot)\,,\, \nabla c(t,\cdot)\right)\right\|_{[L^{\infty}(\Omega)]^4} \leq C\,,			
	\end{equation}
	\begin{equation}\label{uniform_grad_bound_limsup_lemma}
		\limsup_{t\rightarrow +\infty}\left\|\left(\nabla h(t,\cdot)\,,\, \nabla d(t,\cdot)\,,\,\nabla m(t,\cdot)\,,\, \nabla c(t,\cdot)\right)\right\|_{[L^{\infty}(\Omega)]^4}\leq C\,.			
	\end{equation}  
	
\end{proposition}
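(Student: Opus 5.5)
The plan is to upgrade the uniform-in-$\tau$ $L^p$ bounds of Lemma~\ref{lemma:Lp_asintotico} to uniform $W^{1,\infty}$ bounds. We work on unit-length time windows with the cutoff $\upsilon_\tau$ and apply maximal parabolic regularity together with Lemma~\ref{embedding_lemma}. For fixed $\tau\in\mathbb{N}$ we consider $\upsilon_\tau u$ on $\Omega_{\tau,\tau+2}$, for $u\in\{h,d,m,c\}$. This function vanishes at $t=\tau$ and satisfies the same equation as $u$, with an extra source $\upsilon_\tau' u$. Since $|\upsilon_\tau'|\le 2$, the window length is fixed, and the initial trace is zero, the maximal regularity constant $C$ in
\[
\|\upsilon_\tau u\|_{W^{1,2}_p(\Omega_{\tau,\tau+2})}\le C\|F_\tau\|_{L^p(\Omega_{\tau,\tau+2})}
\]
is independent of $\tau$. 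We then choose $p=\tilde p>n+2$. By Lemma~\ref{embedding_lemma} the left-hand side controls $\sup_{t\in[\tau+1,\tau+2]}\|u(t)\|_{W^{1,\infty}(\Omega)}$, because $\upsilon_\tau\equiv 1$ there. The interval $[0,2]$ is already covered by Proposition~\ref{proposition_existence} with $T=2$. Taking the supremum over $\tau$ then gives \eqref{uniform_bound_sup_lemma}--\eqref{uniform_grad_bound_limsup_lemma}.

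The equations are treated in order of increasing coupling. For $c$, the source is $r\upsilon_\tau d+\upsilon_\tau' c$, which Lemma~\ref{lemma:Lp_asintotico} bounds in $L^{\tilde p}$ uniformly in $\tau$. This yields uniform bounds on $c$, $\nabla c$ and also on $\Delta c$ in $L^{\tilde p}(\Omega_{\tau+1,\tau+2})$. For $h$ and $d$, Lemma~\ref{lemma_saturation} already gives $0\le h,d\le 1$ pointwise, so $L^\infty$ is immediate. For the gradients, the sources $(\sigma+\rho d)(1-h-d)$, $\rho d(1-h-d)$ and $A(m,c)h$ are bounded by $C(1+m)$, which is uniformly in $L^{\tilde p}$ by Lemma~\ref{lemma:Lp_asintotico}. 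The same windowed maximal regularity argument therefore controls $\nabla h$ and $\nabla d$, once $m$ is handled or simply by using the $L^p$ bound on $m$ directly.

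The main obstacle is the chemotactic term in the $m$-equation. Expanding it gives
\[
\nabla\cdot(\chi(m,c)\nabla c)=\chi_0\left[\frac{c}{\kappa+c}\nabla m\cdot\nabla c+\frac{c}{\kappa+c}m\Delta c+\frac{\kappa m}{(\kappa+c)^2}|\nabla c|^2\right].
\]
The second and third pieces are uniformly in $L^{\tilde p/2}$, or in $L^{q}$ for any $q$ once $\nabla c\in L^\infty$. We get this by using the bounds on $m$ in every $L^p$, on $\nabla c$ in $L^\infty$ and on $\Delta c$ in $L^{\tilde p}$ (bootstrapping $c$ first with a larger $p$ if needed), together with Hölder's inequality.

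The first piece contains $\nabla m$, and here we plan two routes. The first is to treat $\chi_0\frac{c}{\kappa+c}\nabla c\cdot\nabla m$ as a lower-order drift with a uniformly bounded coefficient, since $\nabla c\in L^\infty$ uniformly. Maximal regularity for parabolic operators with bounded drift, applied on each unit window with $\tau$-independent constants, then gives $W^{1,2}_{\tilde p}$ bounds. The alternative is a short bootstrap: first estimate $\nabla m$ in a lower $L^q$ space via the divergence-form estimate, where $\chi(m,c)\nabla c\in L^{p}$ gives $\nabla m\in L^{p}$. Then iterate up to $\tilde p$. Either way the constants depend only on the uniform quantities from Lemmas~\ref{lemma:L1_asintotico}--\ref{lemma:Lp_asintotico}, so they do not depend on $\tau$. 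The limsup statements follow because the same bounds hold for every sufficiently large $\tau$.
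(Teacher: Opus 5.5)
Your proposal is correct. For $c$, $h$ and $d$ it matches the paper's argument: cut off with $\upsilon_\tau$, apply maximal regularity on $\Omega_{\tau,\tau+2}$ with zero initial trace (so the constants are independent of $\tau$), use the uniform $L^p$ bounds of Lemma~\ref{lemma:Lp_asintotico}, and conclude with the embedding $W^{1,2}_p\hookrightarrow C^{0,1}$ for $p>n+2$ from Lemma~\ref{embedding_lemma}.

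The genuine difference is in the $m$-equation. The paper does not expand the chemotactic term. It writes $\upsilon_\tau m$ and $\upsilon_\tau\nabla m$ through the Duhamel formula and applies $L^p$--$L^\infty$ heat-semigroup smoothing directly to the divergence-form source, using only $\|\nabla c\|_{L^\infty}$ and $\|m\|_{L^p}$. This keeps the divergence structure and never needs second derivatives of $c$. It works for the $L^\infty$ bound on $m$, where the kernel factor $(t-s)^{-\frac12-\frac{n}{2p}}$ is integrable for $p>n$. For $\nabla m$, however, the gradient together with the divergence falls on the kernel and gives the factor $(t-s)^{-1-\frac{n}{2p}}$. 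This is not integrable in time, so that step of the paper does not close as written.

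Your route avoids this problem. You expand $\nabla\cdot(\chi(m,c)\nabla c)$ and control $m\Delta c$ and $m|\nabla c|^2$ in $L^{\tilde p}$ after bootstrapping $c$. The cost is that you need $\Delta c\in L^q$ on each window for large $q$. This is available, since Lemma~\ref{lemma:Lp_asintotico} holds for every finite $p$.

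You then treat $\chi_0\frac{c}{\kappa+c}\nabla c\cdot\nabla m$ as a drift with a $\tau$-uniformly bounded coefficient. The drift can even be absorbed directly via
\[
\|\nabla(\upsilon_\tau m)\|_{L^{\tilde p}(\Omega_{\tau,\tau+2})}\le\varepsilon\|\upsilon_\tau m\|_{W^{1,2}_{\tilde p}(\Omega_{\tau,\tau+2})}+C_\varepsilon\|\upsilon_\tau m\|_{L^{\tilde p}(\Omega_{\tau,\tau+2})},
\]
where the last term is already bounded by Lemma~\ref{lemma:Lp_asintotico}. Your alternative works too, and needs no iteration: since $\chi(m,c)\nabla c\in L^q$ for every $q$, the divergence-form estimate gives $\nabla m\in L^q$ for all $q$ at once. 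Either way you get a closed, $\tau$-independent $W^{1,2}_{\tilde p}$ bound for $m$, and hence the uniform gradient bound. Your argument is therefore the more robust of the two for the $m$-component.
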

\begin{proof}
Using the embedding properties stated in Lemma \ref{embedding_lemma} together with the maximal regularity theory for the heat equation and taking $p>n+2$, we obtain:
	\begin{align}
		&\|\upsilon_\tau c\|_{L^\infty(\Omega_{\tau,\tau+2})}+\|\upsilon_\tau \nabla c\|_{L^\infty(\Omega_{\tau,\tau+2})}\leq \overline{C}^\ast\|\upsilon_\tau c\|_{W_p^{1,2}(\Omega_{\tau,\tau+2})}\nonumber \\
        &\leq \overline{C}^\ast\|\upsilon_\tau{'}c-\upsilon_\tau \mu c+r \upsilon_\tau\,d\|_{L^p(\Omega_{\tau,\tau+2})}\,,
	\end{align}
where $\overline{C}^\ast$ depends on $\Omega$, $n$ and the parameters of the system, but not on $\tau$.
Using \eqref{stima_Lp_asintotica_sup}, \eqref{stima_Lp_asintotica_limsup} and the definition of $\upsilon_\tau$, the Lemma is proved for $c$.

Analogously for $h$ and $d$, as:
\begin{align}
	&\|\upsilon_\tau h\|_{L^\infty(\Omega_{\tau,\tau+2})}+\|\upsilon_\tau \nabla h\|_{L^\infty(\Omega_{\tau,\tau+2})}\leq \overline{C}^\ast\|\upsilon_\tau h\|_{W_p^{1,2}(\Omega_{\tau,\tau+2})}\leq\nonumber \\
& \overline{C}^\ast\|(\upsilon_\tau{'}-\upsilon_\tau)h\|_{L^p(\Omega_{\tau,\tau+2})} +\overline{C}^\ast\|\upsilon_\tau(\sigma+\rho \,d)(1-h-d)-\upsilon_\tau\,A(m,c)h \|_{L^p(\Omega_{\tau,\tau+2})}\, \nonumber
\end{align}
and
\begin{align}
&	\|\upsilon_\tau d\|_{L^\infty(\Omega_{\tau,\tau+2})}+\|\upsilon_\tau \nabla d\|_{L^\infty(\Omega_{\tau,\tau+2})}\leq \overline{C}^\ast\|\upsilon_\tau d\|_{W_p^{1,2 }(\Omega_{\tau,\tau+2})}\leq \nonumber	 \\
	&\overline{C}^\ast\|(\upsilon_\tau{'}-\delta\upsilon_\tau)d\|_{L^p(\Omega_{\tau,\tau+2})}+\overline{C}^\ast\|\upsilon_\tau\,A(m,c)h -\upsilon_\tau\,\rho \,d(1-h-d)\|_{L^p(\Omega_{\tau,\tau+2})}\,.\nonumber	
\end{align}
For the $m$-equation, following \cite{DGMT21}, we write:
\begin{equation}
\upsilon_\tau m=\int_\tau^t e^{-D_m \Delta (t-\tau-s)}\left(\upsilon_\tau \nu (d-m)+\upsilon_\tau{'}m-\upsilon_\tau\nabla\cdot\left(\chi(m,c)\nabla c\right)  \right) ds\,,
\end{equation} 
and
\begin{equation}
	\upsilon_\tau \nabla m=\int_\tau^t e^{-D_m \Delta (t-\tau-s)}\nabla\left(\upsilon_\tau \nu (d-m)+\upsilon_\tau{'} m-\upsilon_\tau\nabla\cdot\left(\chi(m,c)\nabla c\right)  \right) ds\,.
\end{equation}
Using the $L^p-L^q$ estimate (see \cite{GGS10}) of the heat kernel, we have:
\begin{align}
\sup_{\tau< t< \tau+2}&\|\upsilon_\tau m\|_{L^\infty(\Omega)}
\leq \nonumber\\
&\overline{C}^\ast \left(\|m\|_{L^p(\Omega_{\tau,\tau+2})}+\|d\|_{L^p(\Omega_{\tau,\tau+2})}\right) \int_\tau^{\tau+2}\left(1+(t-\tau-s)^{-n/2p}\right)ds+\nonumber\\
&+\overline{C}^\ast\chi_0 \|\nabla\, c\|_{L^\infty(\Omega_{\tau,\tau+2})}\|m\|_{L^p(\Omega_{\tau,\tau+2})}\int_\tau^{\tau+2}\left(1+(t-\tau-s)^{-\frac{1}{2}-\frac{n}{2p}}\right)ds\,,	
\end{align}
and
\begin{align}
	\sup_{\tau< t< \tau+2}&\|\upsilon_\tau \nabla m\|_{L^\infty(\Omega)}
	\leq \nonumber\\
	&\overline{C}^\ast\left(\|m\|_{L^p(\Omega_{\tau,\tau+2})}+\|d\|_{L^p(\Omega_{\tau,\tau+2})}\right) \int_\tau^{\tau+2}\left(1+(t-\tau-s)^{-\frac{1}{2}-\frac{n}{2p}}\right)ds+ \nonumber\\
	&+\overline{C}^\ast\chi_0 \|\nabla\, c\|_{L^\infty(\Omega_{\tau,\tau+2})}\|m\|_{L^p(\Omega_{\tau,\tau+2})}\int_\tau^{\tau+2}\left(1+(t-\tau-s)^{-1-\frac{n}{2p}}\right)ds\,,
\end{align}
where $\overline{C}^\ast$ depends on $\Omega$, $n$ and the parameters of the system, but not on $\tau$.

Choosing $p>n$ and using \eqref{stima_Lp_asintotica_sup}, \eqref{stima_Lp_asintotica_limsup} together with the properties of $\upsilon_\tau$, we obtain \eqref{uniform_grad_bound_sup_lemma} and \eqref{uniform_grad_bound_limsup_lemma} for $m$.
\end{proof}
Proposition \ref{proposition_existence},  Proposition \ref{proposition_uniqueness} and Proposition \ref{proposition_uniform} complete the proof of Theorem \ref{theorem:globalexistence}.

The results of this section provide the analytical framework ensuring that the model is well-posed and biologically consistent, which allows for the investigation of the dynamical properties of the system in the next section.
%
%
%
%
%


\section{Early-stage dynamics: invasion and propagation}

In this section, we investigate the onset of disease spreading, focusing on the early stage
of invasion where damage and inflammatory activity remain small. In this regime, the dynamics
is governed by small perturbations of the healthy equilibrium and can therefore be analyzed by linearization.

Our goal is to identify the mechanisms responsible for the initiation of spatial propagation.
In particular, we determine whether spatial heterogeneity may arise through diffusion-driven instability, or whether disease progression occurs through invasion processes.

To this end, we linearize the system around the healthy equilibrium and derive the associated
dispersion relation. This allows us to characterize both the invasion threshold and the
propagation speed of pathological fronts in the early-stage regime.

In order to linearize the system around the healthy equilibrium $H^*$ given in \eqref{DF_eq}, we introduce the small perturbations:
\begin{equation}
    \tilde{h}=h_{0}+h, \qquad \tilde{d}=d, \qquad \tilde{m}=m,\qquad  \tilde{c}=c.
\end{equation}
Substituting in \eqref{eq:dimensionless}, the resulting linearized system is:
\begin{equation}
    \dot{\mathbf{w}}= K \mathbf{w}+ D \Delta \mathbf{w}, \qquad \mathrm{with} \quad
    \mathbf{w}=\begin{pmatrix}
        \tilde{h}-h_0\\
        \tilde{d}\\
        \tilde{m}\\
        \tilde{c}
    \end{pmatrix}
    \label{eq:matriciale}
\end{equation}
and \begin{equation}
K=\begin{pmatrix}
     -(1+\sigma) & \dfrac{\rho}{1+\sigma}-\sigma  & -\alpha c_\epsilon h^{\star} &0\\
    0& -\delta-\dfrac{\rho}{1+\sigma}& \alpha c_\epsilon h^{\star}  &0\\
    0 & \nu &-\nu&0\\
    0 & r & 0 &-\mu\end{pmatrix}, \qquad D=\begin{pmatrix}
        1 & 0 &0 &0\\
        0 & D_d &0 &0\\
        0 &0 &D_m &0\\
         0 &0 &0 &D_c
\end{pmatrix}.
\end{equation}
\subsection{Absence of diffusion-driven instabilities}

We first investigate whether spatial heterogeneity can arise through Turing instability. Such mechanisms would lead to pattern formation from small perturbations of the homogeneous equilibrium.

To this end, we consider normal mode solutions of the form $e^{i k x + \lambda t}$. Substituting this ansatz into \eqref{eq:matriciale}, the eigenvalues $\lambda$ are determined by the characteristic equation:
\begin{equation*}
   | \lambda I- \Gamma K+ k^2 D|=0,
\end{equation*}
i.e.
\begin{equation}
    (\lambda+D_c k^2+\mu) (\lambda +\sigma +1 + k^2) \left[ \lambda^2+ \left( \delta + D_d k^2 + D_m k^2 + \nu + \dfrac{\rho}{1 + \sigma}\right) \lambda +h (k^2)\right]=0,
\end{equation}
where
\begin{equation}
    h(k^2)=D_d D_m k^4+ \left(  \delta D_m + D_d \nu + D_m  \dfrac{\rho}{1 + \sigma} \right)k^2+\nu \delta +\nu \dfrac{\rho}{1+\sigma} -\nu \alpha c_\epsilon \dfrac{\sigma}{1+\sigma}.
\end{equation}
Spatially heterogeneous modes correspond to wavenumbers $k$ such that $\mathrm{Re}(\lambda)>0$. Therefore, diffusion-driven instability can only occur if there exists $k>0$ such that
$h(k^2) < 0$.
Marginal stability is reached when the minimum of $h(k^2)$ vanishes. This condition yields the critical value:
\begin{equation}\label{Tur_thresh}
k^2= -\dfrac{1}{D_d D_m} \left( D_m \delta+ D_m \dfrac{\rho}{1+\sigma}+ D_d \nu\right).
\end{equation}
Since the right-hand side in \eqref{Tur_thresh} is negative for all admissible parameter values, no real wavenumber
$k$ satisfies this condition. Consequently, diffusion-driven (Turing) instabilities cannot occur in this system.

This result shows that spatial pattern formation cannot be triggered by diffusion alone.
In particular, the model does not support the emergence of patchy or heterogeneous spatial structures arising from Turing-type mechanisms. Therefore, disease progression cannot occur through the spontaneous formation of localized damaged regions within otherwise healthy tissue. Instead, any spatial spreading of damage must arise through invasion processes,
in which localized perturbations propagate across the tissue in the form of traveling fronts, as analyzed in the following subsection.

\subsection{Invasion threshold and propagation speed}

In this subsection, we characterize the conditions under which localized damage is able to
invade healthy tissue and propagate across the domain. Based on the linearization around
the healthy equilibrium, we derive both the invasion threshold and the propagation speed
of the resulting fronts.

This analysis complements the previous result on the absence of diffusion-driven instabilities:
since pattern formation cannot arise through Turing mechanisms, spatial progression must be
driven by invasion processes. The linearized system allows us to explicitly identify the
parameter regime in which small perturbations grow and initiate spatial spreading.

As a first step, we examine the temporal stability of the system in the long-wavelength limit ($k=0$).
In this regime, the characteristic polynomial factorizes as follows:
\begin{equation}
	p(\lambda) = (\lambda + \mu) (1 + \sigma + \lambda) \tilde{g}(\lambda) = 0,
\end{equation}
and the dynamics are governed by the quadratic factor:
\begin{equation}
	\tilde{g}(\lambda) = \lambda^2+ \left( \delta + \nu + \dfrac{\rho}{1 + \sigma}\right) \lambda +\nu \delta +\nu \dfrac{\rho}{1+\sigma} -\nu \alpha c_\epsilon \dfrac{\sigma}{1+\sigma}.
\end{equation}
Our objective is to determine whether $\tilde{g}(\lambda) = 0$ admits a positive real root, which indicates instability of the healthy state and, consequently, the onset of invasion.\\

This equation admits two real roots, as shown by the discriminant
\begin{equation}
	\Delta = \left( \delta- \nu +\dfrac{\rho}{1+\sigma}\right)^2+4 \dfrac{\alpha c_\epsilon \nu \sigma}{1+\sigma} > 0,
\end{equation}
which is strictly positive. By Descartes' rule of signs, the equation admits at least one
negative root. A unique positive root exists if and only if the following threshold condition holds:
\begin{equation}\label{thresh_inv}
	\delta (1+\sigma) + \rho - \alpha c_\epsilon \sigma < 0.
\end{equation}
This condition identifies the parameter regime in which the homogeneous mode becomes linearly
unstable. In particular, when \eqref{thresh_inv} holds, the quadratic factor $\tilde g(\lambda)$
admits a positive root, so that small perturbations of the healthy state grow exponentially
in time, leading to the onset of invasion in the early-stage regime.

\begin{remark}[Interpretation of the invasion threshold.]\label{rem_thresh}
	The invasion condition can be interpreted in terms of an effective
	damage reproduction number. Rewriting the threshold condition:
	\[
	\delta (1+\sigma) + \rho - \alpha c_\epsilon \sigma = 0
	\]
	yields:
	\begin{equation}\label{R_d}
	R_d =
	\frac{\alpha c_\epsilon \sigma}
	{\delta (1+\sigma) + \rho}.
	\end{equation}
	This quantity measures the balance between damage amplification
	driven by immune activity (the term $\alpha c_\epsilon \sigma$) and the mechanisms responsible for
	damage removal and tissue repair (the term $\delta (1+\sigma) + \rho$). Spatial invasion becomes possible
	when $R_d>1$, whereas for $R_d<1$ small perturbations of the healthy
	state decay. Biologically, this threshold expresses the transition between a regime in which repair
	mechanisms dominate and suppress damage and a regime in which immune-mediated damage
	overcomes tissue recovery, allowing degeneration to spread.
\end{remark}

To characterize this invasion, we look for traveling front solutions of the linearized system in the form:
\begin{equation}
	\mathbf{w}(x,t) = \mathbf{v} e^{-\gamma(x - s t)},
\end{equation}
where $s>0$ represents the propagation speed, $\gamma > 0$ is the spatial decay rate of the front and $\mathbf{v} = (\hat{h}, \hat{d}, \hat{m}, \hat{c})^T$ is the amplitude vector. Substituting this ansatz into the linearized system, we obtain the following eigenvalue problem:
\begin{equation}
	(\gamma^2 D  + K - \gamma s \mathbb{I}) \mathbf{v} = 0.
\end{equation}
Non-trivial solutions exist only when $\det(\gamma^2 D + K - \gamma s \mathbb{I}) = 0$. Due to the block structure of the system matrices, the determinant factorizes as:
\begin{equation}\label{det_vel}
	(\gamma^2 D_c - \gamma s - \mu) (\gamma^2 - \gamma s - (1+\sigma)) p_2(\gamma, s) = 0,
\end{equation}
where:
\begin{align}
	&p_2(\gamma, s) = \gamma^2 s^2 + \gamma \left[\delta - (D_d + D_m) \gamma^2 + \nu + \dfrac{\rho}{1+\sigma} \right]s + D_d D_m \gamma^4 \\ \nonumber
	&+ \gamma^2 \left( -D_m \dfrac{\rho}{1+\sigma} - D_m \delta - \nu D_d \right) + \underbrace{\delta \nu + \nu \dfrac{\rho}{1+\sigma} - \dfrac{\alpha c_\epsilon \nu \sigma }{1 + \sigma}}_{\mathcal{C}}.
\end{align}
The first two factors in \eqref{det_vel} correspond to modes associated with the decoupled dynamics of the chemokine and healthy tissue components. These modes do not generate instability and therefore do not contribute to invasion. 

In contrast, the quadratic factor $p_2(\gamma,s)$ captures the coupled dynamics of damaged tissue and immune response, which are responsible for the onset of instability and spatial spreading. For this reason, the propagation speed of the invasion front is determined by the roots of $p_2(\gamma,s)=0$.
For fixed $\gamma>0$, this is a quadratic equation in $s$, whose discriminant is:
\begin{equation}
	\Delta(p_2) = \gamma^2 \left( \delta - D_d \gamma^2 + D_m \gamma^2 - \nu + \dfrac{\rho}{1+\sigma} \right)^2 + 4 \alpha c_\epsilon \gamma^2 \nu \dfrac{\sigma}{1+\sigma} > 0.
\end{equation}
Hence, for any $\gamma>0$, the equation admits two real roots.

A physically meaningful invasion requires the existence of a positive propagation speed ($s>0$). 
To establish this, we analyze the constant term $\mathcal{C}$ of $p_2$, which is negative under the invasion condition \eqref{thresh_inv}. In this case, the polynomial $p_2(\gamma,0)$, viewed as a quadratic function of $z=\gamma^2$, admits a unique positive root $z^\star$. Consequently, there exists a corresponding value $\gamma^\star>0$ such that:
\[
p_2(\gamma,0)<0 \qquad \text{for}\quad 0<\gamma<\gamma^\star.
\]
For such values of $\gamma$, the equation $p_2(\gamma,s)=0$, regarded as a quadratic polynomial in $s$ (for fixed $\gamma$), has a negative constant term and positive leading coefficient and therefore admits exactly one positive root. This defines a positive branch $s_+(\gamma)$ corresponding to admissible propagation speeds.

The biologically relevant front speed is then obtained as the minimal value of this branch:
\begin{equation}
	s^* = \min_{\gamma>0} s_+(\gamma),
	\label{frontvelocity}
\end{equation}
whenever the minimum is attained. This selection criterion corresponds to the classical linear spreading speed, which characterizes invasion fronts governed by the growth of small perturbations at the leading edge. In analogy with Fisher-KPP type systems, this suggests that the propagation is driven
by a pulled front mechanism \cite{vanSaarloos2003}. This interpretation will be corroborated by numerical simulations of the full nonlinear model in section \ref{sec_num}.
 
 \begin{remark}[Biological interpretation of the propagation speed]
 The minimal speed $s^*$ quantifies the rate at which damage spreads across
 the tissue once invasion is triggered. In biological terms, it represents
 the velocity at which degenerative regions expand through the muscle.
 	
 This highlights that, even when the healthy state becomes unstable, disease
 progression is not instantaneous but occurs through a spatially organized
 process, with a well-defined propagation speed determined by the balance
 between damage amplification and tissue repair processes.
 \end{remark}
 
The above analysis provides a coherent description of the early spatial dynamics of DMD. The invasion threshold identifies the parameter regime in which localized damage can grow from small perturbations of the healthy state, while the associated propagation speed quantifies the rate at which degenerative regions expand through the tissue.
 
In this early-stage regime, propagation is determined by the linearized dynamics near the front, leading to a propagation mechanism consistent with a pulled front.

\begin{remark}[Role of chemotaxis]
	The chemotactic flux does not affect the invasion threshold or the linear spreading speed. 
	Indeed, the chemotaxis term involves nonlinear contributions of the form $m \nabla c$ and therefore vanishes in the linearization around the healthy equilibrium.
	
	This shows that, in the early-stage regime considered in this work, the onset of spatial disease propagation is entirely controlled by the local damage--inflammation feedback, independently of directed immune migration. In particular, invasion is driven by reaction mechanisms rather than by chemotactic effects.
	
	As a consequence, chemotaxis has only a limited quantitative influence on the early invasion dynamics and does not significantly alter the shape or speed of the propagating fronts, as confirmed by numerical simulations. Its inclusion nevertheless provides a structurally consistent framework for the investigation of later stages of the disease, where stronger inflammatory activity may lead to non-negligible chemotactic effects and more complex spatial behaviors.
\end{remark}
\section{Numerical simulations}\label{sec_num}

In this section, we investigate the dynamical behavior of the model through numerical simulations of the full nonlinear system. The aim is twofold: first, to validate the analytical predictions derived from the linearized system, and second, to illustrate the invasion dynamics.

\subsection{Parameter estimation and scaling analysis}\label{parameters}

A key prerequisite for the numerical simulations is a consistent estimation of the
dimensionless parameters. Since precise quantitative measurements of many biological
coefficients are not available, our goal is not to assign exact numerical values, but to
identify biologically consistent scaling regimes that capture the dominant mechanisms of
the system.

The model parameters arise from the interplay between different biological processes.
By comparing characteristic timescales and transport mechanisms, we estimate their
orders of magnitude and identify meaningful parameter ranges.

\paragraph{Time scales.}
The reference timescale is given by the turnover rate of healthy tissue, $\mu_H^{-1}$.
Experimental studies show that muscle injury triggers inflammatory responses that persist
for several days to weeks, while signaling processes mediated by cytokines and chemokines
occur on much shorter timescales \cite{has02, macrofagi2004}.

This leads to the hierarchy in which healthy tissue evolves on a timescale of weeks,
macrophages act on timescales of days, and chemokines are produced and degraded over
hours. Accordingly, the dimensionless rates satisfy
\[
\mu = \frac{\mu_C}{\mu_H} \sim O(10^2 - 10^3), \qquad
\nu = \frac{\mu_M}{\mu_H} \sim O(10), \qquad
\rho = \frac{r_H}{\mu_H} \sim O(1).
\]

\paragraph{Diffusion scales.}
Transport processes in muscle tissue exhibit a clear separation between cellular migration
and molecular diffusion. Immune cells migrate relatively slowly, while chemokines diffuse
much faster through the extracellular environment. This yields
\[
D_d \sim 1, \qquad D_m \sim 5-20, \qquad D_c \sim 10^2 - 10^3,
\]
which should be interpreted as effective diffusion coefficients at the tissue scale.

\paragraph{Space competition.}
The parameter $\sigma = \frac{s}{K \mu_H}$ measures the intrinsic growth rate of healthy
tissue relative to its carrying capacity. In physiological conditions, $\sigma$ is typically
of order one or smaller, indicating that space limitation plays a relevant role in regulating
tissue dynamics.

\paragraph{Damage signaling and chemotaxis.}
The parameter $r = \frac{r_C K}{k_c \mu_H}$ quantifies the strength of chemokine production
induced by damage. When $r \ll 1$, signaling is too weak to sustain immune recruitment and
the system relaxes to the healthy state. Conversely, for $r \gg 1$, strong signaling promotes
persistent immune activation and damage amplification. In the simulations, we fix $r=1$, corresponding to a balanced regime in which damage-induced chemokine production is neither negligible nor dominant. This choice
allows us to focus on the interplay between damage amplification and repair without introducing an additional bias due to extreme signaling regimes.

The chemotactic sensitivity $\chi_0 = \frac{\bar{\chi} k_c}{D_H}$ controls the spatial
organization of the immune response: small values lead to diffuse inflammation, whereas
larger values produce directed migration and sharper invasion fronts.

These considerations indicate that the transition from recovery to chronic degeneration
is governed by the balance between damage amplification and tissue repair.

\paragraph{Interpretation of the computational domain.}
The spatial variable is nondimensionalized using the diffusion length of healthy tissue,
\[
\ell_H = \sqrt{\frac{D_H}{\mu_H}},
\]
so that a dimensionless domain of size $L$ corresponds to a physical length
\[
L_{\mathrm{phys}} = L\,\ell_H.
\]

The computational domain should not be interpreted as the entire musculature of the body,
but rather as an idealized portion of muscle tissue. In one spatial dimension this corresponds
to a segment of tissue.
For the numerical validation of the propagation speed, the domain must be sufficiently large
compared with the intrinsic width of the invasion front. Large values of $L$ are therefore used
to minimize boundary effects and accurately capture the asymptotic front dynamics.

\paragraph{Choice of parameter ranges.}
Based on the above scaling arguments, we distinguish between different classes of parameters.
The diffusion ratios and the timescale ratios are constrained by the biological hierarchy
discussed above, whereas parameters such as $r$, $\chi_0$, $\kappa$, and $\alpha$ should be
regarded as effective control parameters of the model. For these quantities, we do not aim at
precise biological calibration, but rather at the identification of plausible exploratory
ranges consistent with the qualitative mechanisms under investigation.

In particular, the values of $\mu$, $\nu$, $\rho$ and of the diffusion ratios reflect the
separation between tissue turnover, immune-cell dynamics and chemokine signaling suggested
by the biological literature \cite{macrofagi2004}. By contrast, the ranges chosen
for $r$, $\chi_0$, $\kappa$, and $\alpha$ are intended to explore different signaling and
chemotactic regimes within a biologically plausible setting.

The parameter $c_\epsilon$ models a small basal inflammatory activity. In the simulations we
fix $c_\epsilon=0.1$, consistently with the assumption that low-level inflammatory signaling
may be present even in the absence of substantial damage, while strong immune recruitment is
primarily induced by damage-dependent chemokine production.

The parameter
\[
\alpha = \frac{\bar a\, r_M K}{\mu_H \mu_M}
\]
measures the effective strength of immune-mediated damage. Unlike the timescale and diffusion
ratios, $\alpha$ is not directly calibrated from biological data and should be interpreted as
an effective control parameter.

For the reference parameter set used in the simulations, the invasion threshold given by
\eqref{thresh_inv} yields $\alpha_c \approx 31$. The simulations are therefore performed
for values of $\alpha$ both below and above this threshold, typically in the range
$\alpha \in [15,60]$, in order to capture the transition between decay and invasion.

We summarize below the parameter ranges adopted in the simulations (see Table~\ref{tab:parameters}).

\begin{table}[h!]
	\centering
	\begin{tabular}{c c c}
		\hline
		\textbf{Parameter} & \textbf{Range} & \textbf{Role} \\
		\hline
		$D_d$ & $0.5 - 1$ & effective tissue-scale diffusion \\
		$D_m$ & $5 - 20$ & immune-cell motility \\
		$D_c$ & $10^2 - 10^3$ & fast chemokine diffusion \\
		$\delta$ & $0.5 - 5$ & damage removal timescale \\
		$\nu$ & $5 - 20$ & macrophage timescale ratio \\
		$\mu$ & $10^2 - 10^3$ & chemokine timescale ratio \\
		$\sigma$ & $10^{-1} - 1$ & space competition \\
		$\rho$ & $0.1 - 3$ & repair timescale ratio \\
		\hline
		$r$ & $0.1 - 10$ & damage-induced signaling strength \\
		$\chi_0$ & $1 - 10$ & chemotactic sensitivity \\
		$\kappa$ & $0.1 - 1$ & chemotactic saturation scale ratio \\
		$\alpha$ & $15 - 60$ & immune-mediated damage intensity (explored around threshold) \\
		\hline
		$c_\varepsilon$ & $0.1$ & basal inflammatory activity \\
		\hline
	\end{tabular}
\caption{Biologically motivated and exploratory ranges of the dimensionless parameters used in the numerical simulations. The first block contains parameters constrained by the timescale and transport hierarchy discussed in the text, the second block contains effective control parameters explored around the invasion threshold and $c_\varepsilon$ represents a fixed basal inflammatory activity.}
	\label{tab:parameters}
\end{table}

\subsection{Numerical validation of invasion dynamics}
We validate the analytical results obtained from the linearized system by means of numerical
simulations of the full nonlinear model.

We consider initial conditions consisting of small localized perturbations of the healthy
equilibrium, corresponding to the early-stage regime in which invasion is initiated.

The simulations reveal a sharp transition between decay and invasion. When the threshold
condition is not satisfied ($R_d<1$), perturbations decay and the system returns to the
healthy equilibrium. Above the threshold, damage spreads across the domain in the form of
a traveling front.

A quantitative comparison between analytical and numerical front speeds shows very good
agreement over the parameter ranges explored here. This indicates that, in the early-stage
regime, the invasion dynamics is well captured by the linearized system and that the front
speed is accurately described by the linear spreading speed.

These results support the validity of the linear analysis for the onset of spatial spreading
and confirm the relevance of the pulled-front picture in the early stage of disease progression.
 
\subsubsection{Validation of the invasion threshold}

We first validate the analytical prediction of the invasion threshold derived in Section~5.
To this end, we consider initial conditions consisting of small localized perturbations of
the healthy equilibrium.

Figure~\ref{1Dinvasion} shows the temporal evolution of the system for values of $\alpha$ below, near and above the invasion threshold.

Below the threshold, the initial perturbation decays and the system returns to the healthy
equilibrium, confirming the stability of the healthy state with respect to small perturbations.
Just above the threshold, we observe the onset of invasion: the damaged tissue grows and
spreads across the domain, while the healthy tissue decreases. For larger values of $\alpha$,
the invasion becomes faster and more pronounced.

These results confirm the existence of a sharp transition between decay and invasion,
in agreement with the analytical threshold condition derived from the linearized system.
The parameter values used in the simulations are reported in the caption of
Figure~\ref{1Dinvasion}.
\begin{figure}[h!]
\centering
\includegraphics[width=0.22\linewidth]{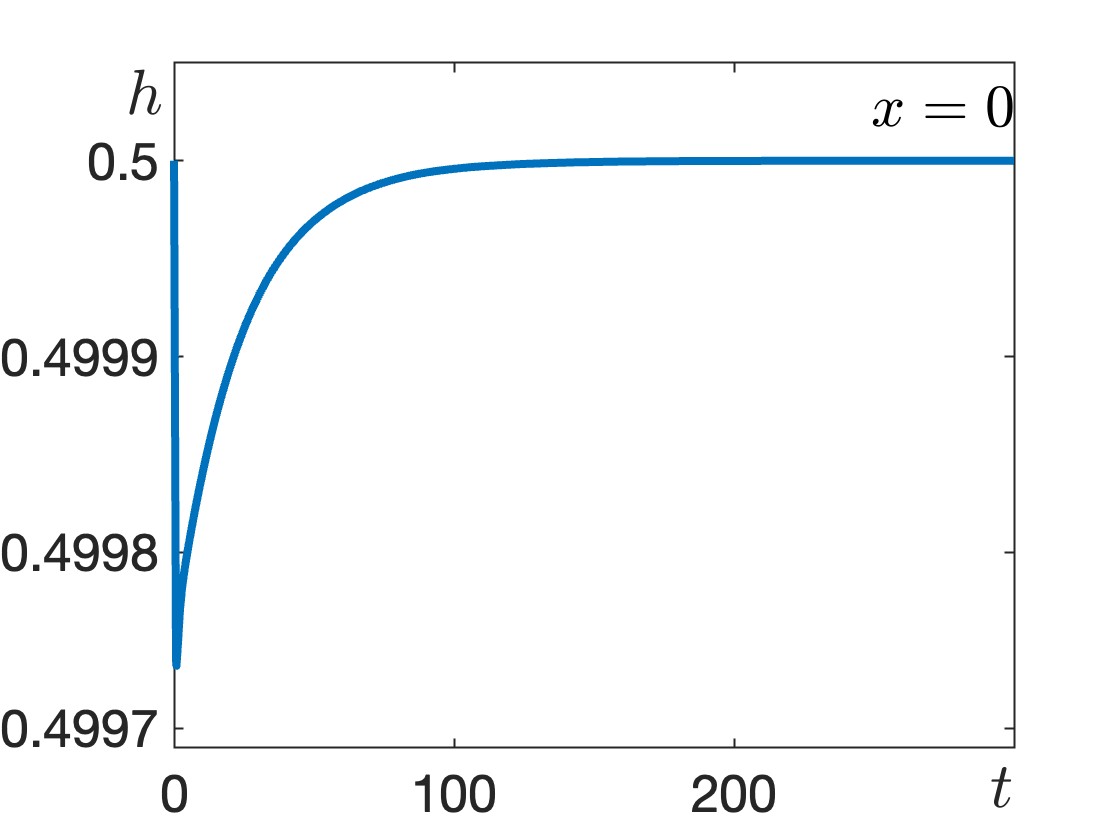}
\includegraphics[width=0.22\linewidth]{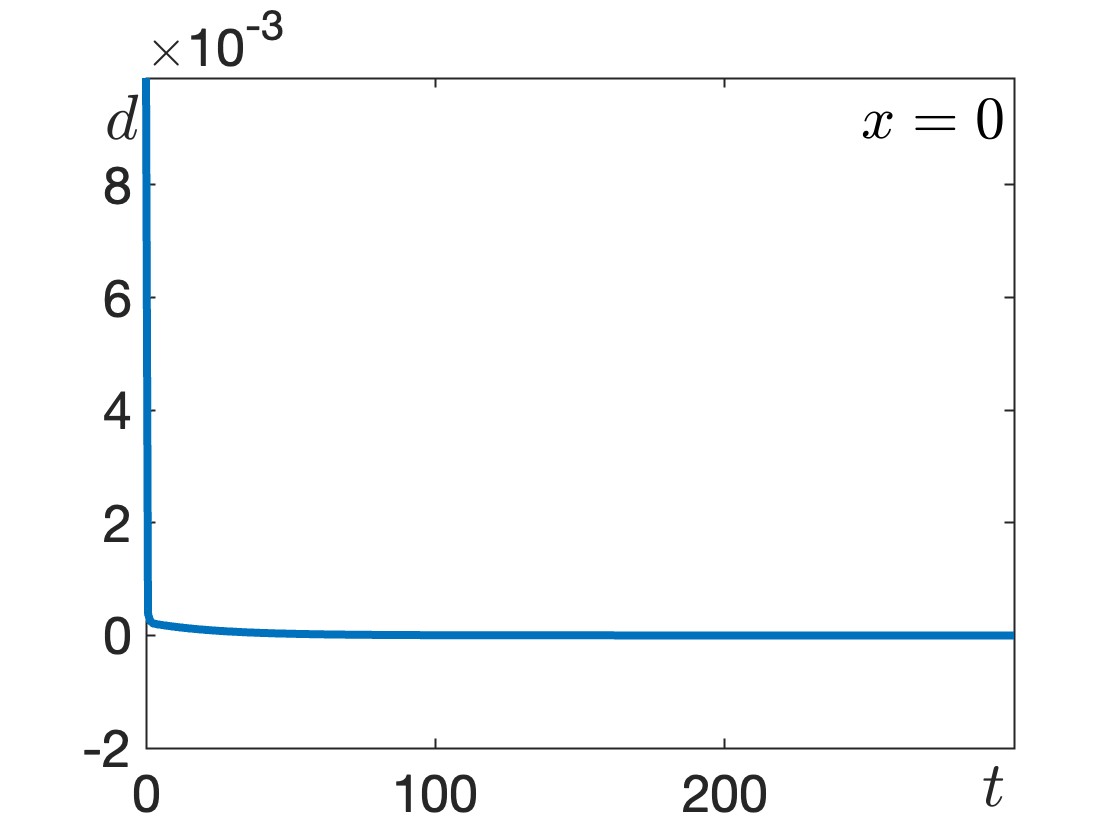}
\includegraphics[width=0.22\linewidth]{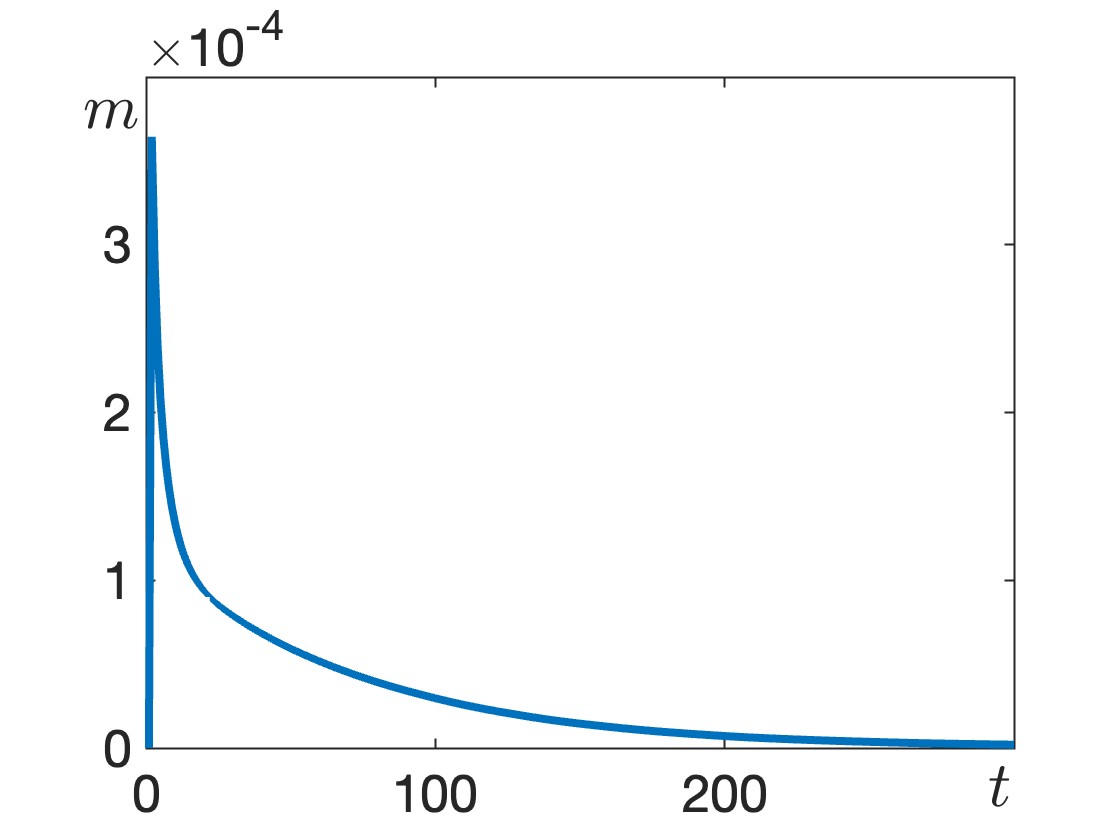}
\includegraphics[width=0.22\linewidth]{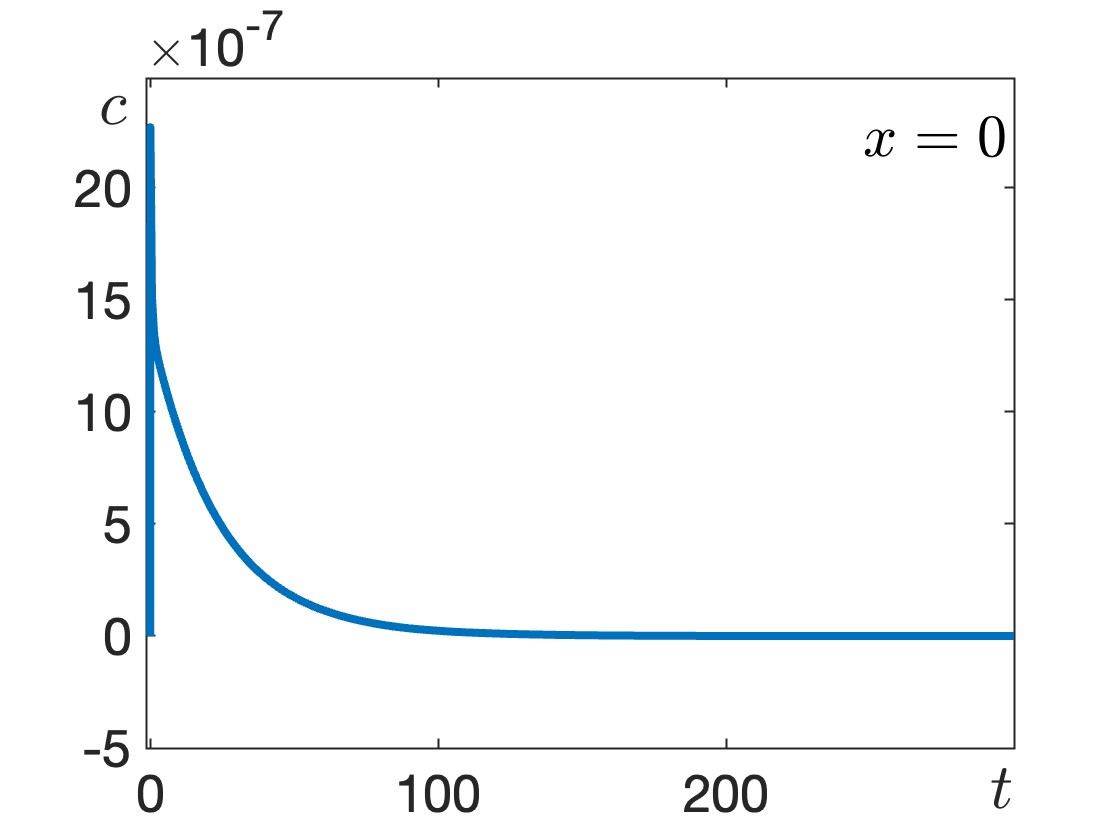}\\
\includegraphics[width=0.22\linewidth]{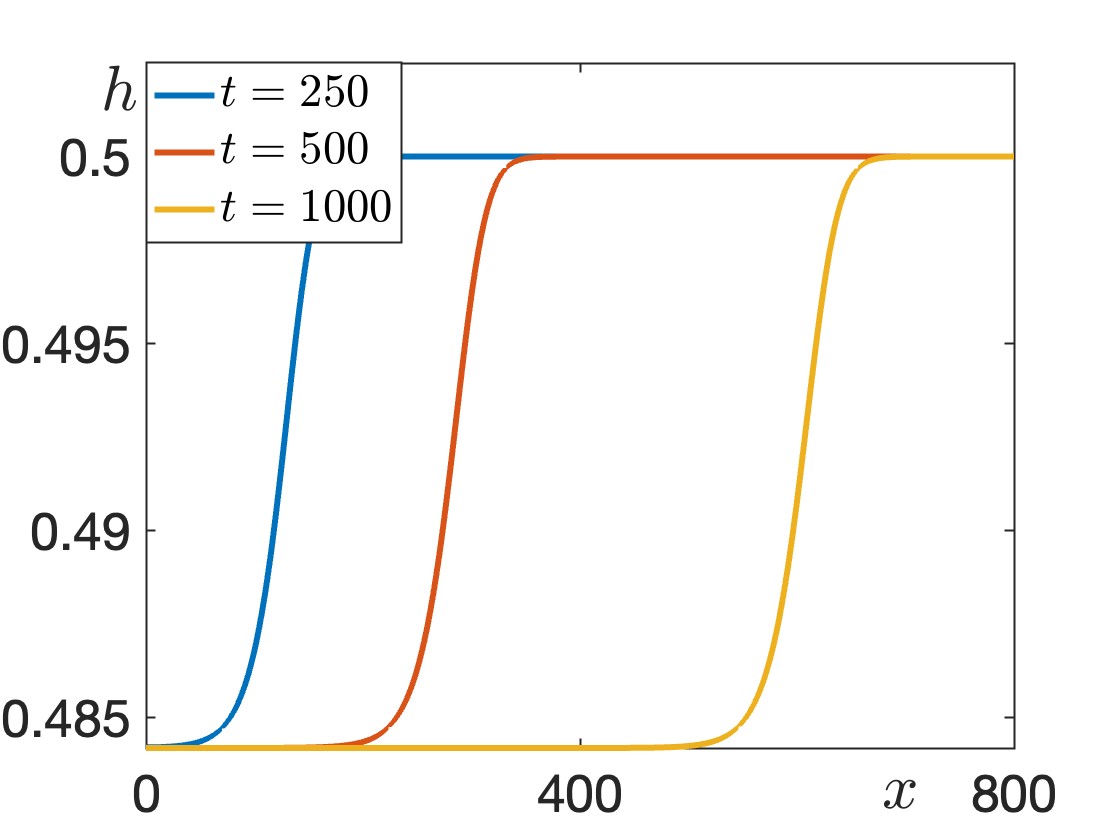}
\includegraphics[width=0.22\linewidth]{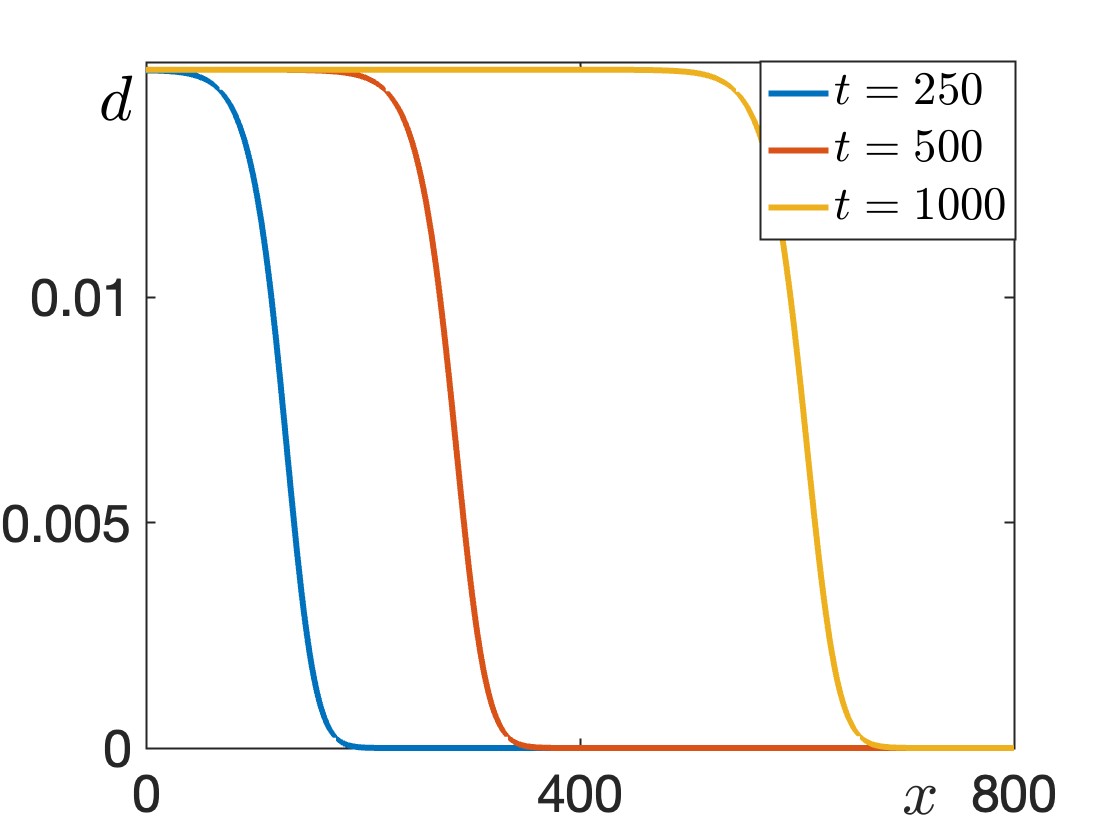}
\includegraphics[width=0.22\linewidth]{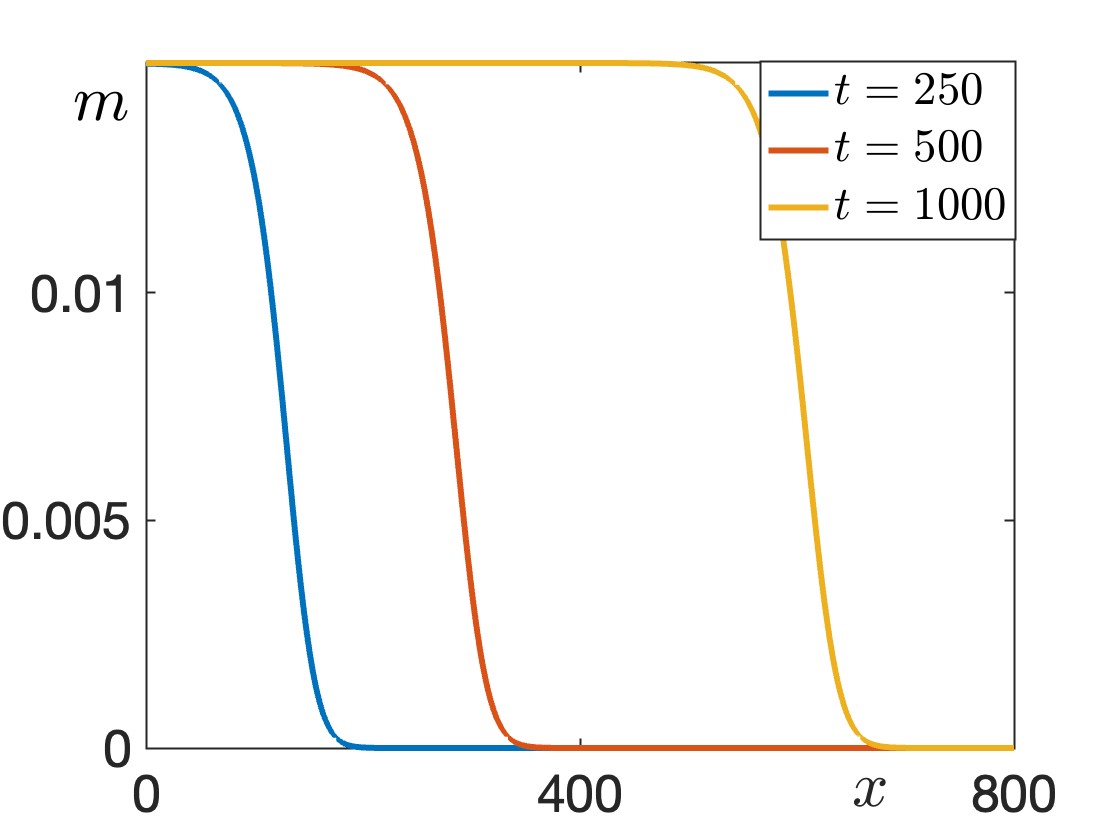}
\includegraphics[width=0.22\linewidth]{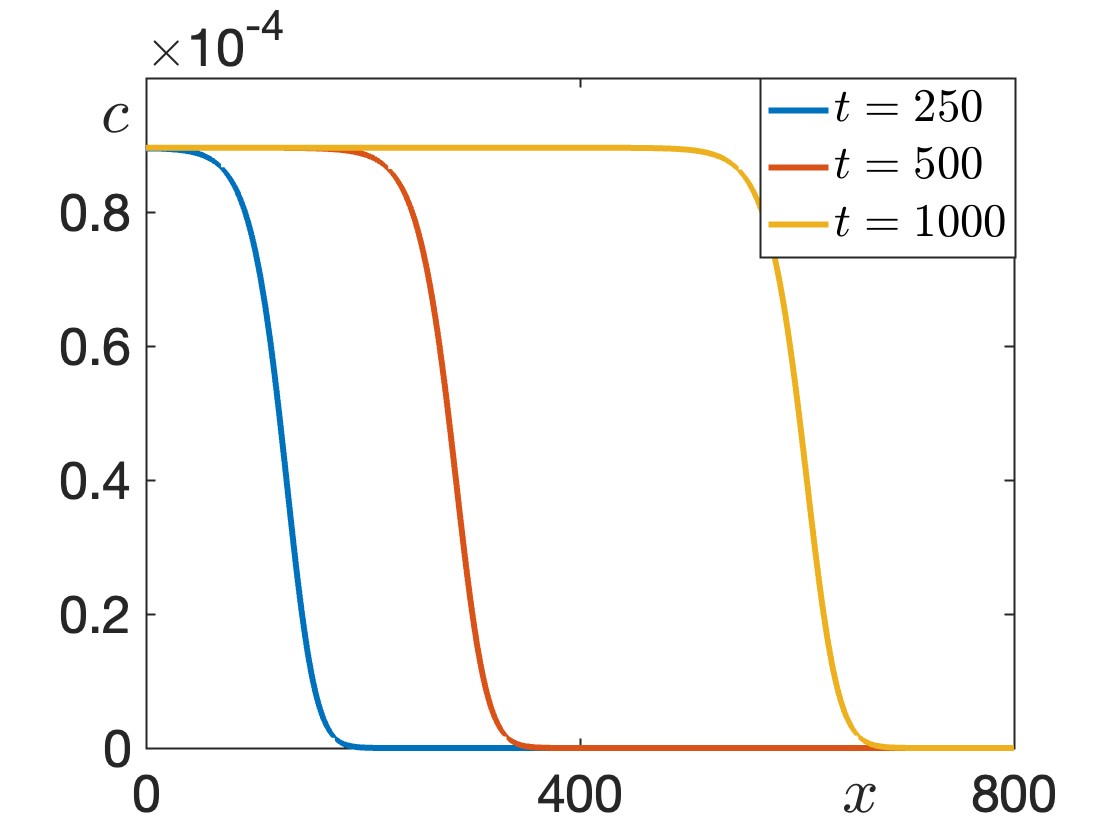}
\includegraphics[width=0.22\linewidth]{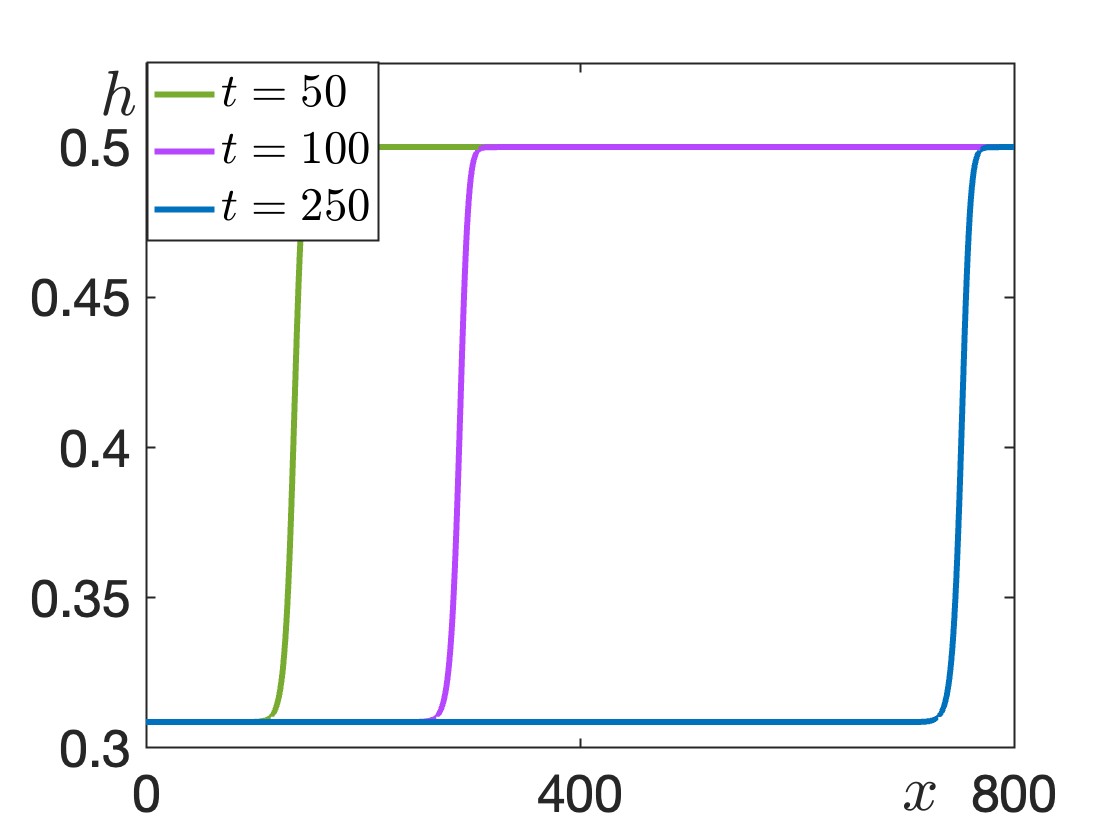}
\includegraphics[width=0.22\linewidth]{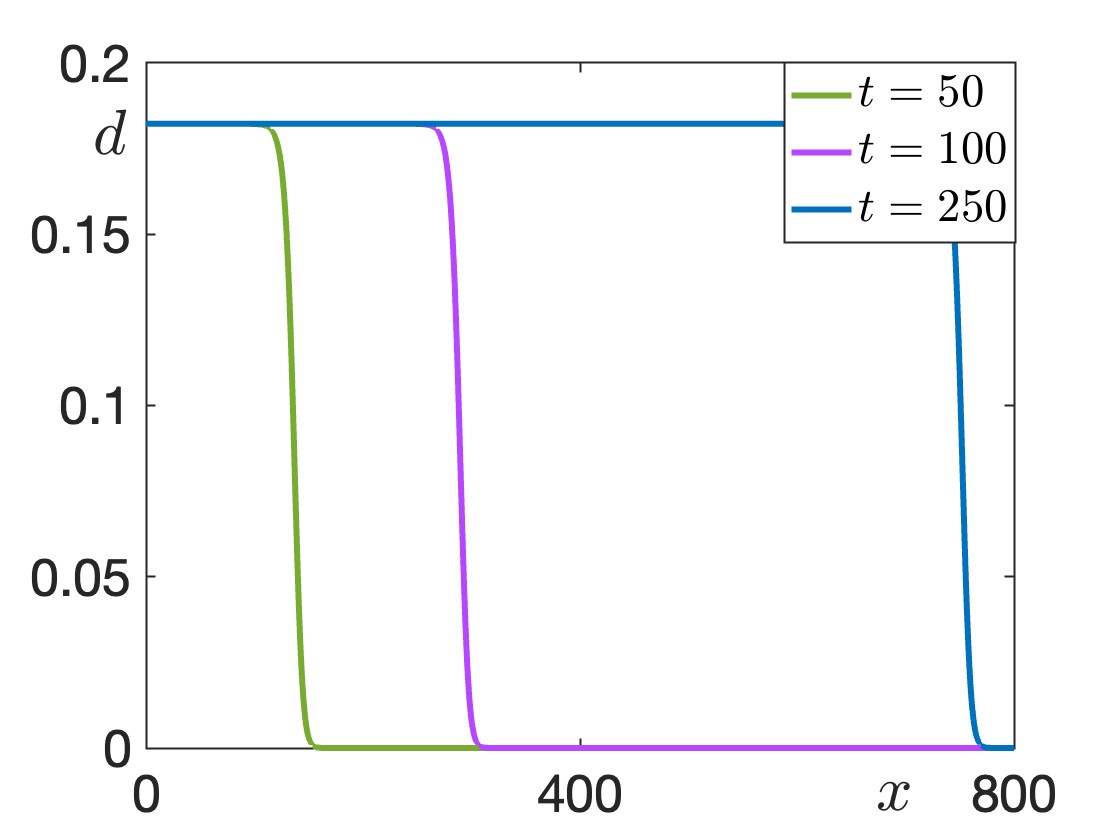}
\includegraphics[width=0.22\linewidth]{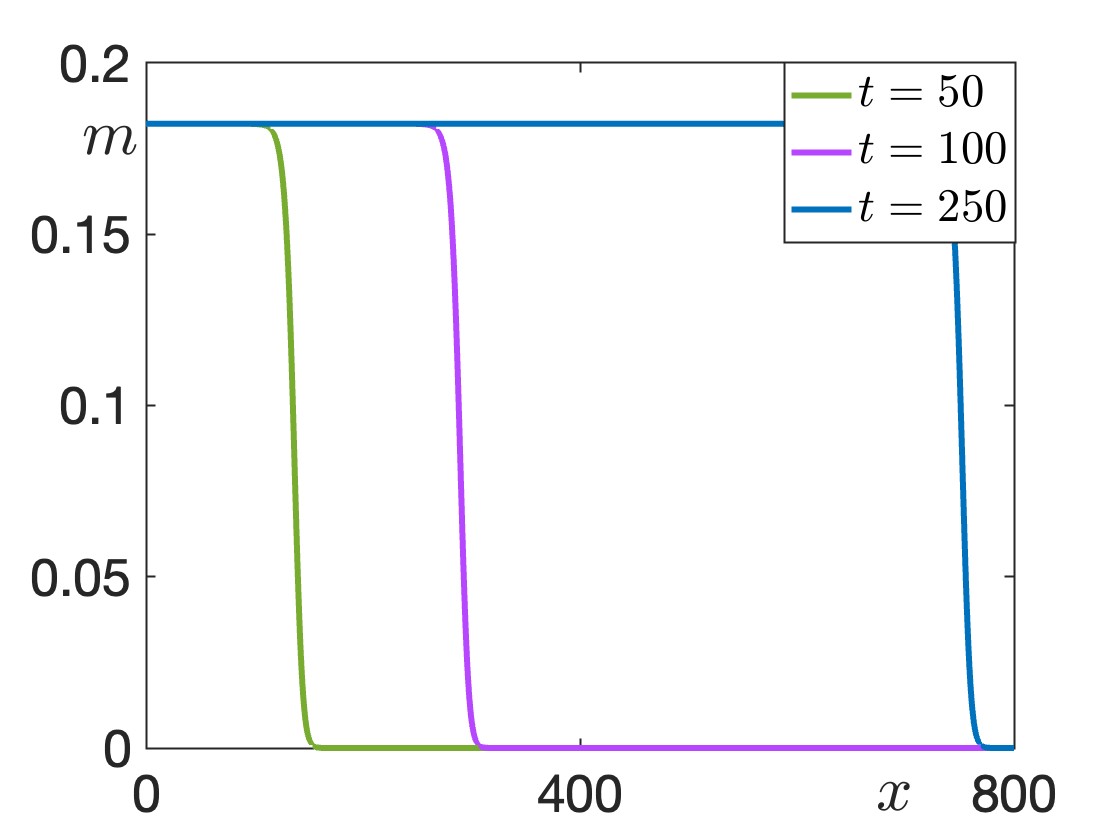}
\includegraphics[width=0.22\linewidth]{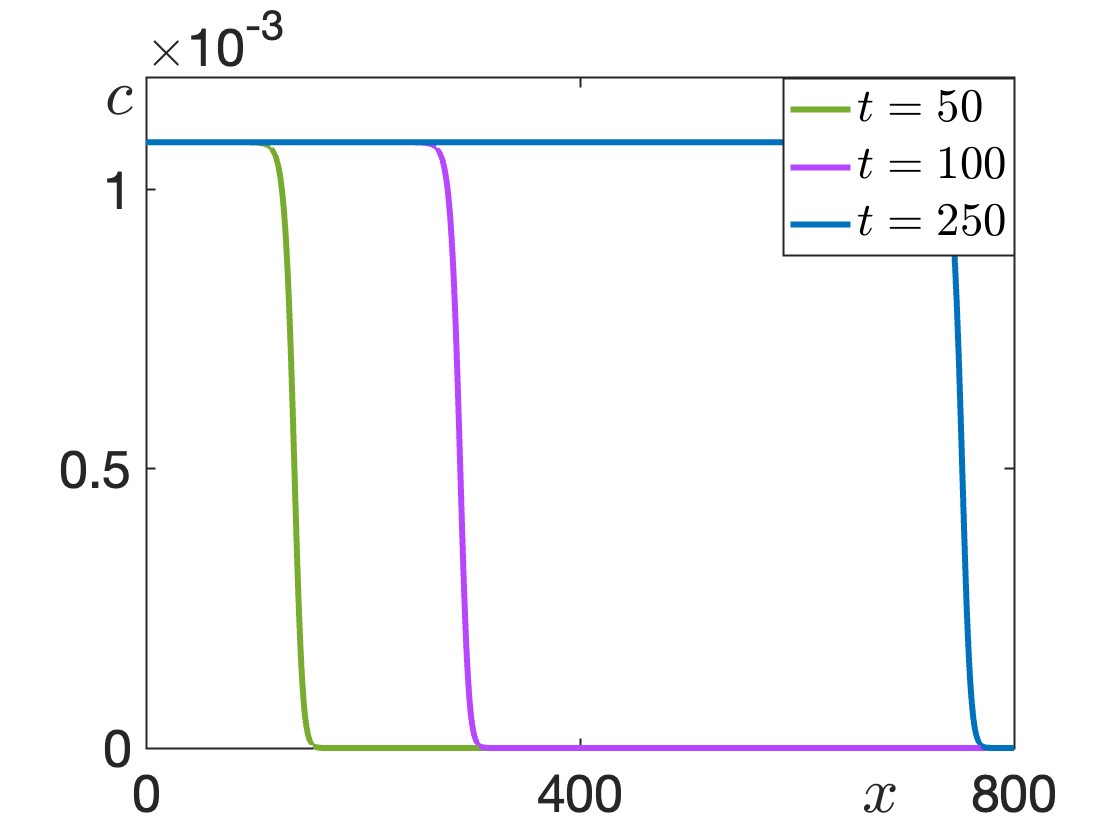}
\caption{Temporal and spatial dynamics of the system for different values of $\alpha$, illustrating the transition across the invasion threshold and the emergence of traveling-wave propagation. The three rows correspond, respectively, to the below-threshold ($\alpha=30$), near-threshold ($\alpha=32$) and strong invasion ($\alpha=50$) regimes.
In the first row, the four panels (from left to right) represent the temporal evolution at $x=0$ of healthy tissue ($h$), damaged tissue ($d$), macrophages ($m$) and chemokines ($c$). In this regime, the perturbation decays and the system returns to the healthy equilibrium.
In the second and third rows, the four panels show the spatial profiles of $h$, $d$, $m$, and $c$ at successive times, highlighting the formation and propagation of traveling waves connecting the healthy state to the diseased state. Near the threshold ($\alpha=32$), invasion is initiated but progresses slowly, while for larger values of $\alpha$ ($\alpha=50$) the invasion front propagates faster and becomes more pronounced.
Simulations are performed with parameters $\sigma=1$, $\rho=0.9$, $c_\epsilon=0.1$, $\delta=1.1$, $\nu=7$, $k=r=1$, $\mu=168$, $D_d=1$, $D_m=10$, $D_c=1000$, and $\chi_0=5$. For this choice of parameters, the analytical threshold predicted by the linearized system is $\alpha_c \approx 31$. Initial conditions consist of a small localized perturbation of the healthy equilibrium.
}\label{1Dinvasion}
\end{figure}

\subsubsection{Validation of the propagation speed}

We now validate the analytical prediction of the propagation speed.

In the left panel of Figure~\ref{fig:confronto2}, we compare the theoretical dispersion
relation $s(\gamma)$ with numerical estimates of the front speed obtained from initial
conditions of the form $e^{-\gamma x}$.

For values of $\gamma$ smaller than the critical decay rate $\gamma^*$ corresponding to the
minimum of the dispersion curve, the numerically observed speed is larger than the linear
prediction. In contrast, for $\gamma \geq \gamma^*$, the numerical speed converges to the
minimal value $s^*$ predicted by the linear theory.

This behavior is characteristic of pulled front dynamics, where the asymptotic propagation speed is
selected by the slowest decaying mode at the leading edge.

\begin{figure}[h!]
\centering
\includegraphics[width=0.45\linewidth]{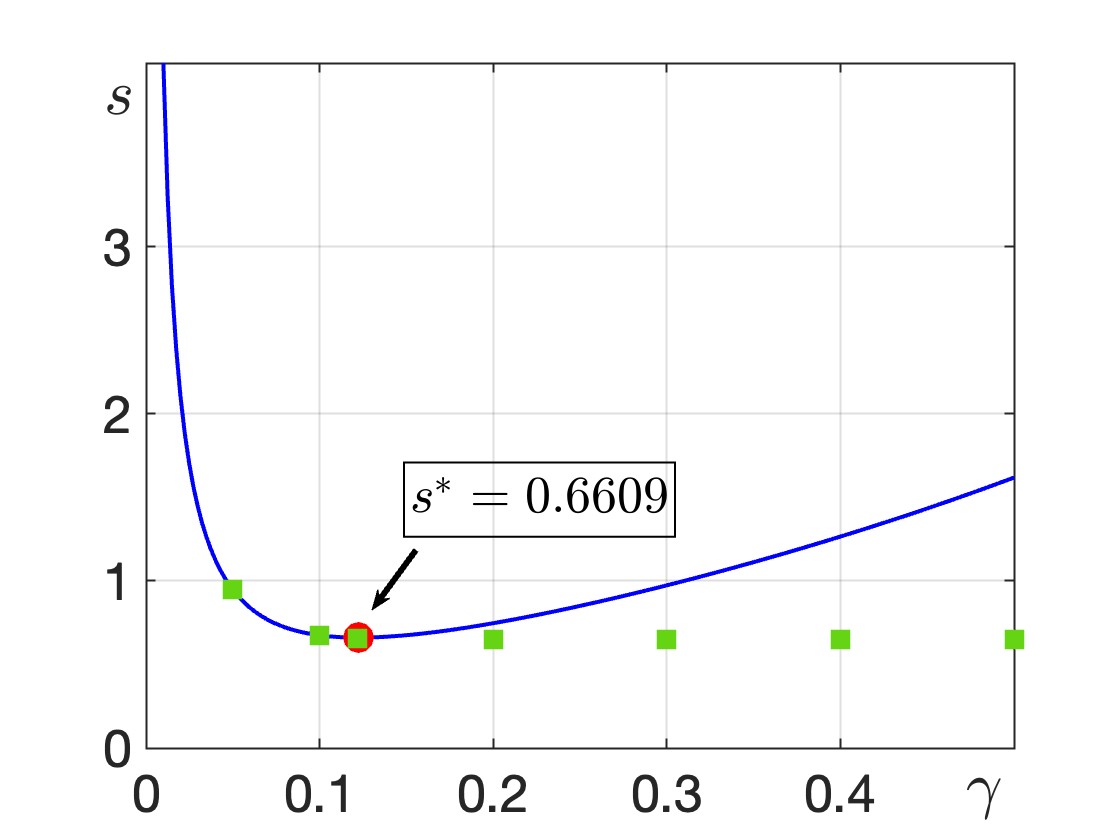}
\includegraphics[width=0.45\linewidth]{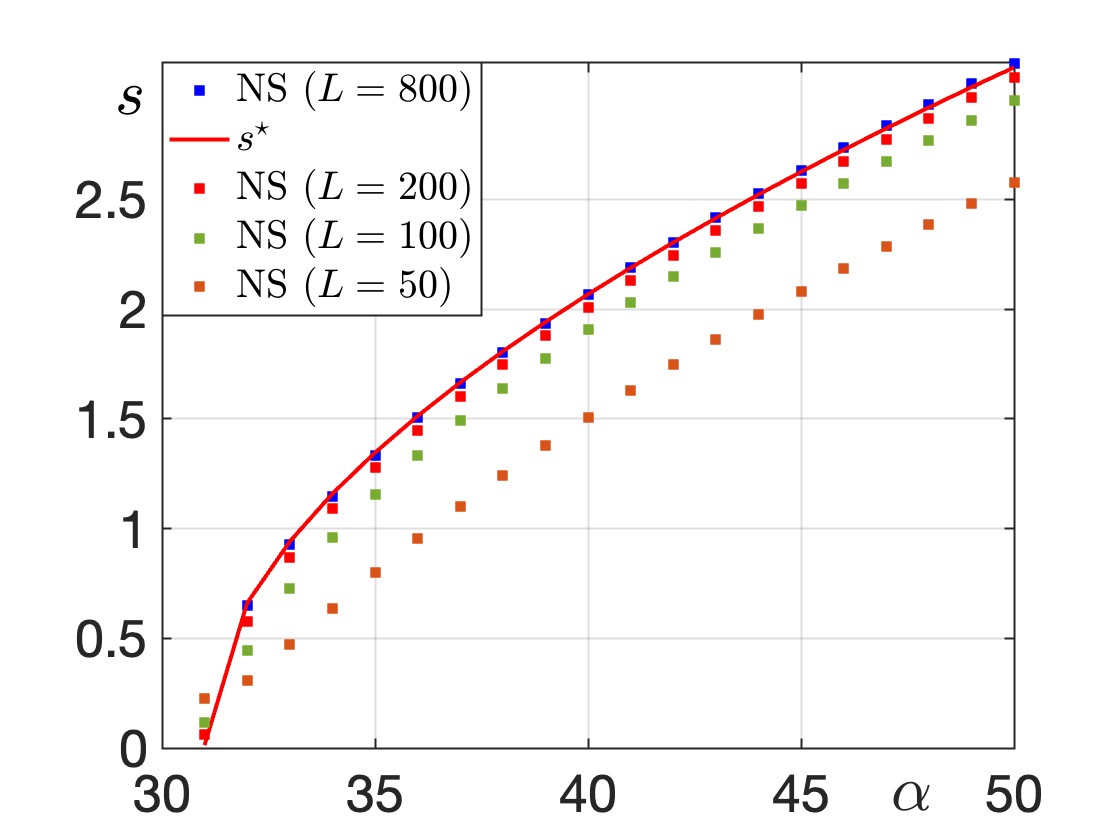}
\caption{{\it Left}: Comparison between the analytical dispersion relation $s(\gamma)$ and numerical estimates of the front speed obtained from exponential initial conditions of the form $e^{-\gamma x}$. The minimal speed $s^*$ corresponds to the minimum of the dispersion curve and is selected for $\gamma \geq \gamma^*$.
{\it Right}: Comparison between analytical and numerical front speeds for Gaussian initial conditions, for different domain sizes $L$. The agreement improves as $L$ increases, indicating that discrepancies observed for small domains are due to finite-size effects. The remaining parameters are as in Fig.~\ref{1Dinvasion}, with $\alpha=32$ in the left panel.}
\label{fig:confronto2}
\end{figure}

The right panel of Figure~\ref{fig:confronto2} indicates that, for Gaussian initial conditions,
the agreement between analytical and numerical speeds improves as the domain size $L$
increases. For sufficiently large domains, the numerical speed converges to the analytical
prediction, confirming that deviations observed in smaller domains are due to finite-size effects.

We next examine the dependence of the propagation speed on the model parameters.

\begin{figure}[h!]
\centering
\includegraphics[width=0.3\linewidth]{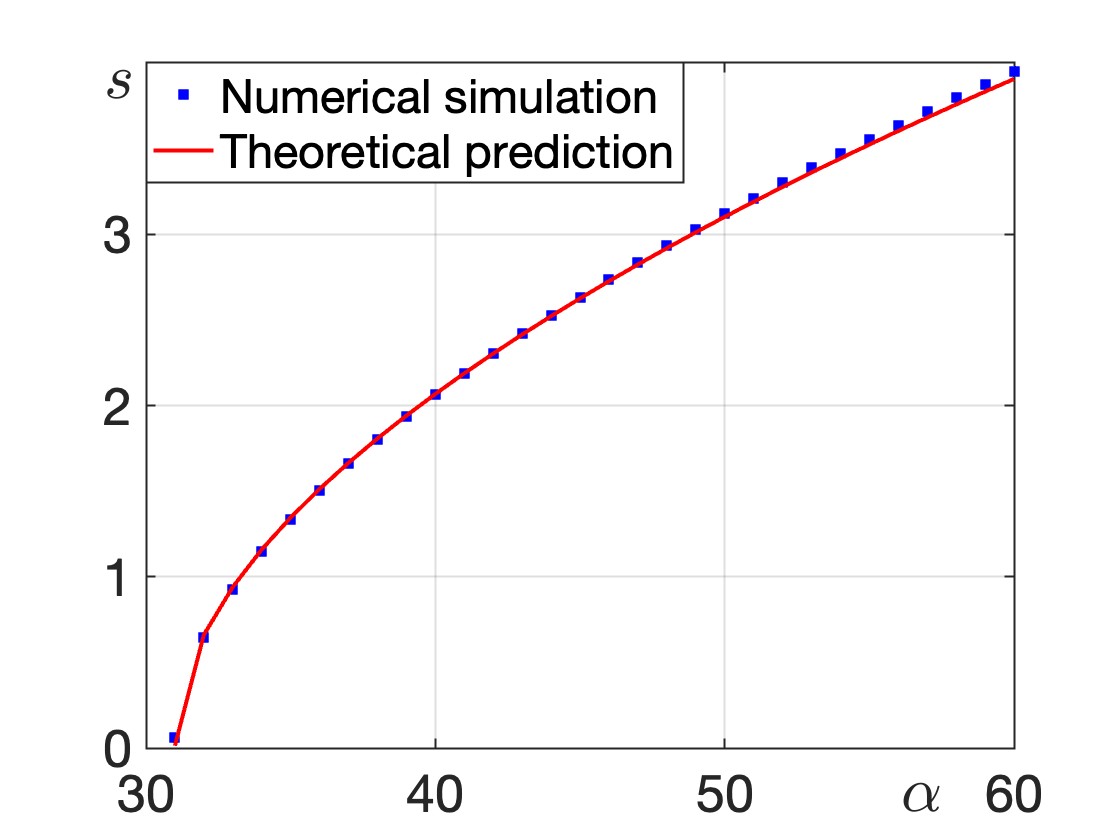}
\includegraphics[width=0.3\linewidth]{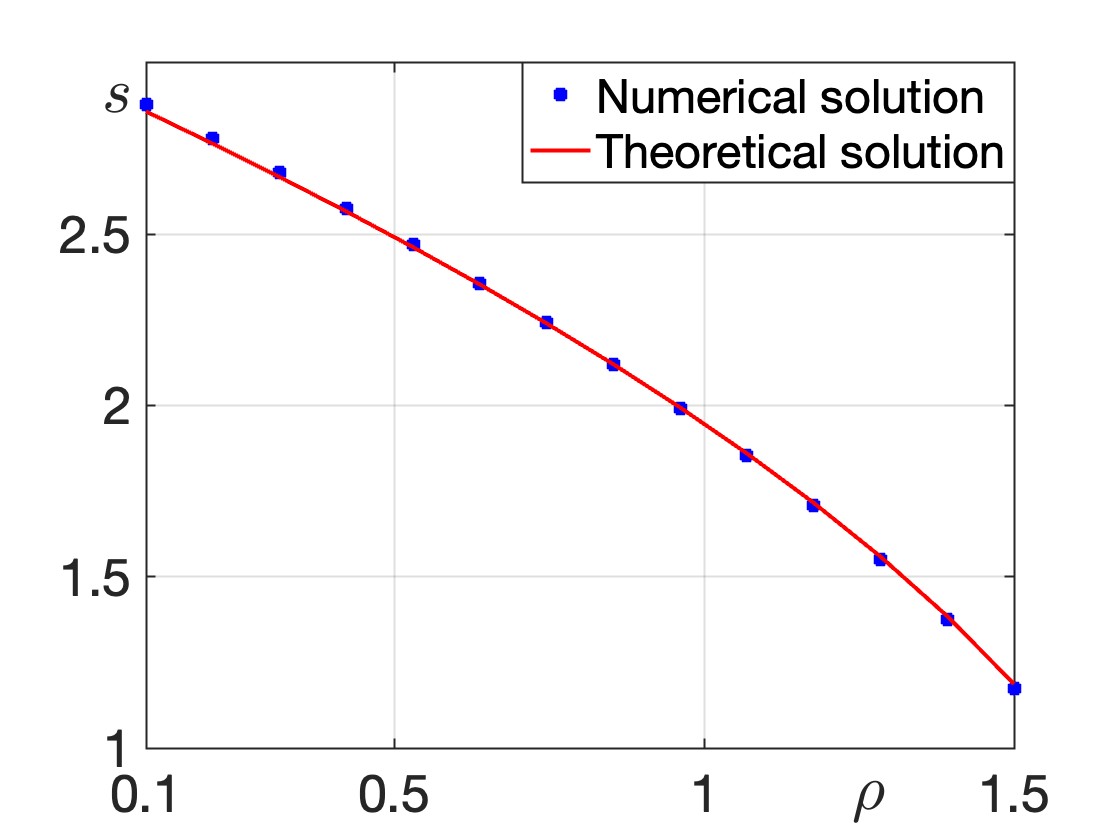}
\includegraphics[width=0.3\linewidth]{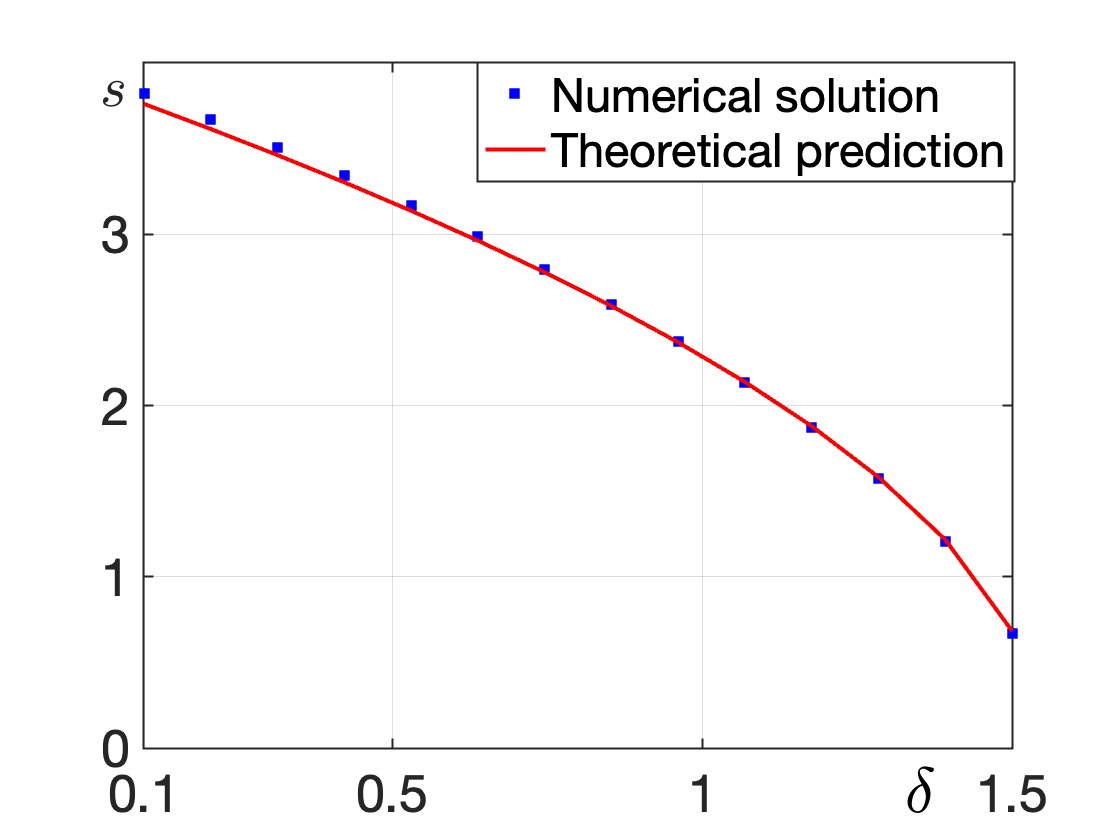}
\caption{Comparison between the analytical wave speed predicted by the linearized system and the numerical front velocity.
{\it Left}: propagation speed as a function of $\alpha$, with $\rho=0.9, \delta=1.1$. Increasing	$\alpha$ leads to faster invasion.
{\it Middle}: propagation speed as a function of $\rho$, with $\alpha=40, \delta=1.1$. Increasing $\rho$ slows down the propagation.
{\it Right}: propagation speed as a function of $\delta$, with $\alpha=40, \rho=0.9$. Increasing $\delta$ slows down the propagation.
The remaining parameters are set to $c_\epsilon=0.1$,  $r=\kappa=\sigma=1$, $\nu=7$, $\mu=168$, $D_d=1$, $D_m=10$, $D_c=100$, and $\chi_0=5$, with domain size $L=800$.}
\label{fig:confronto}
\end{figure}

An increase in $\alpha$, which quantifies the strength of immune-mediated damage,
leads to a faster propagation of the damage front. Biologically, this corresponds
to a more aggressive inflammatory response, which amplifies tissue degradation
and accelerates disease spreading.

Conversely, increasing $\rho$, which represents the efficiency of damage-induced
repair, reduces the propagation speed. This reflects the stabilizing role of
regeneration mechanisms, which counteract the accumulation of damage and slow
down its spatial expansion.

A similar effect is observed when increasing $\delta$, which corresponds to a
faster removal of damaged tissue. Larger values of $\delta$ reduce the persistence
of damage and therefore hinder its spatial propagation, leading to a decrease in
the front speed. Overall, these results highlight the competition between damage amplification and damage removal (through repair and clearance) as the key mechanism controlling the rate of disease progression.

The agreement between analytical and numerical speeds remains remarkably accurate across
the explored parameter range, showing that the dependence of the invasion speed on $\alpha$
and $\rho$ is already well captured by the linearized dynamics in the early stage of disease
progression.

\section{Discussion and Conclusions}

In this work, we developed and analyzed a spatially extended mathematical model for DMD driven by immune-mediated tissue damage. Starting from existing ODE-based formulations, we introduced a reaction system incorporating tissue capacity constraints, damage-induced regeneration and saturating immune activation. These mechanisms capture essential features of muscle repair and inflammatory regulation and provide a consistent basis for spatial modeling.

After extending the model to a reaction--diffusion--chemotaxis framework, we established fundamental analytical properties, including positivity, boundedness and positive invariance of a biologically admissible domain. These results ensure the well-posedness of the system and the physical relevance of its solutions. We also characterized the steady states of the nondimensionalized model, identifying biologically meaningful equilibria corresponding to healthy and pathological conditions.

A central aspect of the present study is the analysis of early-stage disease dynamics through linearization around the healthy equilibrium. Our results show that diffusion does not induce Turing instabilities, indicating that spatial heterogeneity cannot arise from classical diffusion-driven mechanisms. Instead, pathological progression occurs through invasion processes, in which localized damage spreads across the tissue. This suggests that the heterogeneous patterns observed in imaging studies are not generated by intrinsic pattern-forming mechanisms \cite{Se20,So26}, but rather by the spatial expansion of localized regions of damage.

We derived explicit conditions for the onset of invasion and characterized the minimal propagation speed of pathological fronts. These results provide a quantitative description of early disease spreading and show that the invasion process is governed by a pulled-front mechanism controlled by the linearized dynamics. The existence of an invasion threshold provides a possible mechanistic interpretation of the transition from compensated muscle function to progressive degeneration observed clinically. Numerical simulations of the full nonlinear system support these findings and confirm the predicted transition between decay and invasion, as well as the dependence of the propagation speed on key model parameters. In particular, the dependence of the front speed on parameters associated with immune-mediated damage and tissue repair suggests a direct link between inflammatory activity and the rate of disease progression.

The present work also highlights several limitations and open problems. The existence and stability properties of the diseased equilibrium remain difficult to characterize in full generality, due to the nonlinear coupling and the large number of parameters. Nevertheless, in a parameter regime consistent with the early-stage scenario considered here, the pathological equilibrium can be partially characterized and its existence is directly linked to the invasion threshold. A more detailed bifurcation analysis with respect to biologically relevant parameters could provide deeper insight into long-term disease outcomes. Extending the analysis beyond the early-stage regime, particularly in the vicinity of pathological equilibria, is a natural next step. Numerical evidence indicates the presence of bistable regimes in which healthy and pathological states coexist, and a deeper examination of the associated stability and bifurcation structure could further clarify the mechanisms governing long-term disease progression.

The role of chemotaxis also deserves further consideration. Although it does not significantly influence the early-stage invasion dynamics analyzed here, its inclusion provides a structurally consistent framework for exploring regimes with stronger inflammatory activity. In such regimes, directed immune migration may become more relevant and contribute to the emergence of spatially heterogeneous patterns, potentially related to localized infiltration and fatty replacement observed in magnetic resonance imaging of DMD patients \cite{Dowling23}. More generally, the present results indicate that early-stage spatial degeneration is primarily driven by local damage--inflammation feedback, rather than by directed immune cell migration.

While diffusion does not induce Turing instabilities around the healthy equilibrium, this does not exclude the possibility that diffusion-driven pattern formation may arise in the vicinity of pathological states. In regimes characterized by sustained inflammation and higher levels of damage, the interplay between nonlinear reaction terms, diffusion and chemotaxis may lead to the emergence of spatially heterogeneous structures. Such patterns could provide a possible mechanistic explanation for the localized infiltration and fatty replacement observed in magnetic resonance imaging of DMD patients. Identifying the corresponding parameter ranges and dynamical regimes remains an open problem.

Moreover, the present model does not explicitly account for spatial heterogeneity in tissue properties, vascularization or mechanical constraints, which are known to influence muscle degeneration. Incorporating these effects may lead to more realistic descriptions of disease progression.

From a biological perspective, the model provides a simplified representation of complex immune and repair processes. Extensions including additional immune cell populations, signaling pathways or feedback mechanisms may improve biological realism, at the cost of increased mathematical complexity.

Finally, the proposed framework can be adapted to other chronic inflammatory or degenerative diseases characterized by damage-driven immune responses, providing a flexible basis for investigating spatial disease dynamics beyond DMD.

In conclusion, we introduced a mathematically tractable spatial model capturing key features of immune-mediated muscle degeneration. By combining analytical techniques and numerical simulations, we provided new insight into early disease spreading and invasion phenomena, highlighting the role of damage-driven mechanisms in shaping spatial disease dynamics. These results offer a quantitative framework for understanding the onset of spatial degeneration and clarify the mechanisms underlying early-stage disease progression.
\section*{Acknowledgements}
 This work has been partially supported 
by the European Union – Next Generation EU: PRIN project
MUR PrinPNRR 2022, project code 
P202254HT8, Grant No. CUP B53D23027760001; by the PRIN project MUR PrinPNRR 2022, project code P2022Z7ZAJ, Grant No. CUP B53D23027930001 and by the project
the PRIN project MIUR Prin 2022, project code
	1074 2022M9BKBC, Grant No. CUP B53D23009350006.
The authors also gratefully acknowledge the financial support of GNFM-INdAM.
\bibliographystyle{plain}
\bibliography{distrofia_bib}
\end{document}